\definecolor{aleacolor}{rgb}{0.16,0.59,0.78}
\renewcommand{\cite}{\citet}
\theoremstyle{plain}
\newtheorem{theorem}{Theorem}[section]                                          
\newtheorem{lemma}[theorem]{Lemma}
\newtheorem{corollary}[theorem]{Corollary}
\newtheorem{condition}{Hypothesis}
\theoremstyle{definition}
\newtheorem{definition}[theorem]{Definition}
\theoremstyle{remark}
\newtheorem{remark}{Remark}[section]
\makeatletter \@addtoreset{equation}{section} \makeatother
\newcommand{\mlt}[1]{\left\lceil#1\right\rceil}
\begin{document}

\title[Stochastic Schr\"odinger equations]
{Stochastic Schr\"odinger equations and applications to Ehrenfest-type theorems}

\author{F. Fagnola}
\author{C.M. Mora}

\address{Dipartimento di Matematica, Politecnico di Milano 
\newline
Piazza Leonardo Da Vinci 32, I-20133 \newline
Milano, Italy.}

\address{CI$^2$MA and Departamento de Ingenier\'{\i}a Matem\'{a}tica, Universidad de Concepci\'on \newline
Barrio Universitario, Avenida     Esteban Iturra s/n,
Casilla 160-C \newline
Concepci\'on, Chile.}

\email{franco.fagnola@polimi.it, cmora@ing-mat.udec.cl}
\urladdr{
\url{http://www.mate.polimi.it/viste/pagina_personale/pagina_personale.php?id=186&lg=en},
\url{http://www.ing-mat.udec.cl/~cmora}
}
\thanks{This research was partially supported by FONDECYT Grants 7090088, 1070686 and 1110787.
FF was partially supported by the PRIN 2010-2011 programme "Evolution differential 
problems: deterministic and stochastic approaches and their interactions".
CMM was partially supported by BASAL Grants PFB-03 and FBO-16 as well as by PBCT-ACT 13 project.}

\subjclass[2000]{60H15, 60H30, 81C20, 46L55.} 

\keywords{Open quantum systems, 
stochastic Schr\"{o}dinger equations, 
regularity of solutions,  
quantum measurement processes, Ehrenfest theorem, stochastic partial differential equations}

\begin{abstract}
 We study stochastic evolution equations describing the dynamics 
of open quantum systems.
First, using resolvent approximations, 
we obtain a sufficient condition for  regularity of solutions to linear stochastic Schr\"{o}dinger equations driven by cylindrical Brownian motions 
applying  to many physical systems.
Then, we establish 
well-posedness and norm conservation property of a wide class of open quantum systems described in position representation.
Moreover,
we prove Ehrenfest-type theorems that describe the evolution of the mean value of quantum observables in open systems.
Finally, 
we give a new  criterion for  the existence and uniqueness of weak 
solutions to non-linear stochastic Schr\"{o}dinger equations.
We apply our results to physical systems such as fluctuating ion traps
and quantum measurement processes of position. 
 \end{abstract}

\maketitle



\section{Introduction}

Stochastic Schr\"odinger equations are frequently used to describe 
quantum measurement processes (see, e.g., \cite{Barchielli,WisemanMilburn2009})
and, in general, 
quantum systems that are sensitive to the environment influence 
(see, e.g., \cite{GardinerZoller2004,Carmichael2008}).
Moreover,
non-linear stochastic Schr\"odinger equations are becoming an established tool
for  numerical simulation of the evolution of open quantum systems 
(see, e.g.,  \cite{Breuer,Percival}).
This motivates the study of 
mathematical properties of stochastic Schr\"odinger equations allowing us to obtain information 
on physical phenomena.
In this research direction,
we first investigate regularity of solutions to  linear and non-linear stochastic Schr\"odinger equations arising in the study of quantum systems with continuous variables,
namely having  $L^{2}\left( \mathbb{R}^{d} ,\mathbb{C}\right) $ as state space.
Then,
we prove a version of Ehrenfest's theorem for open quantum systems.
As a concrete physical application, 
we deduce rigorously the  linear heating in a Paul trap.

In Section \ref{sec:MainRes},
we first focus on open quantum systems described by
the linear stochastic evolution equation in a complex separable Hilbert space 
$\left(\mathfrak{h},\left\langle \cdot,\cdot\right\rangle \right) $:
\begin{equation}
\label{eq:SSE}
X_{t}\left( \xi \right) 
= \xi +\int_{0}^{t}G \left( s \right) X_{s}\left( \xi \right) ds 
+ \sum_{\ell=1}^{\infty }\int_{0}^{t}
L_\ell \left( s \right) X_{s}\left( \xi \right) dW_{s}^{\ell}, 
\end{equation}
see, e.g., 
\cite{Barchielli,Barchielli1,BassiDurrKolb2010,Belavkin1,Breuer,Gehm1998,Gough,Grotz,Halliwell,Schneider,Singh2007} and the references therein.
The driving noise 
$\left( W^{\ell} \right)_{\ell \ge 1}$ 
is a sequence of real valued independent Wiener processes on a filtered complete probability 
space $\left( \Omega ,\mathfrak{F},
\left(\mathfrak{F}_{t}\right) _{t\geq 0},\mathbb{P}\right) $,
the solution $X$ is a pathwise continuous adapted stochastic processes 
taking values in $\mathfrak{h}$,
$\xi\in L^2(\Omega,\mathbb{P})$,
and $\left( G \left( t \right)  \right)_{t \geq 0}, \left( L_{\ell}
\left( t \right)  \right)_{t \geq 0}$ are 
given families of linear operators on $\mathfrak{h}$ satisfying  
\begin{equation}
\label{eq:G(t)}
G \left( t \right)  =
-iH \left( t \right) - \frac{1}{2} \sum_{\ell=1}^{\infty}
L_{\ell} \left( t \right)  ^*L_{\ell} \left( t \right) 
\end{equation}
on suitable common domain with $H\left( t \right) $ symmetric operator.
The relation (\ref{eq:G(t)}) is a necessary  condition for mean norm square conservation of  $X_{t}\left( \xi \right) $, an important physical property
that must hold in the application to open quantum systems.

In Subsection \ref{subsec:2.1}
we establish a sufficient condition for regularity of solutions to (\ref{eq:SSE}),
closely adapted to its special structure. 
Regular solutions are essentially solutions with finite 
energy, indeed, regularity of  $X_{t}\left( \xi \right) $ is characterized  
through $ \mathbb{E}\left\Vert C X_t(\xi)\right\Vert^2 < \infty$ 
for suitable non-negative operators $C$ on $\mathfrak{h}$, with a 
domain contained in the domains of  $G(t)$ and $L_\ell(t)$,
allowing us to control unboundedness of these operators.
Taking inspiration from resolvent approximation 
methods developed in  \cite{FagnolaWills}, 
we strengthen results of 
 \cite{MoraMC2004} and \cite{MoraReIDAQP,MoraReAAP} 
and improve their applicability to open quantum systems with 
infinite dimensional state space in coordinate representation
(see Section \ref{sec:LinearSSE1} for a review of previous works).
Moreover, we prove that regularity of $X$ implies the mean norm square conservation property, 
namely  
$\mathbb{E} \left\Vert X_t(\xi)\right\Vert^2=\mathbb{E}\left\Vert \xi\right\Vert^2$ 
for all $t\ge 0$.

In Subsection  \ref{sec:nonlinear_SSE}, 
we report our 
careful verification that existence and uniqueness of the regular solution to  
(\ref{eq:SSE}) yields existence and uniqueness of the regular solution to 
\begin{equation}
 \label{eq:nlSSE}
 Y_{t} 
 = 
 Y_{0}
 +
  \int_{0}^{t} G \left(s, Y_{s} \right) ds 
 + \sum_{\ell=1}^{\infty} 
 \int_{0}^{t} \left(L_{\ell} \left(s \right)   Y_{s}    
 - \Re \left\langle  Y_{s}, 
 L_{\ell}  \left(s \right)   Y_{s}  \right\rangle  Y_{s} \right)
  dW_{s}^{\ell}
\end{equation}
with 
$$
G\left(s, y\right)
=
G  \left(s \right)  y  
+
\sum_{\ell =1}^{\infty}\left(\Re \left\langle y,L_{\ell}  \left(s \right)   y \right\rangle
L_{\ell}  \left(s \right)  y - \frac{1}{2}  \Re^{2}\left\langle y,L_{\ell}  \left(s \right) y \right\rangle y \right) .
$$
We thus get from Subsection \ref{subsec:2.1}
a sufficient condition for well-posedness of (\ref{eq:nlSSE}).
This non-linear stochastic Schr\"odinger equation 
is a fundamental tool for modeling the dynamics of states in quantum measurement 
processes
(see, e.g., \cite{Barchielli1,Barchielli,BassiDurrKolb2010,Belavkin1,Breuer,Gough}),
as well as numerical simulation of the evolution 
of mean values of quantum observables 
(see, e.g., \cite{Breuer,MoraAAP2005,Percival}),
which are represented  by 
$ \mathbb{E} \langle Y_{t} , A Y_{t}  \rangle$.

Mathematics of closed quantum systems is well established, on the contrary,
 only a few papers  deal with open quantum systems
whose state space $\mathfrak{h}$ contains, among its components, $L^{2}\left( \mathbb{R}^{d} ,\mathbb{C}\right)$
(see, e.g., \cite{BassiDurrKolb2010,ChebFagn2,Kolo,Gough,MoraReAAP,MoraAP} and references therein).
However,
important physical phenomena are realistically described by
open quantum systems involving continuous variables such as position
(see, e.g., \cite{DAgostaDiVentra2008,Gough,Halliwell,HarocheRaimond2006,WisemanMilburn2009}).
This motivates Section \ref{subsec:oqs-coord} where
we use our general results as the starting point for investigating
well-posedness and norm
conservation property of physical systems described in position 
representation with Hilbert space 
$\mathfrak{h} =  L^{2}\left( \mathbb{R}^{d} ,\mathbb{C}\right) $, Hamiltonian 
\begin{equation}\label{eq:Hex}
 H(t) = - \alpha \Delta + i\sum_{j=1}^d \left(A^j(t,\cdot)\partial_j 
+\partial_j A^j(t,\cdot)\right) + V(t, \cdot)
\end{equation}
and noise coefficients
\begin{equation}
 \label{eq:Lex}
 L_{\ell} \left(t \right)   =
\left\{ 
\begin{array}{ll}
\sum_{j=1}^d \sigma_{\ell j}\left(t,\cdot\right)\partial_j + \eta_{\ell} \left(t , \cdot \right), 
& \text{if } 1 \leq \ell \leq m
\\ 
0, & \text{if }  \ell>m
\end{array}
\right. ,
\end{equation}
where  $ t \geq 0$,  $m\in\mathbb{N}$,
$\alpha$ is a non-negative real constant,
$\partial_j$ denotes the partial derivative with respect to the 
$j^{\hbox{\rm th}}$-coordinate,
$V,A^j : [0,+\infty[\times \mathbb{R}^d \rightarrow \mathbb{R}$ 
and 
$ \sigma_{\ell j}$, $\eta_{\ell}  : [0,+\infty[\times \mathbb{R}^d \rightarrow \mathbb{C}$
are measurable smooth functions.
We thus include in our study concrete physical situations like:
continuous measurements of 
position  \cite{Bassi2009,DurrHinKolb,Gough,Kolo},
atoms in interaction with polarized lasers  \cite{Singh2007},
quantum systems in fluctuating traps  \cite{Grotz,Schneider}
and collisions of heavy-ions  \cite{Alicki,ChebFagn2}.
The main difficulties in the study of stochastic partial 
differential equations (\ref{eq:SSE}) and (\ref{eq:nlSSE})
with Hamiltonian (\ref{eq:Hex}) and noise operators  (\ref{eq:Lex})
lies in the unboundedness of partial derivatives $\partial_j$ in the noise coefficients 
as well as in  the magnetic fields terms 
$A^j(t,\cdot)\partial_j  +\partial_j A^j(t,\cdot)$, 
a possible linear growth of functions $\eta_\ell$
and the possible quadratic behavior of the potential  $V$; 
solving (\ref{eq:SSE}) and (\ref{eq:nlSSE}) 
we must cope with all of them at the same time.
We overcome these difficulties by using
the reference operator 
$C = - {\Delta}+ \left| x \right| ^{2} $,
together with non-trivial algebraic and analytic manipulations.

In Section \ref{sec:Ehrenfest}, 
we derive rigorously Ehrenfest-type theorems for open quantum systems.
Indeed, assuming that (\ref{eq:SSE}) has a unique $C$-regular solution,
we prove, roughly speaking, that  the mean value of a $C$-bounded observable $A$ satisfies:
\begin{eqnarray}
\label{eq:I.1}
 \mathbb{E} \left\langle X_{t} \left( \xi \right) , A X_{t} \left( \xi \right) \right\rangle
& = &  
\mathbb{E} \left\langle  \xi  , A  \xi  \right\rangle
+
\int_0^t 
\mathbb{E} \left\langle A^* X_{s} \left( \xi \right) , G  \left( s \right)  X_{s} \left( \xi \right) \right\rangle
 ds
\\
\nonumber
&&
+ \int_0^t 
\mathbb{E} \left\langle G  \left( s \right)  X_{s} \left( \xi \right) , A  X_{s} \left( \xi \right) \right\rangle 
ds
\\
\nonumber
&&
+
\int_0^t \left(
\sum_{\ell = 1}^{\infty}
\mathbb{E} \left\langle  L_{\ell}  \left( s \right)  X_{s} \left( \xi \right) , A L_{\ell}  \left( s \right)  X_{s} \left( \xi \right) \right\rangle
\right) ds .
\end{eqnarray}
States of quantum systems are described 
by density operators, i.e., positive operators on $\mathfrak{h}$ with unit trace.
Under, for instance, the Born-Markov approximation,
the density operator at time $t$ is given (in Dirac notation) by 
$$
 \rho_t = \mathbb{E} \left|X_{t} \left( \xi \right)\right\rangle
\left\langle X_{t} \left( \xi \right)\right|\ 
$$
whenever the initial density operator is 
$ \mathbb{E} \left| \xi \right\rangle \left\langle \xi \right|$
(see, e.g., \cite{Barchielli,Breuer,MoraAP,Percival}).
Hence 
the mean value of a $C$-bounded observable $A$ is well-defined by $ tr \left(  \rho_t  A \right) $,
which is equal to 
$ \mathbb{E} \langle X_{t} \left( \xi \right), A X_{t} \left( \xi \right) \rangle$ 
(see, e.g.,  \cite{MoraAP}),
and (\ref{eq:I.1}) becomes 
\begin{equation}
\label{eq:3}
  \frac{d}{dt} tr \left(  \rho_t  A \right) 
=
tr \left(  \rho_t  \left(
-i \left[ A,  H \left( t \right)  \right]
+ \frac{1}{2} L_{\ell} \left( t \right)^{*} \left[ A,  L_{\ell} \left( t \right)  \right]
+ \frac{1}{2} \left[L_{\ell} \left( t \right)^{*} , A   \right] L_{\ell} \left( t \right) 
\right)  \right),
\end{equation}
where 
$\left[ \cdot,  \cdot \right] $ stands for  the commutator between two operators
and $tr \left(  \cdot \right)$ denotes the trace operation.

Ehrenfest-type theorems describe the rate 
of change of mean values of quantum observables.
In the physical literature on open quantum systems,
the generalized Ehrenfest equations (\ref{eq:I.1}) and (\ref{eq:3}) have been used, for example, 
to demonstrate  connections between quantum and classical mechanics (see, e.g., \cite{Percival}),
and to estimate the behavior of the expected value of important quantum observables
\cite{Breuer,EnglertMorigi2002,Halliwell,HupinLacroix2010,Salmilehto2012}.
Nevertheless,
(\ref{eq:I.1}) and (\ref{eq:3}) 
have not been rigorously examined
from the mathematical viewpoint.
This motivates Section \ref{sec:Ehrenfest}
 where we present 
the first, 
to the best of our knowledge,
rigorous proof of
the Ehrenfest equations (\ref{eq:I.1}) and (\ref{eq:3}) 
for open quantum systems with infinite-dimensional state space $\mathfrak{h}$.
We would like to point out here that
Ehrenfest-type theorems for closed quantum systems 
have been recently proved by
\cite{Friesecke2009,Friesecke2010}; 
our results also generalize this work.

In Section \ref{sec:Ehrenfest}, we also introduce 
sufficient conditions for  validity of  (\ref{eq:I.1})  
and (\ref{eq:3}) applied to the system 
with Hamiltonian (\ref{eq:Hex}) and noise operators  (\ref{eq:Lex}).
This, together with Section  \ref{subsec:oqs-coord},
provides a sound framework for studying open quantum systems 
in coordinate  representation with smooth potentials. 

As a concrete physical application we consider ions traps
(see, e.g., \cite{Wineland1998} for a description).
Quadrupole ion traps were initially developed by 
Hans Georg Dehmelt and Wolfgang Paul 
who  were awarded
the Nobel Prize in Physics for this work
having a great impact in quantum information.
Experiments show that these traps lose coherence,
because the  coupling with the environment is relatively strong
(see, e.g.,  \cite{Grotz,Leibfried2006,Wineland1998} and references therein).
This drastically reduces   life times of  trapped atoms.
Here,
we prove rigorously the linear heating in a model of a Paul trap
whenever the initial density operator is regular enough,
providing  a
mathematically rigorous presentation of the arguments given by  \cite{Schneider}.

\vskip 1truecm

\subsection{Notation}
\label{subsec:not}

In this article,  $\left(\mathfrak{h},\left\langle \cdot,\cdot\right\rangle \right) $ is a separable complex Hilbert space whose scalar product $\left\langle \cdot,\cdot \right\rangle $ is linear in the second variable and anti-linear in the first one. 
We write $\mathcal{D}\left(A\right)$ for the domain of $A$, whenever $A$ is a linear operator in $\mathfrak{h}$. 
If $\mathfrak{X}$, $\mathfrak{Z}$ are normed spaces,
then we denote by $\mathfrak{L}\left( \mathfrak{X},\mathfrak{Z}\right) $ the set of all bounded operators from $\mathfrak{X}$ to $\mathfrak{Z}$
and we define $\mathfrak{L}\left( \mathfrak{X}\right) = \mathfrak{L}\left( \mathfrak{X},\mathfrak{X}\right) $. 
We set $  \left[ A,  B \right] = AB - BA$
when $A,B$  are  operators   in  $\mathfrak{h}$.
By  $ \mathcal{B} \left( \mathfrak{Y} \right)$ we mean  the set of all Borel set of the topological space $ \mathfrak{Y}$.

Suppose that $C$ is a self-adjoint positive operator in $\mathfrak{h}$. 
For  any $x,y\in \mathcal{D}\left( C\right) $
we define the graph scalar product 
$\left\langle x,y\right\rangle_{C}=\left\langle x,y\right\rangle +\left\langle Cx,Cy\right\rangle $
 and the graph norm 
 $ \left\Vert x\right\Vert _{C}=
\sqrt{\left\langle x,x\right\rangle _{C}}$.
We use the symbol $L^{2}\left( \mathbb{P},\mathfrak{h}\right) $ to denote  the set of all square integrable random variables from $\left( \Omega ,\mathfrak{F},\mathbb{P}\right)$ to $ \left( \mathfrak{h},\mathfrak{B}\left( \mathfrak{h}\right) \right)$.
Moreover,  $L_{C}^{2}\left( \mathbb{P},\mathfrak{h}\right) $  stands for the set of all $\xi \in L^{2}\left( \mathbb{P},\mathfrak{h}\right) $ such that $\xi \in \mathcal{D}\left( C\right) $ a.s. and $\mathbb{E} \left\Vert \xi \right\Vert _{C}^{2}<\infty $. 
We define $\pi _{C}:\mathfrak{h\rightarrow h}$  by 
$\pi_C(x)=x$ if $x\in \mathcal{D}\left( C\right) $ and 
$\pi_C(x)=0$ if $x\notin \mathcal{D}\left( C\right) $. 

In case $g : \mathbb{R}^{n} \mapsto \mathbb{C}$ is Borel measurable,
$\mlt{g}$ stands for the multiplication operator in 
$L^{2}\left( \mathbb{R}^{n} ,\mathbb{C}\right) $ 
given by $f \mapsto g f$.
We abbreviate $\mlt{g}$ to $g$ when no confusion can arise. 
We denote by $C^{k}\left( \mathbb{R}^{d} ,\mathbb{K}\right) $ 
with $\mathbb{K} = \mathbb{R}, \mathbb{C}$,
 the set of all functions from $\mathbb{R}^{d}$ to $\mathbb{K}$  
 whose  partial derivatives up to order $k$ are  with continuous.
Moreover, $C^{\infty}_{c}\left( \mathbb{R}^{d} ,\mathbb{C}\right) $ 
is the set of all functions of $C^{\infty} \left( \mathbb{R}^{d},  \mathbb{C}\right) $ having compact support.
If $f:\mathbb{R}^{d} \mapsto \mathbb{C}$, 
then $\partial_k f$ denotes the partial derivative of $f$ 
with respect to its $k$-th argument, 
$\nabla f$ stands for the gradient 
of $f$ and $ {\Delta}f$ is the Laplacian of $f$.

In what follows, the letter $K$ denotes generic constants.  
We will write $K \left( \cdot \right)$ for different  non-decreasing 
non-negative functions on the interval $\left[ 0, \infty \right[$ when 
no confusion is possible. 

\section{Stochastic Schr\"{o}dinger equations} 
\label{sec:MainRes}

\subsection{Linear stochastic Schr\"{o}dinger equation}
\label{subsec:2.1}

\subsubsection{Previous works}
\label{sec:LinearSSE1}

In the autonomous case,  \cite{Holevo} obtained  the 
existence and uniqueness of the weak (topological) solution 
to (\ref{eq:SSE}) whenever $G$ is the infinitesimal generator 
of a  contraction semigroup. A drawback of  such weak solutions
is that they may not preserve  the mean value of  
$\left\Vert X_t(\xi)\right\Vert^2$ (see, e.g., \cite{Holevo}).
\cite{Rozovskii} proved the existence and uniqueness 
of  variational solutions for a  class dissipative 
linear stochastic evolution equations on real Hilbert spaces, 
where the regularity of $X_t(\xi)$ is essentially characterized
through a strictly positive operator $C$.
In particular,
approximating 
$G \left( s \right) $ by $G \left( s \right) 
- \epsilon C^2$ in (\ref{eq:SSE}),
 \cite{Rozovskii} 
obtained a solution of (\ref{eq:SSE})  as a limit of 
solutions to coercive  stochastic evolution equations that 
are treated using  the Galerkin method.
This indirect proof makes it difficult to address 
some properties of the SSEs  as time-global 
estimates  needed for establishing  the existence of regular 
invariant measures for  (\ref{eq:nlSSE}), and time-local estimates appearing in the numerical solution of  (\ref{eq:SSE})  and (\ref{eq:nlSSE}) (see, e.g.,   \cite{MoraMC2004}).
Using Galerkin approximations,
\cite{GreckschSAA}
proved the existence and uniqueness of variational solutions to
\begin{equation}
 \label{eq:4}
 d X_t 
=
\left(  i \left( - H_0 X_t +  f \left( t ,  X_t \right) \right) \right) dt
+
ig \left( t ,  X_t \right)  dW_{t} ,
\end{equation}
where 
$W$ is a cylindrical Brownian motion with values in a separable real Hilbert space,
$f$, $g$ are locally Lipschitz functions
and 
$- H_0$ is a coercive operator with discrete spectrum.
These conditions are strong  in case (\ref{eq:4}) becomes linear. 

Applying directly the Galerkin method,  together with a priori 
estimates of the graph norm of the approximating solutions 
with respect to the reference positive operator $C$, 
 \cite{MoraMC2004} and \cite{MoraReIDAQP} proved  that (\ref{eq:SSE})
 has a unique  strong regular solution, in the autonomous case.
The assumptions of   \cite{MoraMC2004} and \cite{MoraReIDAQP} 
include the existence of an orthonormal basis 
$\left( e_n \right)_n$ of $\left(\mathfrak{h},\left\langle \cdot,\cdot\right\rangle \right) $ that satisfies, for instance,
$\sup_{n\in\mathbb{Z}_{+}}\left\Vert CP_{n}x\right\Vert
\leq\left\Vert C x\right\Vert $ for all $x$ belonging to the 
domain of $C$, where $P_{n}$ is the orthogonal projection 
of  $\mathfrak{h}$ over the linear manifold spanned by 
$e_{0},\ldots e_{n}$ and summability of the series 
$\sum_{\ell}\left\Vert L_\ell ^* e_n\right\Vert^2$ together 
with some domain hypotheses on the adjoint $G^*$ of $G$.  
Summability of the series 
$\sum_{\ell}\left\Vert L_\ell ^* e_n\right\Vert^2$, in particular, 
is a strong mathematical requirement that may not hold even when the operators 
$G$ and $L_\ell$ are bounded. 
In Section \ref{subsec:LinearSSE},
we prove the well-posedness of (\ref{eq:SSE}),
as well as the regularity of its solution,
under hypotheses that do not involve 
the orthogonal basis  $\left( e_n \right)_n$, the 
summability condition  and technical hypotheses on adjoints 
of  $G$ and $L_\ell$ (that now are also time-dependent).
Then,
we obtain stronger results with simplified proofs
and wider range of applications.

Finally,
the non-commutative version of  (\ref{eq:SSE}) has been treated 
using resolvent approximations and a priori estimates by \cite{FagnolaWills}.

\subsubsection{Main results}
\label{subsec:LinearSSE}

We start by making precise the notion of  strong regular solution to  (\ref{eq:SSE}).

\begin{condition}
\label{hyp:L-G-C-domain}
Let $C$ be a self-adjoint positive operator in $\mathfrak{h}$ such that:
\begin{itemize}
 \item[(H1.1)] 
 For any $\ell\ge 1$ and $t \geq 0$,
 $\mathcal{D}\left(C \right) 
\subset \mathcal{D}\left( L_\ell \left( t \right) \right)$ and 
 $L_{\ell}  \left( \cdot \right)  \circ \pi _{C}$ is measurable 
as a function from 
 $\left( \left[ 0 , \infty \right[ \times \mathfrak{h}, 
\mathcal{B}\left(  \left[ 0 , 
\infty \right[ \times \mathfrak{h} \right) \right) $ to 
$\left(\mathfrak{h}, \mathcal{B} \left( \mathfrak{h} \right)\right)$.
  
 \item[(H1.2)] 
 For all $t \geq 0$,  
 $\mathcal{D}\left( C\right) \subset \mathcal{D}\left( G \left( t \right) \right)$. 
 Moreover,  
 $$
 G \left( \cdot \right) \circ \pi _{C}
 : 
 \left( \left[ 0 , \infty \right[ \times \mathfrak{h}, 
\mathcal{B}\left(  \left[ 0 , 
\infty \right[ \times \mathfrak{h} \right) \right) 
\rightarrow 
\left(\mathfrak{h}, \mathcal{B} \left( \mathfrak{h} \right)  \right)
 $$
 is measurable.
\end{itemize}
 \end{condition}

\begin{definition} \label{def:regular-sol}
Let Hypothesis \ref{hyp:L-G-C-domain} hold. 
Assume that $\mathbb{I}$ is 
either $\left[ 0,\infty \right[ $ or the interval $\left[ 0,T\right] $, 
with $T\in \mathbb{R}_{+}$. 
An $\mathfrak{h}$-valued adapted process $\left( X_{t}\left( \xi \right) \right) _{t \in \mathbb{I}}$  
with continuous sample paths is called strong 
$C$-solution of (\ref{eq:SSE}) on  $\mathbb{I}$ with initial datum $\xi$ if and only if, for all $t\in \mathbb{I}$:
\begin{itemize}
\item $\mathbb{E}\left\Vert X_{t}\left( \xi \right) \right\Vert ^{2} 
\leq \mathbb{E}\left\Vert \xi \right\Vert ^{2}$, $X_{t}\left( \xi \right) \in \mathcal{D}\left( C\right) $ a.s. and 
$
\sup_{s\in \left[ 0,t\right] }\mathbb{E}\left\Vert C X_{s}\left( \xi \right)  \right\Vert ^{2} < \infty 
$.
\item
$
X_{t}\left( \xi \right) 
=\xi 
+\int_{0}^{t}G \left( s \right) \pi _{C}\left( X_{s}\left( \xi \right) \right) ds
+\sum_{\ell=1}^{\infty }
\int_{0}^{t}L_\ell \left( s \right) \pi _{C}\left( X_{s}\left( \xi \right) \right) dW_{s}^\ell
$  \hspace{0.1cm} $\mathbb{P}$-a.s. 
\end{itemize}
\end{definition}

The lemma below guarantees that 
Hypothesis \ref{hyp:L-G-C-domain} is valid
in many physical models.

\begin{lemma}
 \label{lema:Measurability:g}
 Consider the self-adjoint positive operator 
 $C: \mathcal{D}\left( C\right) \subset \mathfrak{h} \rightarrow \mathfrak{h} $.
 Suppose that each family of linear operators
 $\left( G \left( t \right)  \right)_{t \geq 0} and  \left( L_{\ell} \left( t \right)  \right)_{t \geq 0}$,  
 with $\ell \in \mathbb{N}$, can be written  as 
 $$
 \left( \sum_{k=1}^{n} f_k \left( t \right) \Phi_k \right)_{t \geq 0} ,
 $$
 where 
 $f_1, \ldots, f_n :  \left( \left[ 0 , \infty \right[ , \mathcal{B}\left(  \left[ 0 , \infty \right[  \right) \right) 
\rightarrow 
\left(\mathbb{C}, \mathcal{B} \left( \mathbb{C} \right)  \right) 
 $
 are measurable 
 and
 $\Phi_1, \ldots, \Phi_n$
 belong to $\mathfrak{L}\left( \left( \mathcal{D}\left( C\right), \left\Vert \cdot \right\Vert _{C}\right)  ,\mathfrak{h}\right)$.
 Then Hypothesis \ref{hyp:L-G-C-domain} is fulfilled.
\end{lemma}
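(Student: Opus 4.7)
The plan is to verify the two components of Hypothesis~\ref{hyp:L-G-C-domain} separately. The domain inclusions $\mathcal{D}(C)\subset\mathcal{D}(G(t))\cap\mathcal{D}(L_\ell(t))$ are immediate: by assumption each $\Phi_k$ is defined on $\mathcal{D}(C)$, so the finite sum $\sum_{k=1}^n f_k(t)\Phi_k$ defining $G(t)$, and similarly $L_\ell(t)$, is well defined on $\mathcal{D}(C)$ for every $t\ge 0$. It remains only to check joint Borel measurability of $G(\cdot)\circ\pi_C$ and $L_\ell(\cdot)\circ\pi_C$ as maps from $[0,\infty[\times\mathfrak{h}$ to $\mathfrak{h}$.

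As a preparatory step, I would show that $\mathcal{D}(C)$ is a Borel subset of $\mathfrak{h}$, so that $\pi_C$ itself is Borel measurable. Using the spectral resolution $C=\int_0^\infty\lambda\,dE_\lambda$ and the bounded truncations $C_N=\int_{[0,N]}\lambda\,dE_\lambda$, one writes $\mathcal{D}(C)=\{x\in\mathfrak{h}:\sup_{N\in\mathbb{N}}\|C_Nx\|^2<\infty\}$: each $x\mapsto\|C_Nx\|^2$ is norm-continuous, the supremum is therefore lower semicontinuous, hence $\mathcal{D}(C)$ is Borel and $\pi_C(x)=x\cdot\mathbf{1}_{\mathcal{D}(C)}(x)$ is Borel measurable. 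The main technical step is then to show that each $\Phi_k\circ\pi_C:\mathfrak{h}\to\mathfrak{h}$ is Borel measurable. Equipped with the graph norm, $\mathcal{D}(C)$ is a separable Hilbert space, and the inclusion $i:(\mathcal{D}(C),\|\cdot\|_C)\hookrightarrow\mathfrak{h}$ is a continuous injection between Polish spaces; by the Lusin--Souslin theorem, $i$ maps Borel sets to Borel sets, and the Borel $\sigma$-algebra of $(\mathcal{D}(C),\|\cdot\|_C)$ coincides with the trace $\mathcal{B}(\mathfrak{h})\cap\mathcal{D}(C)$. Since $\Phi_k$ is bounded, hence continuous, from $(\mathcal{D}(C),\|\cdot\|_C)$ to $\mathfrak{h}$, it is Borel with respect to this common $\sigma$-algebra, and extending by zero on the Borel set $\mathcal{D}(C)^c$ preserves Borel measurability of the composition $\Phi_k\circ\pi_C$.

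Joint Borel measurability then follows by writing $G(t)\pi_C(x)=\sum_{k=1}^n f_k(t)(\Phi_k\circ\pi_C)(x)$ and similarly for $L_\ell$: the pair $(t,x)\mapsto(f_k(t),(\Phi_k\circ\pi_C)(x))$ is a coordinate-wise Borel map into the second-countable Polish space $\mathbb{C}\times\mathfrak{h}$, composition with continuous scalar multiplication yields a Borel summand, and a finite sum of Borel functions is Borel. The only nonelementary ingredient in this plan is the identification of the graph-norm Borel structure of $\mathcal{D}(C)$ with the trace of $\mathcal{B}(\mathfrak{h})$; this is the main obstacle if one is unwilling to invoke Lusin--Souslin, since a priori the graph norm is strictly finer than $\|\cdot\|$ and the continuity of $\Phi_k$ is only with respect to $\|\cdot\|_C$.
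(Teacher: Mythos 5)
Your proposal is correct, but the key measurability step is argued along a genuinely different route from the paper's. The paper (Subsection \ref{subsec:lemaMeasurability:g}) proceeds via the resolvent operators $R_{n}=n\left(n+C\right)^{-1}$: Lemma \ref{lema22} identifies $\mathcal{D}\left(C\right)$ with $\left\{ x:\sup_{n}\left\Vert CR_{n}x\right\Vert <\infty\right\}$, which gives Borelness of $\mathcal{D}\left(C\right)$ and of $\pi_{C}$ because each $CR_{n}$ is bounded, and then Lemma \ref{lema:Measurability} obtains $\Phi\circ\pi_{C}$ as the pointwise limit of the everywhere-defined measurable maps $\Phi R_{n}\circ\pi_{C}$ (here $\Phi R_{n}\in\mathfrak{L}\left(\mathfrak{h}\right)$ since $R_{n}$ maps $\mathfrak{h}$ boundedly into $\left(\mathcal{D}\left(C\right),\left\Vert \cdot\right\Vert _{C}\right)$ and $R_{n}x\rightarrow x$ in graph norm for $x\in\mathcal{D}\left(C\right)$). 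You instead prove Borelness of $\mathcal{D}\left(C\right)$ via spectral truncations $C_{N}$ — essentially the same device — but for the measurability of $\Phi_{k}\circ\pi_{C}$ you invoke the Lusin--Souslin theorem to show that the Borel $\sigma$-algebra of the separable Hilbert space $\left(\mathcal{D}\left(C\right),\left\Vert \cdot\right\Vert _{C}\right)$ coincides with the trace of $\mathcal{B}\left(\mathfrak{h}\right)$, so that graph-norm continuity of $\Phi_{k}$ already yields Borel measurability with respect to the ambient space; this is sound (the inclusion is a continuous injection of Polish spaces, and separability of the graph-norm space follows from its identification with the graph of $C$), and you correctly flag it as the nonelementary ingredient. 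The comparison: your argument is more general, since it applies to any map that is merely Borel for the graph topology, linear or not, while the paper's resolvent-approximation argument exploits linearity and relative boundedness but stays entirely within elementary operator theory and pointwise limits of measurable functions, avoiding descriptive set theory; it also reuses the same $R_{n}$ machinery that drives the rest of the paper's proofs. The final assembly step (joint measurability of $\left(t,x\right)\mapsto\sum_{k}f_{k}\left(t\right)\left(\Phi_{k}\circ\pi_{C}\right)\left(x\right)$ via second countability of $\mathbb{C}\times\mathfrak{h}$) matches what the paper leaves as "straightforward," so no gap there.
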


\begin{proof}
Deferred to Subsection \ref{subsec:lemaMeasurability:g}.
\end{proof}

\begin{remark}
Assume that \eqref{eq:SSE} is autonomous, i.e.,  
$G \left( t \right)$ and $L_{\ell} \left( t \right)$  do not depend on $t$.
From Lemma \ref{lema:Measurability:g} we have that Hypothesis \ref{hyp:L-G-C-domain} holds in case  
$G, L_{\ell}  \in \mathfrak{L}\left( \left( \mathcal{D}\left( C\right), \left\Vert \cdot \right\Vert _{C}\right)  ,\mathfrak{h}\right)$,
where
$C$ is a self-adjoint positive operator on $\mathfrak{h}$.
\end{remark}

The following theorem provides  a new general sufficient condition for 
the existence and uniqueness of strong $C$-solutions to (\ref{eq:SSE}).

\begin{condition}\label{hyp:CF-inequality}
Let $C$ satisfy Hypothesis \ref{hyp:L-G-C-domain}.  
In addition assume that:
\begin{itemize}

\item[(H2.1)]  For all $t \geq 0$ and 
$x \in \mathcal{D}\left( C\right)$,
$\left\Vert  G  \left( t \right) x  \right\Vert^{2} \leq K \left( t \right) \left\Vert  x  \right\Vert_{C}^{2}$ .

\item[(H2.2)] For every natural number $\ell$ there exists a non-decreasing  function 
$K_{\ell}$ on  $ \left[ 0 , \infty \right[ $ satisfying
$
\left\Vert  L_{\ell}  \left( t \right) x  \right\Vert^{2} 
\leq K_{\ell} \left( t \right) \left\Vert  x  \right\Vert_{C}^{2}
$
for all $x \in \mathcal{D}\left( C\right)$ and $t \geq 0$.

\item[(H2.3)] There exists a non-decreasing non-negative function $\alpha$ such that  
\[
2\Re\left\langle C^{2} x, G \left( t \right) x\right\rangle 
+\sum_{\ell=1}^{\infty }\left\Vert C L_{\ell} \left( t \right) x \right\Vert ^{2}
\leq \alpha \left( t \right)  \left\Vert x\right\Vert_{C}^{2}
\]
for all $t \geq 0$
and any $x$ belonging to a core $\mathfrak{D}_{1}$ of $C^{2}$.

\item[(H2.4)] There exists a core $\mathfrak{D}_{2}$ of $C$ such 
that 
$
2\Re\left\langle  x, G \left( t \right) x\right\rangle 
+\sum_{\ell=1}^{\infty }\left\Vert  L_{\ell} \left( t \right) x \right\Vert ^{2}\leq 0
$
for all $x \in \mathfrak{D}_{2}$ and $t \geq 0$.

\end{itemize}
\end{condition}

\begin{theorem}
\label{teorema1}
Let Hypothesis \ref{hyp:CF-inequality} hold 
and 
assume  that $\xi \in L_{C}^{2}\left( \mathbb{P},\mathfrak{h}\right) $ is
$\mathfrak{F}_{0}$-mea\-surable. 
Then (\ref{eq:SSE}) has a unique strong $C$-solution 
$\left( X_{t}\left( \xi \right) \right) _{t \geq0}$ with initial datum $\xi$. 
Moreover, 
\[ 
\mathbb{E} \left\Vert C  X_{t}\left( \xi \right) \right\Vert ^{2} 
\leq 
\exp \left( t \alpha \left( t \right) \right) \left( \mathbb{E} \left\Vert C  \xi  \right\Vert ^{2} 
+ t \alpha\left( t \right) \mathbb{E} \left\Vert  \xi  \right\Vert ^{2} \right).
\]
\end{theorem}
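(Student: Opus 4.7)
The plan is to follow the resolvent approximation strategy foreshadowed in Subsection 2.1.1, in the spirit of Fagnola--Wills. I would introduce the Yosida-type bounded resolvents $R_n = n(n+C)^{-1}$, which are self-adjoint contractions commuting with $C$, map $\mathfrak{h}$ into $\bigcap_{k\ge 1}\mathcal{D}(C^k)$, are contractions for the graph norm $\|\cdot\|_C$, and converge strongly to the identity on both $\mathfrak{h}$ and $(\mathcal{D}(C),\|\cdot\|_C)$. Using (H2.1), (H2.2) together with the boundedness of $CR_n$, the regularized coefficients $G(t)R_n$ and $L_\ell(t)R_n$ become bounded linear operators on $\mathfrak{h}$, measurable in $t$ by Hypothesis \ref{hyp:L-G-C-domain}. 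Standard Hilbert-space SDE theory (Picard iteration) then produces a unique continuous adapted solution $X^n$ of the regularized SSE with coefficients $G(\cdot)R_n$, $L_\ell(\cdot)R_n$ and initial condition $\xi$; summability of the noise series $\sum_\ell \|L_\ell(t)R_n x\|^2$ needed here is supplied by (H2.3) applied to the smooth element $R_n x$.

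Next I would derive $n$-uniform a priori estimates. Since $R_n X^n_s$ lies in $\bigcap_k \mathcal{D}(C^k)$, it is approachable in the relevant graph norms by elements of the cores $\mathfrak{D}_1$ and $\mathfrak{D}_2$, so by closability (H2.3) and (H2.4) extend to all of $\mathcal{D}(C^2)$ and $\mathcal{D}(C)$ respectively. Applying Itô's formula to $\|X^n_t\|^2$ and to $\|CX^n_t\|^2$ for the regularized equation, and invoking these extended inequalities on $R_n X^n_s$, one obtains
\[
\mathbb{E}\|X^n_t\|^2 \;\le\; \mathbb{E}\|\xi\|^2,\qquad \mathbb{E}\|CX^n_t\|^2 \;\le\; \mathbb{E}\|C\xi\|^2 + \alpha(t)\int_0^t \mathbb{E}\|X^n_s\|_C^2\, ds,
\]
and Gronwall delivers the quantitative bound displayed in the theorem, uniformly in $n$.

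To pass to the limit, I would apply Itô to $\|X^n_t - X^m_t\|^2$, splitting each coefficient difference as $A(s)R_n(X^n_s-X^m_s) + A(s)(R_n-R_m)X^m_s$ with $A=G$ or $A=L_\ell$. The first contribution, after summing over $\ell$ and using $2\Re\langle\cdot,\cdot\rangle$, is non-positive by the extended (H2.4) applied to $R_n(X^n_s-X^m_s)\in\mathcal{D}(C)$; the second is controlled by $\|(R_n-R_m)X^m_s\|_C$, which tends to $0$ in $L^2(\mathbb{P})$ thanks to the uniform $C$-estimate of Step 2 and the strong convergence $R_n\to I$ on $(\mathcal{D}(C),\|\cdot\|_C)$. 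This produces a Cauchy sequence in $C([0,T];L^2(\mathbb{P},\mathfrak{h}))$ whose limit $X$ has continuous paths, satisfies $X_t\in\mathcal{D}(C)$ a.s.\ and the stated $C$-estimate (by Fatou and Step 2), and inherits the integral equation of Definition \ref{def:regular-sol} via graph-norm convergence of $R_n X^n_s \to X_s$ in $L^2(\mathbb{P})$, which turns $G(s)R_nX^n_s\to G(s)X_s$ and $L_\ell(s)R_nX^n_s\to L_\ell(s)X_s$ in $L^2(\mathbb{P},\mathfrak{h})$ via (H2.1)--(H2.2). Uniqueness is obtained by the same Itô-plus-(H2.4) computation applied to the difference of two strong $C$-solutions, regularized through $R_n$ to legitimate the use of (H2.4).

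The main obstacle is the joint handling of the infinite sum $\sum_\ell \|L_\ell(s)\cdot\|^2$ in both the a priori estimate and the Cauchy argument: (H2.2) controls each term individually but not the series, while (H2.3)--(H2.4) are tailored precisely so that the combination $2\Re\langle C^{2}\cdot,G\cdot\rangle +\sum_\ell \|CL_\ell\cdot\|^2$ (resp.\ its uncapped analogue) is dominated globally. Combining this with the closability extension of (H2.3)--(H2.4) from the cores to $\mathcal{D}(C^2)$ and $\mathcal{D}(C)$, and with the strong convergence of $R_n$ on the graph of $C$, is what makes the whole scheme close.
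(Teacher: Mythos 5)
Your overall strategy (resolvent regularization in the spirit of Fagnola--Wills, $n$-uniform a priori estimates via It\^o plus the extended forms of (H2.3)--(H2.4), then a limit) is the same family of argument as the paper's, but the limiting step you propose does not close. You claim that $\left\| (R_n-R_m)X^m_s\right\|_C \to 0$ in $L^2(\mathbb{P})$ ``thanks to the uniform $C$-estimate and the strong convergence $R_n\to I$ on $(\mathcal{D}(C),\left\|\cdot\right\|_C)$''. Strong convergence of $R_n$ on a \emph{fixed} vector, combined with a mere bound $\sup_m\mathbb{E}\left\| CX^m_s\right\|^2<\infty$, only gives boundedness, not smallness, of the coupled-index term: $\left\| C(I-R_m)X^m_s\right\| = \left\| (I-R_m)CX^m_s\right\|$, where $CX^m_s$ is a moving $L^2$-bounded sequence with no compactness, and $I-R_m\to 0$ only strongly, not in operator norm on $\mathfrak{h}$. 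A rate would require uniform second-order moments $\mathbb{E}\left\| C^2X^m_s\right\|^2$, i.e.\ an analogue of (H2.3) one level up, which Hypothesis \ref{hyp:CF-inequality} does not provide. The same obstruction hits the noise cross terms $\sum_\ell\left\| L_\ell(s)(R_n-R_m)X^m_s\right\|^2\le K\left\| (R_n-R_m)X^m_s\right\|_C^2$. So the Cauchy estimate for $X^n-X^m$ in $C([0,T];L^2(\mathbb{P},\mathfrak{h}))$, and with it your identification of the limit via $G(s)R_nX^n_s\to G(s)X_s$ in $L^2(\mathbb{P},\mathfrak{h})$, is not justified. This is precisely why the paper does \emph{not} prove strong convergence: it establishes the a priori bounds (Lemmas \ref{lema1}, \ref{lema3}) plus an equicontinuity estimate (Lemma \ref{lema4}), extracts an $L^2(\mathbb{P},\mathfrak{h})$-\emph{weak} limit by Banach--Alaoglu/Arzel\`a--Ascoli and diagonalization (Lemma \ref{lema30}), and identifies the limit equation by testing against random variables $\chi$ and using weak closedness of the graphs of $C$-bounded operators (Lemmas \ref{lema31}--\ref{lema32}), with uniqueness from It\^o plus (H2.4).

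Two further, fixable, defects in your setup. First, with the one-sided regularization $G(t)R_n$, $L_\ell(t)R_n$ the It\^o drift for $\left\| X^n_t\right\|^2$ is $2\Re\left\langle X^n_s, G(s)R_nX^n_s\right\rangle+\sum_\ell\left\| L_\ell(s)R_nX^n_s\right\|^2$, and (H2.4) applied at $R_nX^n_s$ leaves the uncontrolled cross term $2\Re\left\langle (I-R_n)X^n_s, G(s)R_nX^n_s\right\rangle$; so neither $\mathbb{E}\left\| X^n_t\right\|^2\le\mathbb{E}\left\| \xi\right\|^2$ nor the Gronwall bound follows as stated. The paper symmetrizes the drift, $G^n(s)=\widetilde{R}_nG(s)\widetilde{R}_n$ with $L^n_\ell(s)=L_\ell(s)\widetilde{R}_n$, exactly so that the dissipativity inequality (its (3.37)) holds for \emph{all} $x\in\mathfrak{h}$. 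Second, $R_n=n(n+C)^{-1}$ maps $\mathfrak{h}$ only into $\mathcal{D}(C)$, not into $\bigcap_k\mathcal{D}(C^k)$, so you cannot invoke (H2.3) at $R_nx$ for general $x$; the paper instead takes $\widetilde{R}_n=n(n+C^2)^{-1}$, whose range lies in $\mathcal{D}(C^2)$, where the extension of (H2.3) (its Lemma \ref{lema2i}) applies. It also truncates the noise to $\ell\le n$ in the approximating equation, sidestepping any summability issue there.
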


\begin{proof}
Deferred to Subsection \ref{subsec:existence}.
\end{proof}

\begin{remark}
Under the assumptions and notation of Theorem \ref{teorema1},
we can prove  the Markov property of  $X_{t}\left( \xi \right)$
by  techniques of well-posed martingale problems
(see, e.g., \cite{MoraReAAP}).
\end{remark}

The next lemma provides an equivalent formulation of Condition H2.3,
stated in terms of random variables.

\begin{lemma}
\label{lema:EquivH2.3}
Suppose that  $C$ is a self-adjoint positive operator in $\mathfrak{h}$ 
such that 
$G \left( t \right)$ and $C L_{\ell} \left( t \right)$
belong to 
$\mathfrak{L}\left( \left( \mathcal{D}\left( C^2\right), \left\Vert \cdot \right\Vert _{C^2}\right)  ,\mathfrak{h}\right)$
for all $t \geq 0$ and $\ell \in \mathbb{N}$. 
We define $\mathcal{L}^{+} \left( t, x  \right)$ to be the positive part of  
$
 2\Re\left\langle C^{2} x, G \left( t \right) x\right\rangle 
+\sum_{\ell=1}^{\infty }\left\Vert C L_{\ell} \left( t \right) x \right\Vert ^{2} 
$
whenever $t \geq 0$ and $x \in  \mathcal{D}\left( C^2\right)$.
Assume that $\mathfrak{D}_{1}$ is a a core of $C^{2}$.
Then, 
Condition H2.3 holds if and only if:
 \begin{itemize}
  \item[(H2.3')] 
  For all $\zeta \in  L_{C}^{2}\left( \mathbb{P},\mathfrak{h}\right)$ satisfying 
  $\zeta \in \mathfrak{D}_{1}$ and $\left\Vert \zeta \right\Vert = 1$,
  the function 
 $$
 t \mapsto \mathbb{E} \left( \mathcal{L}^{+} \left( t, \zeta \right)  \right)
 $$
 is bounded on any interval $\left[ 0 , T \right]$,
 with $T>0$.
\end{itemize}
 \end{lemma}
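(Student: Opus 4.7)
The plan is to treat the two implications separately. The forward direction H2.3 $\Rightarrow$ H2.3' is a straightforward pointwise estimate, while the converse requires a randomization argument in the spirit of the uniform boundedness principle.

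For the easy direction, I note that since $\alpha$ is non-negative, the deterministic bound in H2.3 gives $\mathcal{L}^+(t, x) \leq \alpha(t) \|x\|_C^2$ for every $x \in \mathfrak{D}_1$. Applying this pointwise to $\zeta(\omega) \in \mathfrak{D}_1$ yields $\mathcal{L}^+(t, \zeta) \leq \alpha(t) \|\zeta\|_C^2$ almost surely. Taking expectations and exploiting the monotonicity of $\alpha$, I obtain $\mathbb{E}[\mathcal{L}^+(t, \zeta)] \leq \alpha(T)\, \mathbb{E}\|\zeta\|_C^2 < \infty$ uniformly on $[0, T]$ for each fixed $T$, using that $\zeta \in L_C^2(\mathbb{P}, \mathfrak{h})$.

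For the converse, I would argue by contradiction. Suppose H2.3 fails; then for some $T > 0$,
\[
\sup_{t \in [0,T]}\ \sup_{x \in \mathfrak{D}_1 \setminus \{0\}} \frac{\mathcal{L}^+(t, x)}{\|x\|_C^2} = +\infty.
\]
I can then extract sequences $t_n \in [0, T]$ and $x_n \in \mathfrak{D}_1 \setminus \{0\}$ such that $\mathcal{L}^+(t_n, x_n) \geq n^2 \|x_n\|_C^2$, and after scaling assume $\|x_n\| = 1$ (the ratio being scale-invariant). The next step is to concentrate these bad directions into a single random state: pick pairwise disjoint measurable events $A_n \subset \Omega$ with $\mathbb{P}(A_n) = p_n > 0$ chosen so that $p_n \|x_n\|_C^2 = c/n^{3/2}$ for a constant $c$ small enough that $\sum_n p_n \leq 1/2$, and define
\[
\zeta = \sum_{n \geq 1} x_n\, \mathbf{1}_{A_n} + e_0\, \mathbf{1}_{(\bigcup_n A_n)^c},
\]
for some fixed unit vector $e_0 \in \mathfrak{D}_1$. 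Then $\|\zeta\| = 1$ and $\zeta \in \mathfrak{D}_1$ almost surely, and $\mathbb{E}\|\zeta\|_C^2 \leq \sum_n c/n^{3/2} + \|e_0\|_C^2 < \infty$, so $\zeta \in L_C^2(\mathbb{P}, \mathfrak{h})$. Because $\mathcal{L}^+ \geq 0$, restricting to $A_n$ yields
\[
\mathbb{E}[\mathcal{L}^+(t_n, \zeta)] \geq p_n\, \mathcal{L}^+(t_n, x_n) \geq p_n n^2 \|x_n\|_C^2 = c\sqrt{n} \xrightarrow[n\to\infty]{} \infty,
\]
contradicting H2.3' on $[0, T]$.

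The main technical point is the calibration of the probabilities $p_n$: they must be small enough that $\sum_n p_n \|x_n\|_C^2$ converges yet large enough that the divergence rate $n^2$ in the negation of H2.3 survives the amplification. Any super-linear growth rate in the failure of H2.3 would give room for such a choice, so the $n^2$ extraction is the natural pick. A minor implicit point is that $(\Omega, \mathfrak{F}, \mathbb{P})$ must be rich enough to support disjoint events $A_n$ of prescribed probability, but this is automatic in the paper's setting since the space carries the cylindrical Wiener process $(W^\ell)_{\ell \geq 1}$.
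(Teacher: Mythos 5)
Your proof is correct and uses essentially the same device as the paper: a discrete random variable concentrated on the sequence of ``bad'' vectors, with probabilities calibrated (there $p_n\propto 1/(n^2(1+\Vert C y_n\Vert^2))$, here $p_n\Vert x_n\Vert_C^2 = c\,n^{-3/2}$) so that $\mathbb{E}\Vert\zeta\Vert_C^2<\infty$ while $\mathbb{E}\,\mathcal{L}^{+}(t_n,\zeta)\to\infty$, contradicting H2.3'. The only difference is cosmetic: the paper runs the randomization twice (first for fixed $t$ to get a finite $\beta(t)$, then over $t\in[0,T]$ to get uniformity, taking $\alpha(t)=\sup_{s\le t}\beta(s)$), whereas you merge both stages into one extraction from the double supremum, which is a legitimate streamlining of the same argument.
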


\begin{proof}
Deferred to Subsection \ref{subsec:lemaEquivH2.3}.
\end{proof}

Under Hypothesis \ref{hyp:CF-inequality} and Condition H3.1 below,
we can obtain the mean norm square conservation of  $X_t(\xi)$,
a crucial physical property of the quantum systems. 

\begin{condition}
\label{hyp:formal-conservativity} 
Let Hypothesis \ref{hyp:L-G-C-domain} hold
together with Condition H2.1.
Suppose that:
\begin{itemize}
\item[(H3.1)]  
For all  $t \geq 0$ and $x\in \mathcal{D}\left( C\right) $,
$
 2\Re\left\langle x, G \left( t \right) x\right\rangle 
+ \sum_{\ell=1}^{\infty }\left\Vert L_\ell \left( t \right) x\right\Vert ^{2} = 0.
$

\item[(H3.2)] 
For any initial datum $\xi$ belonging to $L_{C}^{2}\left( \mathbb{P},\mathfrak{h}\right)$,
(\ref{eq:SSE}) has a unique strong $C$-solution on any bounded interval.

\end{itemize}
\end{condition}

\begin{theorem}
\label{teorema2}
Assume that Hypothesis \ref{hyp:formal-conservativity} holds, 
together with $\xi\in L_{C}^{2}\left(\mathbb{P};\mathfrak{h}\right)$.
Then  $\left(  \left\Vert  X_{t}\left( \xi \right)  \right\Vert ^{2}\right) _{t} $ 
is a martingale. In particular $ \mathbb{E}\left\Vert X_t(\xi)\right\Vert^2 =\left\Vert \xi\right\Vert^2$ 
for all $t \geq 0$.
\end{theorem}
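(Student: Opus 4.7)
My plan is to apply Itô's formula to $\|X_t(\xi)\|^2$, exploit (H3.1) to cancel the drift, and upgrade the resulting non-negative continuous local martingale to a true martingale by establishing constancy of its expectation.

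First I would apply the infinite-dimensional Itô formula to $f(x)=\|x\|^2$ along the strong $C$-solution $X_{\cdot}(\xi)$. The regularity $X_s(\xi)\in\mathcal D(C)\subset\mathcal D(G(s))\cap\mathcal D(L_\ell(s))$ from Hypothesis \ref{hyp:L-G-C-domain} makes all coefficients pointwise defined, while the growth bounds (H2.1)--(H2.2) combined with $\sup_{s\leq t}\mathbb E\|CX_s(\xi)\|^2<\infty$ from Theorem \ref{teorema1} supply the required integrability. Itô yields
\begin{equation*}
\|X_t(\xi)\|^2 = \|\xi\|^2 + \int_0^t\Big(2\Re\langle X_s,G(s)X_s\rangle+\sum_{\ell=1}^\infty\|L_\ell(s)X_s\|^2\Big)\,ds + 2\sum_{\ell=1}^\infty\int_0^t\Re\langle X_s,L_\ell(s)X_s\rangle\,dW_s^\ell.
\end{equation*}
Applying (H3.1) pointwise to $X_s(\omega)\in\mathcal D(C)$ cancels the $ds$-integrand and gives $\|X_t(\xi)\|^2=\|\xi\|^2+M_t$, where $M_t:=2\sum_\ell\int_0^t\Re\langle X_s,L_\ell(s)X_s\rangle\,dW_s^\ell$ is a continuous local martingale. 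Since $\|X_t(\xi)\|^2\geq 0$, $M$ is bounded below by $-\|\xi\|^2$, so $\|X_t(\xi)\|^2$ is a non-negative continuous supermartingale and is a true martingale if and only if its expectation is constant in time.

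Next I would produce a stopped identity by localization: set $\tau_n:=\inf\{s\geq 0:\|X_s(\xi)\|>n\}$, a stopping time by continuity with $\tau_n\nearrow\infty$ a.s. Cauchy--Schwarz, (H3.1) and (H2.1) yield
\begin{equation*}
\sum_\ell\big|\Re\langle X_s,L_\ell(s)X_s\rangle\big|^2 \leq \|X_s\|^2\sum_\ell\|L_\ell(s)X_s\|^2 = -2\|X_s\|^2\Re\langle X_s,G(s)X_s\rangle \leq 2\sqrt{K(s)}\,\|X_s\|^3\,\|X_s\|_C,
\end{equation*}
which on $[0,t\wedge\tau_n]$ is majorized by $2\sqrt{K(t)}\,n^3\,\|X_s\|_C$. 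Since $\int_0^t\mathbb E\|X_s\|_C\,ds<\infty$ by Theorem \ref{teorema1}, the stopped process $M_{\cdot\wedge\tau_n}$ is a genuine $L^2$-martingale, and therefore $\mathbb E\|X_{t\wedge\tau_n}(\xi)\|^2=\mathbb E\|\xi\|^2$ for every $n$.

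The final and main technical step is the passage $n\to\infty$. By path-continuity $\|X_{\tau_n}\|=n$ on $\{\tau_n<\infty\}$, whence
\begin{equation*}
\mathbb E\|\xi\|^2 = \mathbb E\big[\|X_t(\xi)\|^2\,\mathbf 1_{\tau_n>t}\big] + n^2\,\mathbb P(\tau_n\leq t),
\end{equation*}
and monotone convergence sends the first term to $\mathbb E\|X_t(\xi)\|^2$, so conservation reduces to proving $n^2\,\mathbb P(\tau_n\leq t)\to 0$---the principal obstacle, as the supermartingale maximal inequality yields only boundedness $n^2\,\mathbb P(\tau_n\leq t)\leq\mathbb E\|\xi\|^2$. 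I would overcome it by first reducing to initial data bounded in graph norm through $\xi_k:=\xi\,\mathbf 1_{\|\xi\|_C\leq k}$; linearity of (\ref{eq:SSE}) together with the contraction $\mathbb E\|X_t(\xi)-X_t(\xi_k)\|^2\leq\mathbb E\|\xi-\xi_k\|^2\to 0$ confines attention to the case $\|\xi\|_C\leq k$ a.s. For such $\xi$ I would invoke the bounded-operator approximation underlying Theorem \ref{teorema1}: surrogate operators $(G_n,L_{\ell,n})$ chosen to preserve (H3.1) (namely $G_n=-iH_n-\tfrac12\sum_\ell L_{\ell,n}^*L_{\ell,n}$ with bounded self-adjoint $H_n$ and bounded $L_{\ell,n}$) yield approximating solutions $X^n_t$ whose norm is exactly conserved, $\mathbb E\|X^n_t\|^2=\mathbb E\|\xi_k\|^2$, by a direct Itô computation in the bounded regime; the $L^2(\mathbb P,\mathfrak h)$-convergence $X^n_t\to X_t(\xi_k)$ then transfers the conservation to the limit. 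Combining both reductions gives $\mathbb E\|X_t(\xi)\|^2=\mathbb E\|\xi\|^2$, and a non-negative continuous local martingale with constant expectation is necessarily a true martingale.
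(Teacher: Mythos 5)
Your computation is sound up to and including the stopped identity $\mathbb{E}\left\Vert X_{t\wedge\tau_n}(\xi)\right\Vert^2=\mathbb{E}\left\Vert\xi\right\Vert^2$: the It\^o step, the use of H3.1 to cancel the drift, and the bound $\sum_\ell\left|\Re\left\langle X_s,L_\ell(s)X_s\right\rangle\right|^2\le 2\sqrt{K(s)}\left\Vert X_s\right\Vert^3\left\Vert X_s\right\Vert_C$ are exactly what is needed (this part coincides with the paper's argument). The gap is in your passage $n\to\infty$. Having reduced the problem to $n^2\,\mathbb{P}(\tau_n\le t)\to0$, you propose truncating the initial datum and then approximating by a surrogate equation with bounded coefficients chosen so that H3.1 holds exactly, claiming that exact norm conservation for the surrogate solutions ``transfers to the limit'' by $L^2(\mathbb{P},\mathfrak{h})$-convergence. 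That transfer is precisely what is not available: in the resolvent approximation behind Theorem \ref{teorema1} the approximants converge to the solution only \emph{weakly} in $L^2(\mathbb{P},\mathfrak{h})$ (Lemma \ref{lema30}), and weak convergence yields only $\mathbb{E}\left\Vert X_t\right\Vert^2\le\liminf_n\mathbb{E}\left\Vert X^n_t\right\Vert^2$, i.e.\ the inequality already contained in the definition of a strong $C$-solution, not the missing lower bound. Moreover, the paper's surrogates $G^n=\widetilde R_nG\widetilde R_n$, $L^n_\ell=L_\ell\widetilde R_n$ retain only the first $n$ noise terms, so they are strictly dissipative rather than conservative, since $2\Re\left\langle x,G^n x\right\rangle+\sum_{\ell\le n}\left\Vert L^n_\ell x\right\Vert^2=-\sum_{\ell>n}\left\Vert L_\ell\widetilde R_n x\right\Vert^2\le0$; you give no construction of bounded $H_n,L_{\ell,n}$ that is both exactly conservative and accompanied by \emph{strong} $L^2$-convergence of the corresponding solutions (note also that H3.1 is only a quadratic-form identity, so no operator-level decomposition $G(t)=-iH(t)-\tfrac12\sum_\ell L_\ell(t)^*L_\ell(t)$ is guaranteed). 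A smaller point: your contraction $\mathbb{E}\left\Vert X_t(\xi)-X_t(\xi_k)\right\Vert^2\le\mathbb{E}\left\Vert\xi-\xi_k\right\Vert^2$ is not an immediate consequence of uniqueness (the norm bound is part of the definition of strong $C$-solution), though it can be obtained by rerunning the It\^o/Fatou step for the difference using Lemma \ref{lema5}.

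The step you struggled with is in fact short: since $X(\xi)$ is an It\^o process whose drift and diffusion coefficients are square-integrable with respect to $dt\otimes\mathbb{P}$ on $[0,T]$ (by H2.1 and, via H3.1, $\sum_\ell\left\Vert L_\ell(s)X_s\right\Vert^2\le 2\sqrt{K(s)}\left\Vert X_s\right\Vert\left\Vert X_s\right\Vert_C$, combined with $\sup_{s\le T}\mathbb{E}\left\Vert X_s\right\Vert_C^2<\infty$), Doob's maximal inequality gives $\mathbb{E}\sup_{s\in[0,T]}\left\Vert X_s(\xi)\right\Vert^2<\infty$; the paper invokes Th.\ 4.2.5 of \cite{Prevot} for this. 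Dominated convergence then yields $\mathbb{E}\left\Vert X_{t\wedge\tau_n}(\xi)\right\Vert^2\to\mathbb{E}\left\Vert X_t(\xi)\right\Vert^2$ (equivalently, $n^2\,\mathbb{P}(\tau_n\le t)\le\mathbb{E}\bigl[\sup_{s\le t}\left\Vert X_s\right\Vert^2\mathbf{1}_{\{\sup_{s\le t}\left\Vert X_s\right\Vert>n\}}\bigr]\to0$), so the expectation is constant, and the nonnegative supermartingale obtained from your Fatou step is a true martingale. If you replace your approximation scheme by this uniform-integrability argument, your proof becomes complete and is then essentially the paper's proof.
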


\begin{proof}
Deferred to Subsection \ref{subsec:teorema2}.
\end{proof}

\begin{remark}
Condition H3.1 is a quadratic form version of (\ref{eq:G(t)}). It 
arises from physical situations where we can expect that the solutions of the quantum master equations have trace $1$ at any time. Nevertheless,  (\ref{eq:G(t)}) is not a sufficient 
condition for a minimal quantum dynamical semigroup to 
be identity preserving 
(see, e.g., \cite{Fagnola}).
\end{remark}

\begin{remark}
Hypothesis \ref{hyp:CF-inequality}, together with Condition H3.1,  
constitutes a generalized version of  non-explosion criteria 
used to guarantee the conservation of the probability mass 
of minimal quantum dynamical semigroups
(see, e.g., \cite{ChebFagn2,ChebGarQue98,Fagnola}). 
This can be verified in a wide range of applications.
\end{remark}

\begin{remark}
The operator $C$  in Theorem \ref{teorema1} plays the  role  of superharmonic (or excessive) functions
in the Lyapunov condition for non-explosion of classical minimal Markov processes.
For simplicity,
suppose that $G \left( t \right)$ and $L_{\ell} \left( t \right)$ are time-independent. 
In this case 
Condition H2.3 of Hypothesis \ref{hyp:CF-inequality}
formally reads as
$$
\mathcal{L} \left( C^2 \right)
\leq
\alpha \left( C^2 + I \right) ,
$$
where $\alpha > 0$
and 
$
\mathcal{L} \left( X \right)
: =
G^{*} X + X G + \sum_{\ell = 1}^{\infty} L_{\ell}^{*} X L_{\ell}
$. Here 
$\mathcal{L}  \left( X \right)$
represents the infinitesimal generator of the Markov process $X_t$
applied to the function $x \mapsto \langle x , X x \rangle$.
Actually, we can choose $C$ satisfying
$
\mathcal{L} \left( C^2 \right)
\leq
\alpha C^2
$,
hence
$$
\frac{ d } {d t} \exp \left( - \alpha t \right) C^2  
+ 
\mathcal{L} \left(  \exp \left( - \alpha t \right) C^2   \right) \leq 0 .
$$
Thus,
$
\phi \left( t, x \right)
:=
\exp \left( - \alpha t \right) \left\|  C  x \right\|^2
$
is, roughly speaking, an  $\alpha$-excessive function.
Therefore,
applying formally  It\^o's formula we obtain that 
$ \exp \left( - \alpha t \right) \left\|  C  X_t \right\|^2 $ is a supermartingale.
Heuristically, 
$\phi$ helps us to prove that $X_t$ 
does not escape from the domain of $C$,
like the existence of superharmonic functions prevents finite explosion times
in classical Markov processes. 
\end{remark}

\subsection{Non-linear stochastic Schr\"{o}dinger equations}
\label{sec:nonlinear_SSE}

Using the linear stochastic Schr\"{o}\-dinger equation (\ref{eq:SSE}),
\cite{Barchielli1} construct  a weak probabilistic solution of (\ref{eq:nlSSE})
provided that  $G$ and $L_{1},L_{2},\ldots$ are bounded operators;
they actually considered driven noises with jumps  in place of some $W^\ell$.
In the case where $\mathfrak{h}$ is finite-dimensional 
and at most a finite number of $L_{k}$
are different from $0$,
the existence and uniqueness of the strong solution of (\ref{eq:nlSSE})
was obtained in Lemma 5 of  \cite{MoraAAP2005} 
by classical methods for stochastic differential equations with locally Lipschitz coefficients,
see also \cite{Barchielli,Pellegrini2008,Pellegrini2010}.

\cite{Gatarek1991} established the existence and  pathwise uniqueness of solutions of (\ref{eq:nlSSE})
in the following two examples: 
\begin{itemize}
 \item $H=0$, $L_{1}$ self-adjoint and  $L_{\ell}=0$ for all $\ell \geq2$.
 
 \item Let $\mathfrak{h} = L^{2}\left(  \mathbb{R},\mathbb{C}\right)$.
Choose $H=- \Delta $, $L_{1} f \left( x \right) = x f \left( x \right)$,
and 
 $L_{2}= L_{3}= \cdots = 0$.
\end{itemize}
To handle the  uniqueness property,
\cite{Gatarek1991}  used strongly that $L_{1}$ is a self-adjoint operator.
\cite{MoraReAAP} obtained  the existence and weak uniqueness of regular solutions to (\ref{eq:nlSSE})
under the assumptions of \cite{MoraReIDAQP},
which were discussed in Section \ref{sec:LinearSSE1}.
In the preparation of this paper,
we verified that 
applying the same arguments of the proof of Theorem 1 of  \cite{MoraReAAP} 
we can prove Theorem \ref{teorema7},
asserting 
the existence and uniqueness of solutions to 
the non-linear stochastic Schr\"{o}dinger equation  (\ref{eq:nlSSE})
under Hypothesis \ref{hyp:formal-conservativity}.
We thus get that
Theorem \ref{teorema1} provides a sufficient condition
for the existence and uniqueness of  weak (in the probabilistic sense) regular solution to (\ref{eq:nlSSE}).

\begin{definition}
Let $C$ satisfy Hypothesis \ref{hyp:L-G-C-domain}. 
Suppose that $\mathbb{I}$ is either $\left[ 0,+\infty\right[ $ 
or $\left[ 0,r\right] $ with  $r\in\mathbb{R}_+ $. 
We say that 
$\left( \Omega ,\mathfrak{F},\left( \mathfrak{F}_{t}\right) _{t\in\mathbb{I}},\mathbb{Q},
\left( Y_{t}\right) _{t\in\mathbb{I}},\left( W_{t}^{\ell}\right) _{t\in\mathbb{I}}^{\ell\in\mathbb{N}}\right) $
is a solution of class $C$ of (\ref{eq:nlSSE}) with initial 
distribution $\theta$ on the interval $\mathbb{I}$ if and only if:
\begin{itemize}
\item $\left( W^{\ell} \right)_{\ell\ge 1}$ is a sequence of real valued independent Brownian motions on
the filtered complete probability space 
$\left( \Omega,\mathfrak{F},\left( 
\mathfrak{F}_{t}\right) _{t\in\mathbb{I}},\mathbb{Q}\right) $.

\item $\left( Y_{t}\right) _{t\in\mathbb{I}}$ is an 
$\mathfrak{h}$-valued process with continuous sample paths 
such that the law of $Y_{0}$ coincides with $\theta$ 
and $\mathbb{Q}\left( \left\Vert Y_{t}\right\Vert =1 \; 
\mathrm{for}\; \mathrm{all}\; t\in\mathbb{I}\right) =1$.
Moreover, for every $t\in\mathbb{I}$, 
$Y_{t}\in \mathcal{D}\left( C\right)$ $\mathbb{Q}$-$a.s.$ 
and $\sup_{s\in\left[ 0,t\right] }\mathbb{E}_{\mathbb{Q}}\left\Vert CY_{s}\right\Vert ^{2}<\infty$.

\item $\mathbb{Q}$-$a.s.$, for all $t\in\mathbb{I}$,
\begin{eqnarray*}
 Y_{t}  
& = & 
 Y_{0}  +  \int_{0}^{t}   G  \left(s,  \pi_{C}\left( Y_{s} \right) \right) ds
 \\
 &&
+
 \sum_{\ell=1}^{\infty} \int_{0}^{t} 
\left( L_{\ell} \left(s \right) \pi_{C}\left(  Y_{s} \right) 
- \Re \left\langle  Y_{s} , 
L_{\ell}  \left(s \right)  \pi_{C}\left(  Y_{s} \right) \right\rangle  Y_{s} \right)  dW_{s}^{\ell}.
\end{eqnarray*}
 \end{itemize}

We shall say, for short, that 
$\left( \mathbb{Q}, \left( Y_{t}\right)_{t\in\mathbb{I}},\left( W_{t}\right)_{t\in\mathbb{I}}\right)$ 
is a $C$-solution of (\ref{eq:nlSSE}).
\end{definition}

\begin{theorem}
\label{teorema7} 
Let $C$ satisfy Hypothesis \ref{hyp:formal-conservativity}. 
Assume that $\theta$ is a probability measure on $\mathfrak{h}$ concentrated on 
$\mathcal{D}(C)\cap\left\{ y\in\mathfrak{h} :\left\Vert y\right\Vert =1\right\} $ 
such that 
$\int_{\mathfrak{h}}\left\Vert Cx\right\Vert ^{2}\theta\left( dx\right) < \infty$. 
Then (\ref{eq:nlSSE}) has a unique $C$-solution 
$\left(\mathbb{Q},\left( Y_{t}\right) _{t\geq 0},\left( W_{t}\right)_{t\geq0}\right)$ 
with initial law $\theta$.
\end{theorem}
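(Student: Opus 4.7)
The plan is to deduce Theorem \ref{teorema7} from Theorems \ref{teorema1} and \ref{teorema2} via the Girsanov correspondence between the linear equation (\ref{eq:SSE}) and the nonlinear equation (\ref{eq:nlSSE}), following the blueprint of Theorem 1 of \cite{MoraReAAP}. The key observation is that, under Hypothesis \ref{hyp:formal-conservativity}, the squared norm of the unique strong $C$-solution of (\ref{eq:SSE}) with unit-norm initial datum is a strictly positive mean-one martingale, and so serves as the Radon-Nikodym density of a change of measure that converts $X_t/\|X_t\|$ into a solution of (\ref{eq:nlSSE}).

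For \emph{existence}, I would work on an auxiliary filtered probability space $(\Omega,\mathfrak{F},(\mathfrak{F}_t),\mathbb{P})$ carrying an $\mathfrak{F}_0$-measurable $\xi$ with law $\theta$ and independent real-valued Brownian motions $(W^{\ell})_{\ell\ge1}$. Hypothesis H3.2 supplies the unique strong $C$-solution $X_t(\xi)$ of (\ref{eq:SSE}), and Theorem \ref{teorema2} shows $Z_t:=\|X_t(\xi)\|^2$ is a martingale with $\mathbb{E} Z_t=1$. Setting $Y_t:=X_t(\xi)/\sqrt{Z_t}$ (well-defined because $Z$ is a strictly positive martingale) and $d\mathbb{Q}|_{\mathfrak{F}_T}:=Z_T\,d\mathbb{P}|_{\mathfrak{F}_T}$, Itô's formula applied to $\|X\|^2$ together with H3.1 identifies $Z_t$ with the Doléans-Dade exponential of $M_t=2\sum_{\ell}\int_0^t\Re\langle Y_s,L_\ell(s)Y_s\rangle\,dW^{\ell}_s$. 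Girsanov's theorem then produces independent $\mathbb{Q}$-Brownian motions
$$
\widetilde{W}^{\ell}_t \;=\; W^{\ell}_t \;-\; \int_{0}^{t} 2\,\Re\!\left\langle Y_s,\, L_\ell(s)\,Y_s\right\rangle ds,
$$
and applying Itô's formula to $X_t(\xi)/\sqrt{Z_t}$ shows that $(Y_t)$ satisfies (\ref{eq:nlSSE}) driven by $(\widetilde{W}^{\ell})$. The $C$-regularity requirement transfers automatically since $CY_s=CX_s(\xi)/\sqrt{Z_s}$ gives
$$
\mathbb{E}_{\mathbb{Q}}\|CY_s\|^{2}=\mathbb{E}_{\mathbb{P}}\left[Z_s\,\|CX_s(\xi)\|^{2}/Z_s\right]=\mathbb{E}_{\mathbb{P}}\|CX_s(\xi)\|^{2},
$$
and the right-hand side is bounded uniformly on $[0,T]$ by the a priori estimate in Theorem \ref{teorema1}.

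For \emph{uniqueness}, I would reverse the construction: given any $C$-solution $(\mathbb{Q},(Y_t),(W^{\ell}))$ of (\ref{eq:nlSSE}) with initial law $\theta$, define the positive local martingale $\mathcal{E}_t$ equal to the Doléans-Dade exponential of $2\sum_{\ell}\int_0^t\Re\langle Y_s,L_\ell(s)Y_s\rangle\,dW^{\ell}_s$, put $X_t:=\sqrt{\mathcal{E}_t}\,Y_t$, and change measure by $\mathcal{E}_T$. Itô's formula shows that $X_t$ solves (\ref{eq:SSE}) driven by $\widehat{W}^{\ell}_t:=W^{\ell}_t-\int_0^t 2\Re\langle Y_s,L_\ell(s)Y_s\rangle\,ds$, which become independent Brownian motions under the new measure; moreover $X_t$ is $C$-regular because $\|Y_t\|=1$ and $\|CX_t\|=\sqrt{\mathcal{E}_t}\|CY_t\|$. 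Hypothesis H3.2 then forces the law of $X$ on canonical path space to coincide with the one obtained in the existence step, and hence the law of $Y_t=X_t/\|X_t\|$ is uniquely determined by $\theta$.

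The main obstacle is to make the Girsanov transform fully rigorous in the presence of the cylindrical noise and the unbounded operators $L_\ell(s)$. Three points require care: (i) the candidate density $\mathcal{E}_T$ is a priori only a positive local martingale, so its martingale property must be obtained by localization using the bound $\sup_{s\le T}\mathbb{E}\|CX_s(\xi)\|^{2}<\infty$ of Theorem \ref{teorema1}; (ii) the interchange of the infinite sum over $\ell$ with stochastic integrals in the identification $Z_t=\mathcal{E}(M)_t$ requires the control $\sum_{\ell}\|L_\ell(s)x\|^{2}\le K(s)\|x\|_C^{2}$, which follows from H3.1 combined with H2.1 via $\sum_{\ell}\|L_\ell(s)x\|^{2}=-2\Re\langle x,G(s)x\rangle\le 2\|x\|\,\|G(s)x\|$; and (iii) the $\mathbb{Q}$-a.s.\ constancy $\|Y_t\|=1$ in the definition of $C$-solution has to be verified directly from the Itô expansion of $\|Y\|^{2}$ against (\ref{eq:nlSSE}). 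Once these estimates are in place, exactly as carried out in \cite{MoraReAAP}, both existence and uniqueness follow from the Girsanov correspondence.
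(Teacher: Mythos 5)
Your proposal follows essentially the same route as the paper: the paper's proof simply invokes Theorem \ref{teorema2} (the martingale property of $\left\Vert X_{t}\left( \xi \right)\right\Vert ^{2}$, which requires H3.1 and H3.2) to justify running the Girsanov correspondence of Theorem 1 and Proposition 1 of \cite{MoraReAAP}, i.e.\ exactly the change of measure $\mathbb{Q}=\left\Vert X_{T}\left( \xi\right)\right\Vert ^{2}\cdot\mathbb{P}$ and normalization $Y_t = X_t(\xi)/\left\Vert X_t(\xi)\right\Vert$ that you describe, with uniqueness obtained by reversing the transform and appealing to H3.2. The only small caveat is that $\left\Vert X_t(\xi)\right\Vert^2$ need not be strictly positive under $\mathbb{P}$; one should define $Y_t$ to be $0$ on $\{X_t(\xi)=0\}$ (as in the paper's remark) and note that this event is $\mathbb{Q}$-null, rather than claim strict positivity of the density martingale.
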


\begin{proof}
Theorem  \ref{teorema2} allows us to use arguments  
of Theorem 1 in  \cite{MoraReAAP} to show our statement. 
\end{proof}

\begin{remark}
 Let the assumptions of Theorem \ref{teorema7} hold,
and let $\left( X_{t}\left( \xi\right) \right) _{t\geq0}$ 
be the strong $C$-solution of (\ref{eq:SSE}), 
where $\xi$ 
is distributed according to $\theta$. 
For a given $T\in\left] 0,+\infty \right[ $,
we define 
$
\mathbb{Q}=\left\Vert X_{T}\left( \xi\right)
\right\Vert ^{2}\cdot\mathbb{P}
$,
$$
B_{t}^{\ell} = W_{t}^{\ell}
-\int_{0}^{t}\frac{1}{\left\Vert X_{s}\left(
\xi\right) \right\Vert ^{2}}d\left[ W^{\ell}, 
\left\| X\left( \xi\right) \right\| ^2 \right]_{s} ,
$$
and 
\[
Y_{t}=\left\{ 
\begin{array}{ll}
X_{t}\left( \xi\right) /\left\Vert X_{t}\left( \xi\right)
\right\Vert , & \;\mathrm{if } \; X_{t} \left( \xi\right) \neq0 \\ 
0, & \;\mathrm{if } \; X_{t}\left( \xi\right) =0%
\end{array}
\right. ,
\]
where $t\in\left[ 0,T\right] $ and $\ell \in\mathbb{N}$. 
By Theorem \ref{teorema2}, 
proceeding along the same lines as in the proof of 
Proposition 1 of  \cite{MoraReAAP}
we can obtain that
$$
\left( \Omega,\mathfrak{F},\left( \mathfrak{F}_{t}\right) _{t\in\left[ 0,T\right] },\mathbb{Q},\left( Y_{t}\right) _{t\in\left[ 0,T\right] },\left( B_{t}^{\ell}\right) _{t\in\left[ 0,T\right] }^{\ell\in \mathbb{N}}\right) 
$$
is a $C$-solution of (\ref{eq:nlSSE}) with initial distribution $\theta$.
\end{remark}

\section{Open quantum systems in coordinate representation} 
\label{subsec:oqs-coord}
We now focus on the model given by  (\ref{eq:Hex}) and (\ref{eq:Lex}),
with the functions $ \sigma_{\ell h}$ satisfying 
\begin{equation}\label{4.2}
\sum_{\ell\ge 1}\sigma_{\ell k}\left(t,x\right)
(\partial_j \overline{\sigma}_{\ell h})(t,x)
=\sum_{\ell\ge 1}\overline{\sigma}_{\ell k}\left(t,x\right)
(\partial_j  \sigma_{\ell h})(t,x)
\end{equation}
for all $j,h,k$. 
It is worth noticing that (\ref{4.2}) obviously holds when functions 
$\sigma_{\ell k}$ do not depend on $x$ and also when they are 
real valued or can be transformed into real valued functions by 
a suitable change of phase. 
A counterexample due to \cite{FaLPM} shows that mean norm square 
conservation may fail when (\ref{4.2}) does not hold and phases of 
$\sigma_{\ell k}$ depend on the space variable $x$.
We next collect our smoothness assumptions on the functions involved in  (\ref{eq:Hex}) and (\ref{eq:Lex}).

\begin{condition}\label{hyp:growth-cond}
Let $L_\ell(t)$ be the operator (\ref{eq:Lex}) 
and assume that (\ref{4.2}) holds. 
For all  $ t \geq 0$,
define 
$
 G \left( t \right) =-iH  \left( t \right)-\frac{1}{2} \sum_{\ell=1}^{m} L_\ell^*\left(t\right) L_\ell\left(t\right)
$,
where $H  \left( t \right)$ is as in (\ref{eq:Hex}).
Suppose that there exists a continuous increasing 
function  $K:[0,+\infty[\to]0,+\infty[$ such that:
\begin{itemize}
 \item[(H4.1)] 
 For all  $t \geq 0$ and $1\le j \le d$,
 $ V  \left(t , \cdot \right) \in C^{2}\left( \mathbb{R}^{d} ,\mathbb{R}\right)$, 
$A^j \left(t , \cdot \right)  \in C^{3}\left( \mathbb{R}^{d} ,\mathbb{R}\right)$.
Moreover,
$
\max \left\{  
\left| V \left( t, x  \right) \right| , \left| {\Delta}V \left( t, x  \right) \right|,  \left|\partial_j (\Delta A^j) \right|\right\}
\leq 
K \left( t \right) \left( 1+ \left| x \right| ^{2} \right)
$,
\[
 \max \left\{  
 \left| \partial_j V \left( t, x  \right) \right|,   \left|A^j\left(t,x\right)\right|, \left|(\partial_{j^\prime}\partial_j A^j)(t,x)\right| \right\}
 \le 
 K \left( t \right) \left( 1+ \left| x \right| \right)
\]
and 
$
 \left|\partial_{j^\prime} A^j\left(t,x\right)\right|   
 \leq 
 K \left( t \right)
$,
where 
$t \geq 0$,  $x \in \mathbb{R}^{d}$ and $1\le j,j^\prime\le d$.

\item[(H4.2)] 
For all $1\le \ell\le  m$ and $t \geq 0$ we have
 $\left|\sigma_{\ell k} \left(t , \cdot \right) \right| \leq K(t)$, with $1 \le k \le d$,
$\eta_{\ell} \left(t , \cdot \right) \in C^3\left( \mathbb{R}^d ,\mathbb{C}\right)$ 
 and 
 the absolute values of all the partial derivatives of  $\eta_{\ell} \left(t , \cdot \right)$
 from the first up to the third order are bounded by $K(t)$.
 Moreover, 
 at least one of the following conditions holds:
 
\subitem{\rm (H4.2.a)}
For all $1\le \ell\le  m$, $1 \le k \le d$ and $t \geq 0$
we have 
 $\left| \eta_{\ell} \left(t , \cdot \right) \right| \leq K(t)$,
 $\sigma_{\ell k}\left(t , \cdot \right) \in C^3\left( \mathbb{R}^d ,\mathbb{C}\right)$,
and 
the absolute values of all partial derivatives of 
$\sigma_{\ell k}\left(t , \cdot \right)$ up to the third order are dominated by $K(t)$.

\subitem{\rm (H4.2.b)}
For any $1\le \ell\le  m$ and $1 \le k \le d$, 
the function $\left( t, x \right) \mapsto \sigma_{\ell k} \left( t, x \right)$ does not depend on $x$
and $\left| \eta_{\ell} \left(t , 0 \right) \right| \leq K(t)$.
\end{itemize}
\end{condition}

Note that condition (H4.2.b) allows linear growth in $x$ of $\eta(t,x)$ while (H4.2.a) does not. 
Theorems \ref{teorema1} and \ref{teorema2} help us to establish the following result.

\begin{theorem}
\label{teo:Smooth} 
Suppose that Hypothesis \ref{hyp:growth-cond} holds and 
set $  C = - {\Delta}+ \left| x \right| ^{2} $.
Let $\xi$ be a $\mathfrak{F}_0$ - measurable random variable 
taking values in 
$L^{2}\left( \mathbb{R}^{d} ,\mathbb{C}\right) $
such that 
$\mathbb{E} \left\Vert  \xi  \right\Vert ^{2} =1$ and 
$\mathbb{E} \left\Vert  C \xi  \right\Vert ^{2}  < \infty$. 
Then (\ref{eq:SSE}) has a unique strong $C$-solution 
with initial datum $\xi$.
Moreover, $\mathbb{E} \left\Vert  X_{t}\left( \xi \right)  \right\Vert ^{2} 
= \left\Vert  \xi   \right\Vert ^{2}$ for all $t > 0$.
If in addition $\left\| \xi \right| = 1$ a.s.,
then
(\ref{eq:nlSSE}) has a unique $C$-solution
whose initial distribution coincides with that of $\xi$.
\end{theorem}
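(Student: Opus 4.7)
The plan is to deduce Theorem \ref{teo:Smooth} directly from Theorems \ref{teorema1}, \ref{teorema2} and \ref{teorema7} applied with the reference operator $C=-\Delta+|x|^2$. Thus the whole work consists in verifying that, under Hypothesis \ref{hyp:growth-cond}, the coefficients (\ref{eq:Hex})--(\ref{eq:Lex}) satisfy Hypotheses \ref{hyp:L-G-C-domain}, \ref{hyp:CF-inequality} and \ref{hyp:formal-conservativity}. Once this is done, the first assertion follows from Theorem \ref{teorema1}, norm conservation from Theorem \ref{teorema2}, and the well-posedness of (\ref{eq:nlSSE}) from Theorem \ref{teorema7} (which requires exactly Hypothesis \ref{hyp:formal-conservativity}, already in place).

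The measurability conditions (H1.1)--(H1.2) are easy: thanks to Hypothesis \ref{hyp:growth-cond}, each $L_\ell(t)$ and each term forming $G(t)$ can be written as a finite combination of the basic building blocks $\partial_j$, $\partial_j\partial_k$, and multiplication by the smooth functions $V,A^j,\partial_j A^j,\eta_\ell,\sigma_{\ell k},\ldots$, with continuous time dependence via the bound $K(t)$. Each such block maps $(\mathcal{D}(C),\|\cdot\|_C)$ continuously into $\mathfrak{h}$, since both $|x|^2$ and $-\Delta$ are dominated by $C$; hence Lemma \ref{lema:Measurability:g} applies. The linear estimates (H2.1)--(H2.2) are then straightforward consequences of the same decomposition together with the polynomial growth bounds on $V$ and $A^j$ provided by (H4.1) and the boundedness in (H4.2); one simply checks $\|\partial_j x\|^2+\| |x|\,x\|^2+\|Vx\|^2+\cdots\lesssim\|x\|_C^2$.

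The main obstacle is the coercivity estimate (H2.3), which controls $C^2=(-\Delta+|x|^2)^2$. Choosing $\mathfrak{D}_1$ to be the Schwartz space (a core of $C^2$), one expands
\begin{equation*}
2\Re\langle C^2x,G(t)x\rangle+\sum_{\ell=1}^m\|CL_\ell(t)x\|^2
=\sum_\ell\bigl(\|[C,L_\ell(t)]x\|^2+2\Re\langle L_\ell(t)x,[C,L_\ell(t)]Cx\rangle\bigr)+R(t,x),
\end{equation*}
where $R(t,x)$ gathers all contributions of the Hamiltonian and of the formal identity in (H3.1). The problematic cross terms between the magnetic-field part $A^j\partial_j+\partial_jA^j$ of $H(t)$ and the first-order part of $\sum_\ell L_\ell^*L_\ell$ cancel thanks to the structural condition (\ref{4.2}); the remaining pieces, namely double commutators $[C,[C,\cdot]]$ and purely second-order terms, are handled by (H4.1)--(H4.2), producing bounds of the form $K(t)(\|Cx\|^2+\|x\|^2)=K(t)\|x\|_C^2$. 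In the linear-growth case (H4.2.b), the $x$-independence of $\sigma_{\ell k}$ is exactly what keeps all commutators with $|x|^2$ finite-order and bounded in $\|\cdot\|_C$. As a sanity check, one can instead verify the equivalent random-variable formulation (H2.3$^{\prime}$) given by Lemma \ref{lema:EquivH2.3}.

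Condition (H2.4) and the stronger (H3.1) are the formal Lindblad identity: an explicit computation with $G(t)=-iH(t)-\tfrac12\sum_\ell L_\ell(t)^*L_\ell(t)$ on $C_c^\infty(\mathbb{R}^d,\mathbb{C})$ gives $2\Re\langle x,G(t)x\rangle+\sum_\ell\|L_\ell(t)x\|^2=0$, since $H(t)$ is symmetric on this core, completing the verification of Hypothesis \ref{hyp:formal-conservativity}. Extending this equality from the core to $\mathcal{D}(C)$ uses the bounds (H2.1)--(H2.2) already obtained. With all the hypotheses verified, Theorem \ref{teorema1} yields the unique strong $C$-solution, Theorem \ref{teorema2} yields $\mathbb{E}\|X_t(\xi)\|^2=\|\xi\|^2$, and under $\|\xi\|=1$ a.s. Theorem \ref{teorema7} produces the unique $C$-solution of the associated non-linear equation, completing the proof.
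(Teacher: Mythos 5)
Your overall route coincides with the paper's: verify Hypotheses \ref{hyp:L-G-C-domain}, \ref{hyp:CF-inequality}, \ref{hyp:formal-conservativity} for $C=-\Delta+|x|^2$ and then invoke Theorems \ref{teorema1}, \ref{teorema2} and \ref{teorema7}; your treatment of (H2.1), (H2.2) and (H3.1) is fine in outline. The gap is in (H2.3), which is precisely where the real work lies. Your displayed expansion is not an identity: writing $CL_\ell=L_\ell C+[C,L_\ell]$ and cancelling against $\Re\langle C^2x,L_\ell^*L_\ell x\rangle$ leaves
$\|[C,L_\ell]x\|^2+\Re\langle Cx,\,L_\ell^*[C,L_\ell]x\rangle+\Re\langle Cx,\,[L_\ell^*,C]L_\ell x\rangle$
(this is Lemma \ref{lemma:CCP-part-C}), not $\|[C,L_\ell]x\|^2+2\Re\langle L_\ell x,[C,L_\ell]Cx\rangle$. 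The difference is not cosmetic: your cross term applies $[C,L_\ell]$ (a second-order operator, or first order with linearly growing coefficients) to $Cx$, so it is of size $\|L_\ell x\|\bigl(\|C^2x\|+\|Cx\|\bigr)$ and cannot be dominated by $K(t)\|x\|_C^2$; the correct bookkeeping keeps $Cx$ on one side of the inner product, paired with $\sum_\ell\bigl(L_\ell^*[C,L_\ell]+[L_\ell^*,C]L_\ell\bigr)x=2\mathcal{L}_0(C)x$, and the substantive step is proving $\|\mathcal{L}_0(C)f\|\le K(t)\|Cf\|$ (Lemma \ref{lem:estimate-L0-step1}). Relatedly, you misplace the role of (\ref{4.2}): there is no cancellation between the magnetic terms of $H(t)$ and the first-order part of $\sum_\ell L_\ell^*L_\ell$ --- the Hamiltonian is handled separately through $\left|2\Re\langle C^2f,iH(t)f\rangle\right|\le 2\|Cf\|\,\|[H(t),C]f\|$ using only (H4.1) (Lemma \ref{lem:comm-H-C-estim}). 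Condition (\ref{4.2}) acts inside the dissipative part: it makes the second-order derivative terms in $L_\ell^*[\partial_j,L_\ell]+[L_\ell^*,\partial_j]L_\ell$ cancel, which is exactly what keeps $\mathcal{L}_0(C)$ a second-order operator with coefficients of admissible growth; without this observation the relative bound on $\mathcal{L}_0(C)$, and hence (H2.3), is out of reach.

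A smaller point: Lemma \ref{lema:Measurability:g} does not apply as you claim, since the coefficients $V(t,x)$, $A^j(t,x)$, $\sigma_{\ell k}(t,x)$, $\eta_\ell(t,x)$ need not have the separated form $\sum_k f_k(t)\Phi_k$ required there; (H1.1)--(H1.2) must be checked directly, as in Lemma \ref{lem:G-satisfies-H.2.1}, via the measurability of $f\mapsto\partial_k\pi_C(f)$ and of $t\mapsto\sigma_{\ell k}(t,\cdot)g$. Once (H2.3) is repaired along the lines above (and the relative bounds are first proved on the core $C^\infty_c(\mathbb{R}^d,\mathbb{C})$ and then extended), your concluding applications of Theorems \ref{teorema1}, \ref{teorema2} and \ref{teorema7} are exactly the paper's.
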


\begin{proof}
 Deferred to Subsection \ref{subsec:teoSmooth}.
\end{proof}

Theorem \ref{teo:Smooth}
applies in a number of physical models
like those listed below,
which, for simplicity,
are restricted to 
$ \mathfrak{h} = L^{2}\left( \mathbb{R} ,\mathbb{C}\right)$
and $m=1$.

\begin{enumerate}
\item[(E.1)] 
Choose 
$\alpha = 1/ \left( 2M \right)$,
 $A^1 \left(t, x \right) = c x$,
$ \sigma_{1 1}(t,x)=b$,
and
$\eta_{1} \left(t , x \right) = a x$,
where $a,b,c \in \mathbb{R}$ and  $M>0$.
Moreover, the potential $V$ is a smooth function.
This describes
a large particle coupled to a bath of harmonic oscillators in thermal equilibrium
(see, e.g.,  \cite{Halliwell}).

\item[(E.2)]\label{exemp:Bassi-Caves}
Let  $ \alpha = 1/\left( 2M \right) $, with $M>0$.
Moreover, we take $A^1 \left(t, x \right) =  \sigma_{1 1}(t,x)=0$ and 
$\eta_{1} \left(t , x \right) = \eta x$,
where  $\eta$ is a real number. 
This model describes the dynamics of the continuous measurement of position of a free 
quantum particle subject to a time-dependent potential $V(t,\cdot)$ 
(see, e.g., \cite{Bassi2009,Gough}),
 a  process that can be observed with detectors.

\item[(E.3)] \label{exemp: Singh-Rost} 
\cite{Singh2007} modeled 
the application of intense linearly polarized laser  to the hydrogen atom
by means of:
 $\alpha = 1/2$, 
$A^1 (t,\cdot) = \sigma_{1 1}(t,\cdot)=0$, $\eta_{1} \left(t , x \right) = -i \eta x$,
and
$$
V \left(t, x \right) = V_0  \left( x \right)   +  x F \left( t \right),
$$
where 
$V_0  \left( x \right) =-1 / \left( x^{2} + \epsilon^{2} \right)^{1/2}$ and 
 \begin{equation*}
F \left( t \right)  = F_{0} \sin \left( \beta t + \delta \right) 
\cdot\left\{ 
\begin{array}{lll}
\sin \left( \pi t / \left( 2 \tau \right) \right), & \text{if } t < \tau \\ 
1, & \text{if } \tau \leq t \leq T - \tau \\ 
\cos^{2} \left( \pi \left( t  + \tau - T \right) /  \left( 2 \tau \right) \right), & \text{if }  T - \tau  \leq t \leq   T 
\end{array}
\right. .
\end{equation*}
Here  $\beta, \eta, \delta \in \mathbb{R}$ and $\epsilon, F_{0}, \tau, T$ are positive constants. 
This  simulates 
the evolution of the electron of the hydrogen atom under the 
influence of a laser field $F \left( t \right)$. 
The soft core potential  $V$  approximates the Coulomb potential of the atom.

\item[(E.4)] \label{exemp:Grotz-Schneider}  
To describe 
the evolution of a quantum system in a parabolic fluctuating trap,
we follow  \cite{Grotz} and \cite{Schneider}
in assuming 
 $ \alpha = 1/\left( 2M \right) $,
$A^{1}(t,x)=\sigma_{1 1}(t,x)=0$, $V \left(t, x \right) =  \frac{1}{2} M \omega^2  x^{2}$
and  $\eta_{1} \left(t , x \right) = -i  \eta x$,
where $M,  \eta > 0$ and $\omega \in \mathbb{R}$.

\item[(E.5)]
A free particle confined by a moving  Gaussian well,
in interaction with a heat bath,
is simulated by 
$\alpha = 1/ \left( 2M \right)$,  $A^1 \left(t, x \right) = 0$,
$$
V \left( t, x \right) = -V_0 \exp  \left(  - \alpha \left( x - r \left(t \right) \right)^2 \right),
$$
$ \sigma_{1 1}(t,x)=b$
and
$\eta_{1} \left(t , x \right) = a x$,
where $a,b \in \mathbb{R}$ and $M,V_0,  \alpha > 0$.
The measurable bounded function $r:\left[ 0 , \infty \right[ \rightarrow \mathbb{R} $  
represents the displacement of the trap's center.
\end{enumerate}


\section{Ehrenfest's theorem}
\label{sec:Ehrenfest}

\subsection{Markovian open quantum systems}

The next theorem provides a  rigorous der\-i\-va\-tion of a version of Ehrenfest's equations 
for open quantum systems in  Lindblad form.

\begin{condition}
\label{hyp:Ehrenfest}
Let $C$ satisfy Hypothesis  \ref{hyp:formal-conservativity}.
Suppose that:
\begin{itemize}
\item[(H5.1)]
For all $t \geq 0$ and any $x$ belonging to a core of $C$,
$$
\sum_{\ell=1}^{\infty }\left\Vert C^{1/2} L_{\ell} \left( t \right) x \right\Vert ^{2}
\leq K \left( t \right)  \left\Vert x\right\Vert_{C}^{2} .
$$
\end{itemize}
Let  $A=B_1^{*} B_2$,
where $B_1, B_2$ are  operators in $\mathfrak{h}$ such that:
\begin{itemize}
\item[(H5.2)]
For all $x \in \mathcal{D}\left( C^{1/2} \right)$,
$
 \max\{\left\Vert  B_1  x  \right\Vert^{2} , \left\Vert  B_2  x  \right\Vert^{2}  \}
\leq K  \left\Vert  x  \right\Vert_{C^{1/2}}^{2}
$.

\item[(H5.3)]
$ \max \left\{   \left\Vert  A  x  \right\Vert^{2}, \left\Vert  A^*  x  \right\Vert^{2} \right\}
 \leq K \left\Vert  x  \right\Vert_{C}^{2}$
whenever $x \in \mathcal{D}\left( C\right)$.

\end{itemize}

 \end{condition}

\begin{theorem}
 \label{th:Ehrenfest}
Let Hypothesis \ref{hyp:Ehrenfest} hold,
together with $\xi\in L_{C}^{2}\left(\mathbb{P};\mathfrak{h}\right)$.
Then, for all $t \geq 0$ we have
\begin{eqnarray}
\label{eq:10}
 \mathbb{E} \left\langle X_{t} \left( \xi \right) , A X_{t} \left( \xi \right) \right\rangle
 & = & 
\mathbb{E} \left\langle  \xi  , A  \xi  \right\rangle
+
\int_0^t 
\mathbb{E} \left\langle A^* X_{s} \left( \xi \right) , G \left( s \right) X_{s} \left( \xi \right) \right\rangle
 ds
\\
\nonumber
&&
+ \int_0^t 
\mathbb{E} \left\langle G \left( s \right) X_{s} \left( \xi \right) , A  X_{s} \left( \xi \right) \right\rangle 
ds
\\
\nonumber
&&
+
\int_0^t \left(
\sum_{\ell = 1}^{\infty}
\mathbb{E} \left\langle B_1 L_{\ell} \left( s \right) X_{s} \left( \xi \right) , B_2 L_{\ell} \left( s \right) X_{s} \left( \xi \right) \right\rangle
\right) ds .
\end{eqnarray}
\end{theorem}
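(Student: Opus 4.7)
The plan is to reduce the identity to a standard It\^o computation by regularizing the (unbounded) observable $A$, and then passing to the limit using the $C$-regularity of $X_t(\xi)$ supplied by Theorem \ref{teorema1}. Set $R_n := n(n+C)^{-1}$. Since $C$ is self-adjoint and positive, $R_n$ commutes with every function of $C$ by spectral calculus, is a contraction on $\mathfrak{h}$, satisfies $R_n\to I$ strongly, and maps $\mathfrak{h}$ into $\mathcal{D}(C)$ with $\|C^{1/2} R_n\|\le\sqrt{n}$. Define the bounded operator $A_n := R_n A R_n = (B_1 R_n)^{*}(B_2 R_n)$; H5.2 together with the spectral bound guarantees that $B_i R_n$ is bounded, whence $A_n\in\mathfrak{L}(\mathfrak{h})$.

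I would then apply the complex It\^o product formula to $x\mapsto\langle x,A_n x\rangle$ along the semimartingale $X_s(\xi)$, whose drift is $G(s)X_s$ and whose diffusion coefficients are $L_\ell(s)X_s$ (these lie in $\mathfrak{h}$ by H2.1 and H2.2). Because $A_n$ is bounded, this yields
\begin{align*}
\langle X_t,A_n X_t\rangle &= \langle \xi,A_n\xi\rangle + \int_0^t\bigl[\langle A_n^{*} X_s,G(s)X_s\rangle + \langle G(s)X_s,A_n X_s\rangle\bigr]ds \\
&\quad + \int_0^t\sum_{\ell=1}^{\infty}\langle B_1 R_n L_\ell(s)X_s,B_2 R_n L_\ell(s)X_s\rangle\,ds + M_t^n,
\end{align*}
with $M^n$ a local martingale. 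The boundedness of $A_n$ combined with $\sup_{s\le t}\mathbb{E}\|CX_s\|^2<\infty$ from Theorem \ref{teorema1} upgrade $M^n$ to a true martingale, so taking expectations removes it.

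The final step is to let $n\to\infty$ inside the expectation of each surviving term. Because $R_n$ commutes with $C$ and $C^{1/2}$, spectral calculus gives $R_n x\to x$ in $\|\cdot\|_C$ whenever $x\in\mathcal{D}(C)$ and in $\|\cdot\|_{C^{1/2}}$ whenever $x\in\mathcal{D}(C^{1/2})$; via H5.2 and H5.3 this produces the pointwise convergences $B_i R_n X_s\to B_i X_s$, $A^{*} R_n X_s\to A^{*} X_s$ and $B_i R_n L_\ell(s)X_s\to B_i L_\ell(s)X_s$, the last one relying on H5.1 to place $L_\ell(s)X_s$ in $\mathcal{D}(C^{1/2})$. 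Dominated convergence is then powered by the majorants $|\langle X_t,A_n X_t\rangle|\le K\|X_t\|_{C^{1/2}}^2\le K(1+\|CX_t\|^2)$ (H5.2), $|\langle A_n^{*} X_s,G(s)X_s\rangle|\le K\|X_s\|_C\,\|G(s)X_s\|\le K\,K(s)^{1/2}\|X_s\|_C^2$ (H5.3 and H2.1), and $\sum_\ell|\langle B_1 R_n L_\ell X_s,B_2 R_n L_\ell X_s\rangle|\le K\sum_\ell\|L_\ell(s)X_s\|_{C^{1/2}}^2$; summability in $\ell$ of this last quantity combines H5.1 with the identity $\sum_\ell\|L_\ell(s)X_s\|^2=-2\Re\langle X_s,G(s)X_s\rangle$ from H3.1, which is dominated by $2K(s)^{1/2}\|X_s\|_C^2$ via H2.1. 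All majorants have finite time-integrated expectation by Theorem \ref{teorema1}.

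The principal obstacle I anticipate is interchanging the $\ell$-sum with the limit $n\to\infty$ under the time integral: this is precisely what H5.1 is designed for, together with H3.1, which controls the zeroth-order part of the series. Once that interchange is secured, reassembling the limiting terms reproduces the right-hand side of (\ref{eq:10}).
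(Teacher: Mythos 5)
Your overall route coincides with the paper's proof: regularize the observable as $A_n=R_nAR_n$ with $R_n=n(n+C)^{-1}$, apply the complex It\^o formula to $\langle X_s(\xi),A_nX_s(\xi)\rangle$, take expectations, and let $n\to\infty$ using $R_nC^{1/2}\subset C^{1/2}R_n$, Conditions H5.1--H5.3, the bound $\sum_\ell\|L_\ell(s)x\|^2\le K(s)\|x\|_C^2$ and dominated convergence; this is exactly the scheme of Subsection \ref{subsec:teoEhrenfest}.

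The step that is not justified as you state it is the removal of the martingale term: you claim $M^n$ is a true martingale because $A_n$ is bounded and $\sup_{s\le t}\mathbb{E}\|CX_s(\xi)\|^2<\infty$. The relevant condition is square-integrability of the stochastic integrands, and the natural estimate
\[
\sum_{\ell}\bigl|\langle X_s,A_nL_\ell(s)X_s\rangle+\langle L_\ell(s)X_s,A_nX_s\rangle\bigr|^2
\le 4\|A_n\|^2\|X_s\|^2\sum_{\ell}\|L_\ell(s)X_s\|^2
\le K\,\|X_s\|^2\,\|X_s\|_C^2
\]
involves a fourth-order moment of the solution, which is not controlled by the second-moment bounds supplied by Theorem \ref{teorema1}. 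This is precisely why the paper localizes with $\tau_m=\inf\{t\ge0:\|X_t(\xi)\|>m\}\wedge T$: on the stochastic interval one has $\|X_s\|\le m$ and, via Lemma \ref{lema5}, $\sum_\ell\|L_\ell(s)X_s\|^2\le 2\|X_s\|\,\|G(s)X_s\|$, so that $\mathbb{E}M_{t\wedge\tau_m}=0$, and the passage $m\to\infty$ then uses $\mathbb{E}\sup_{s\le T}\|X_s(\xi)\|^2<\infty$ together with dominated convergence (alternatively, that same maximal estimate yields $\mathbb{E}\sup_{s\le t}|M^n_s|<\infty$, which also upgrades the local martingale). Some such argument is needed; your stated ingredients do not deliver it. Two smaller points you gloss over: H5.1 is assumed only on a core of $C$ and must first be extended to all of $\mathcal{D}(C)$ (as the paper does by the argument of Lemma \ref{lema2i}) before it can be applied to $X_s(\xi)$, and the interchange of $\mathbb{E}$ with the time integral requires a (routine) Fubini step. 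With these repairs your argument reproduces the paper's proof.
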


\begin{proof}
 Deferred to Subsection \ref{subsec:teoEhrenfest}.
\end{proof}

Suppose that $X_{t} \left( \xi \right)$ is the unique strong $C$-solution of (\ref{eq:SSE}).
Set
$$
 \rho_t := \mathbb{E} \left|X_{t} \left( \xi \right)\right\rangle \left\langle X_{t} \left( \xi \right)\right| ,
$$
where we use Dirac notation.
Then $\rho_t $ is a $C$-regular density operator 
and 
$$
tr \left( \rho_t A \right) =  \mathbb{E} \left\langle X_{t} \left( \xi \right) , A X_{t} \left( \xi \right) \right\rangle ,
$$
provided that  $A$ is $C$-bounded
(see  \cite{MoraAP} for details).
In the homogeneous case,
from  \cite{MoraAP} we have that 
$ \rho_t $ is the unique solution of the quantum master equation
\begin{equation*}
 \dfrac{d}{dt} \rho_{t}
 = 
 G \rho_t +  \rho_t  G^{\ast} +\sum_{\ell=1}^{\infty}L_{\ell}  \rho_t  L_{\ell}^{\ast},
 \hspace{1cm}
 \rho_{0} = \mathbb{E} \left| \xi \right\rangle \left\langle \xi \right| .
\end{equation*}
We now combine Theorem \ref{th:Ehrenfest} with Theorem 3.2 of \cite{MoraAP}
to deduce the following corollary,
which asserts which asserts the validity (\ref{eq:3}) whenever essentially $A L_{\ell}$ is $C$-bounded.
To this end,
we use basic properties of the adjoints of unbounded operators (see, e.g., \cite{Kato}).

\begin{corollary}
\label{cor:Ehrenfest}
In addition to Hypothesis \ref{hyp:Ehrenfest} and $\xi\in L_{C}^{2}\left(\mathbb{P};\mathfrak{h}\right)$,
suppose that the operators 
$G \left( t \right), B_1 L_1  \left( t \right), B_2 L_2  \left( t \right), \ldots $ are cerrable for all $t \geq 0$. 
Then 
\begin{eqnarray}
\label{eq:14}
tr \left(  A \rho_t \right)
& = & 
tr \left(  A \rho_0 \right)
+
\int_0^t 
\left(
tr \left(G \left( s \right)  \rho_s A \right)
+
tr \left(A \rho_s  G \left( s \right)^{*} \right)
\right) ds
\\
\nonumber
&&
+
\int_0^t \left(
\sum_{\ell = 1}^{\infty}
tr \left( B_2 L_{\ell} \left( s \right) \rho_s L_{\ell} \left( s \right)^{*} B_1 ^{*} \right)
\right) ds,
\end{eqnarray}
where 
$t \geq 0$
and
$ \rho_t := \mathbb{E} \left|X_{t} \left( \xi \right)\right\rangle \left\langle X_{t} \left( \xi \right)\right| $.
\end{corollary}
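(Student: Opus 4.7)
The plan is to read the corollary as the identity \eqref{eq:10} of Theorem \ref{th:Ehrenfest} rewritten in trace form for the one-particle density operator $\rho_t = \mathbb{E}|X_t(\xi)\rangle\langle X_t(\xi)|$, with all operator-valued manipulations justified through Theorem~3.2 of \cite{MoraAP}. That theorem, applied under Hypothesis \ref{hyp:Ehrenfest}, guarantees that $\rho_t$ is a $C$-regular density operator and that for every $C$-bounded operator $O$,
\[
tr(O\rho_t) = \mathbb{E}\langle X_t(\xi),\, O\, X_t(\xi)\rangle.
\]
Taking $O = A$ (admissible by H5.3) immediately identifies the left-hand side of \eqref{eq:14} with the left-hand side of \eqref{eq:10}, and likewise $tr(A\rho_0) = \mathbb{E}\langle\xi,A\xi\rangle$.

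For the three integral terms I would reduce everything to the rank-one identity $tr(O|v\rangle\langle w|) = \langle w, Ov\rangle$. Since $\rho_s$ is $C$-regular and $G(s), B_j L_\ell(s)$ are closable and $C$-bounded (by H2.1 and H5.1--H5.2, together with the closability hypothesis that makes the adjoints densely defined), the trace products $A\rho_s G(s)^*$, $G(s)\rho_s A$ and $B_2 L_\ell(s)\rho_s L_\ell(s)^* B_1^*$ are well defined as trace-class operators; the formal calculations give
\begin{align*}
tr\bigl(G(s)\rho_s A\bigr) &= \mathbb{E}\langle A^* X_s(\xi), G(s) X_s(\xi)\rangle,\\
tr\bigl(A\rho_s G(s)^*\bigr) &= \mathbb{E}\langle G(s) X_s(\xi), A\, X_s(\xi)\rangle,\\
tr\bigl(B_2 L_\ell(s)\rho_s L_\ell(s)^* B_1^*\bigr) &= \mathbb{E}\langle B_1 L_\ell(s) X_s(\xi), B_2 L_\ell(s) X_s(\xi)\rangle.
\end{align*}
Each of these is exactly the integrand appearing in \eqref{eq:10}, so summing term by term and integrating from $0$ to $t$ transforms \eqref{eq:10} into \eqref{eq:14}.

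The one remaining point is the interchange of $\mathbb{E}$ with $tr$ and with $\int_0^t ds$ on the right-hand side. This follows from the uniform moment estimate $\sup_{s\le t}\mathbb{E}\|C X_s(\xi)\|^2 < \infty$ of Theorem \ref{teorema1}, together with the $C$-boundedness bounds H2.1, H5.1, H5.2 and H5.3. These provide the domination needed to apply Fubini's theorem and to pass the expectation through the continuous trace functional on the trace class.

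The main obstacle I anticipate is not the identification of each term, which is essentially algebraic once the trace formulas above are in hand, but rather the \emph{well-definedness} of the unbounded products $A\rho_s G(s)^*$ and $B_2 L_\ell(s)\rho_s L_\ell(s)^* B_1^*$ as trace-class operators. The closability assumption on $G(s)$ and $B_j L_\ell(s)$ is precisely what is needed to extend these products from the $C$-regular core on which $\rho_s$ concentrates to bounded operators on $\mathfrak{h}$, and this is exactly the input supplied by Theorem~3.2 of \cite{MoraAP}; once it is invoked the corollary reduces to bookkeeping.
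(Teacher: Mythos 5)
Your proposal is correct and follows essentially the same route as the paper, which proves the corollary only by the remark preceding it: combine the expectation identity \eqref{eq:10} of Theorem \ref{th:Ehrenfest} with Theorem 3.2 of \cite{MoraAP} (giving the $C$-regularity of $\rho_t$ and $tr(\rho_t A)=\mathbb{E}\langle X_t(\xi),AX_t(\xi)\rangle$) and basic properties of adjoints of closable unbounded operators to rewrite each integrand in trace form. Your only loose phrase is attributing the role of the closability hypothesis to Theorem 3.2 of \cite{MoraAP}; closability is a separate assumption of the corollary ensuring the adjoints $G(s)^*$ and $(B_jL_\ell(s))^*$ are densely defined with $(G(s)^*)^*\supset G(s)$, which is exactly the ``basic properties of adjoints'' input the paper invokes, but this does not affect the correctness of your argument.
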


\subsection{Applications}

We begin by applying Theorem \ref{th:Ehrenfest} to the model given by (\ref{eq:Hex}) and (\ref{eq:Lex}).

\begin{theorem}
  \label{th:Ehrenfest_Model}
 Assume the context of (\ref{eq:Hex}) and (\ref{eq:Lex}),
 together with Hypothesis \ref{hyp:growth-cond}.
 Let $A=B_{1}^{*} B_{2}$,
 where $B_{1}$ and $B_{2}$ satisfy one of the following conditions:
\begin{itemize}
\item
$B_1 = \mlt{c_1}$ and $B_2 = \mlt{c_2}$
provided that  $c_1, c_2: \mathbb{R}^{d} \rightarrow \mathbb{R}$ are  Borel measurable functions
such that
$
 \left|   c_j \left( x  \right)  \right|
 \leq
 K\left( 1+ \left| x \right| \right)
$
for all $x \in  \mathbb{R}^{d}$
and $j=1,2$.

\item
For  any $j=1,2$,
$B_j$ is either $\partial_k \mlt{a_j}$, $\mlt{b_j} \partial_k$ or $\mlt{c_j}$,
 where  
 $k=1,\ldots,d$, 
 $a_j  \in C^{2}\left( \mathbb{R}^{d} ,\mathbb{R}\right) $
 and
 $b_j, c_j \in C^{1}\left( \mathbb{R}^{d} ,\mathbb{R}\right)$.
 Moreover, 
 for all $x \in \mathbb{R}^d$ and $l,k = 1, \ldots, d$ we have:
 $
 \max \left\{  \left| a_j \left( x  \right) \right| , \left| b_j \left( x  \right) \right|  \right\} \leq K 
 $,
 $$
 \max \left\{  
 \left| c_j \left( x  \right)  \right| ,  \left| \partial_l a_j \left( x  \right) \right| ,\left|  \partial_l b_j \left( x  \right) \right|
\right\}
\leq 
K\left( 1+ \left| x \right| \right),
 $$
 and
$
\max \left\{  
 \left|  \partial_l  c_j \left( x  \right)  \right| ,  \left| \partial_k \partial_l a_j \left( x  \right) \right|
\right\}
\leq 
K \left( 1+ \left| x \right|^2 \right) 
$.
\end{itemize}
 Then
(\ref{eq:10}) and (\ref{eq:14}) hold in case
$\xi\in L_{- {\Delta}+ \left| x \right| ^{2}}^{2}\left(\mathbb{P};\mathfrak{h}\right)$.
\end{theorem}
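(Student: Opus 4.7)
The plan is to reduce Theorem \ref{th:Ehrenfest_Model} to Theorem \ref{th:Ehrenfest} together with Corollary \ref{cor:Ehrenfest} applied to the reference operator $C=-\Delta+|x|^2$. Theorem \ref{teo:Smooth} already guarantees that, under Hypothesis \ref{hyp:growth-cond}, the equation \eqref{eq:SSE} with Hamiltonian \eqref{eq:Hex} and noise operators \eqref{eq:Lex} admits a unique strong $C$-solution that conserves $\mathbb{E}\|X_t(\xi)\|^2$, so Hypothesis \ref{hyp:formal-conservativity} holds. It therefore remains to verify conditions (H5.1)--(H5.3) of Hypothesis \ref{hyp:Ehrenfest}, together with closability of $G(t)$, $B_1 L_\ell(t)$ and $B_2 L_\ell(t)$, after which \eqref{eq:10} follows from Theorem \ref{th:Ehrenfest} and \eqref{eq:14} follows from Corollary \ref{cor:Ehrenfest}.

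The cornerstone of the whole argument is a set of \emph{Sobolev-type estimates} for $C=-\Delta+|x|^2$ on the Schwartz class $\mathcal{S}(\mathbb{R}^d)$ (which is a core for $C$, $C^{1/2}$ and $C^2$). A direct integration by parts yields
\[
\|Cf\|^2+d\|f\|^2
=\|\Delta f\|^2+\bigl\||x|^2 f\bigr\|^2+2\bigl\||x||\nabla f|\bigr\|^2,
\]
so that $\|\partial_i\partial_j f\|$, $\||x|^2 f\|$ and $\||x||\nabla f|\|$ are all bounded by $K\|f\|_C$ (using $\sum_{ij}\|\partial_i\partial_j f\|^2=\|\Delta f\|^2$ from Plancherel). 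Furthermore $\|\nabla f\|^2=\langle f,-\Delta f\rangle\le\|f\|_C^2$ and $\||x|f\|^2=\langle f,|x|^2f\rangle\le\|f\|_C^2$, while the graph norm of $C^{1/2}$ satisfies $\|f\|_{C^{1/2}}^2=\|f\|^2+\|\nabla f\|^2+\||x|f\|^2$.

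With these estimates at hand, verification of (H5.2) and (H5.3) is straightforward: each operator $\mlt{c_j}$, $\mlt{b_j}\partial_k$ or $\partial_k\mlt{a_j}$ listed in the statement, after expansion by the Leibniz rule, produces terms of the form $(1+|x|)^\alpha|\nabla^\beta f|$ with $\alpha+\beta\le 1$ (resp.\ $\alpha+\beta\le 2$ for the product $A=B_1^*B_2$ and its adjoint $A^*=B_2^*B_1$), whose $L^2$ norms are controlled by $\|f\|_{C^{1/2}}$ (resp.\ $\|f\|_C$) thanks to the bounds above. For (H5.1) we compute on $\mathcal{S}$
\[
\|C^{1/2}L_\ell(t)f\|^2
=\|\nabla L_\ell(t)f\|^2+\bigl\||x|L_\ell(t)f\bigr\|^2,
\]
expand the gradient of $\sum_j\sigma_{\ell j}\partial_j f+\eta_\ell f$ using Hypothesis \ref{hyp:growth-cond}, and dominate all resulting terms by $\|f\|_C^2$. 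Under (H4.2.a) all coefficients and their derivatives up to order three are bounded, so only $\|\nabla f\|$, $\|\nabla^2 f\|$, $\||x|f\|$ and $\||x||\nabla f|\|$ appear; under (H4.2.b) the $\sigma_{\ell j}$ are $x$-independent, and the linearly growing $\eta_\ell$ contributes at worst terms $\||x|^2 f\|$ and $\||x||\nabla f|\|$, which are again $\le K\|f\|_C$. The subtle point is that in (H4.2.b) one seems to encounter $\||x|^3 f\|$; however, writing $|x||\eta_\ell f|\le K(|x|+|x|^2)|f|$ shows that only $\||x|^2 f\|$ is ultimately needed, which is precisely the threshold provided by the reference operator $C$. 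Finally, closability of $G(t)$, $B_1L_\ell(t)$, $B_2L_\ell(t)$ follows because each of them is a differential operator with locally smooth coefficients defined on $C_c^\infty(\mathbb{R}^d)$, and its formal adjoint is again such an operator, hence densely defined.

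The main obstacle is the careful bookkeeping in verifying (H5.1) under the linear-growth regime (H4.2.b): one has to check that every term arising from Leibniz's rule applied to $L_\ell f$ and then to $\nabla(L_\ell f)$ and $|x|L_\ell f$ does not exceed the critical weight $|x|^2$ that $C$ can control, and that the cross terms combine in the correct way so that the identity for $\|Cf\|^2$ can be used. Once this algebra is in place, the conclusions of Theorem \ref{th:Ehrenfest} and Corollary \ref{cor:Ehrenfest} apply verbatim, yielding \eqref{eq:10} and \eqref{eq:14}.
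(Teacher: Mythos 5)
Your proposal is correct and follows essentially the same route as the paper: reduce to Theorem \ref{th:Ehrenfest} and Corollary \ref{cor:Ehrenfest} via Theorem \ref{teo:Smooth}, then verify (H5.1)--(H5.3) on a core for $C=-\Delta+|x|^2$ using exactly the weighted estimates of Lemma \ref{lemma:key-op-ineq} together with $\sum_{j,k}\|\partial_j\partial_k f\|^2=\|\Delta f\|^2$. The only blemishes are cosmetic: your identity should read $\|Cf\|^2+2d\|f\|^2=\|\Delta f\|^2+\||x|^2f\|^2+2\sum_j\langle\partial_j f,|x|^2\partial_j f\rangle$ (factor $2d$, not $d$), and the worry about an $\||x|^3 f\|$ term under (H4.2.b) never actually arises since $|x|\,|\eta_\ell|\le K(|x|+|x|^2)$; neither affects the argument, and your explicit closability check for Corollary \ref{cor:Ehrenfest} is a welcome addition the paper leaves implicit.
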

\begin{proof}
 Deferred to Subsection \ref{subsec:teoEhrenfest_Model}.
\end{proof}

Using Theorem \ref{th:Ehrenfest_Model} we can obtain  expressions 
describing the evolution of some important observables,
which sometimes are closed systems of ordinary differential equations.
For instance, 
the following theorem makes mathematically rigorous 
computations given in \cite{Schneider},
which establish the linear heating of a Paul trap 
due to fluctuating electrical fields that change the center of this ion trap
(see also \cite{Gehm1998,Grotz}).

\begin{corollary}
  \label{cor:heat}
Consider (\ref{eq:Hex}) and (\ref{eq:Lex})  with 
$d=1$, 
$ \alpha = 1 / (2M) $, 
$A^{j}(t,x)=0$, 
$V \left(t, x \right) = V \left( x \right) $,
$\sigma_{\ell k}(t,x)=0$
and  $\eta_{1} \left(t , x \right) = -i  \eta x$,
where 
$M, \eta > 0$
and 
$V \in  C^{2}\left( \mathbb{R} ,\mathbb{R}\right) $.
Suppose that for any $x \in \mathbb{R}$,
$
 \left|  V \left( x  \right) \right|
 \le 
 K \left( 1+ \left| x \right|^2 \right)
$,
$
 \left|  V^{\prime} \left( x  \right) \right| 
 \le 
 K \left( 1+ \left| x \right| \right)
$
and 
$
 \left|  V^{\prime \prime} \left( x  \right) \right|
 \le 
 K \left( 1+ \left| x \right|^2 \right)
$.
Then for all $t \geq 0$,
\begin{equation}
\label{eq:17}
 \mathbb{E} \left\langle X_{t} \left( \xi \right) , H X_{t} \left( \xi \right) \right\rangle
=
\mathbb{E} \left\langle  \xi  , H \xi  \right\rangle
+
\frac{1}{2M} \eta^2 t .
\end{equation}
\end{corollary}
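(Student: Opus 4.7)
The plan is to apply Theorem~\ref{th:Ehrenfest_Model} to the time-independent observable $A=H=-\tfrac{1}{2M}\Delta+V$ and then to observe that the right-hand side of~(\ref{eq:10}) collapses to the constant $\eta^{2}/(2M)$ per unit time. Since (\ref{eq:10}) is linear in $A$, I split $H=H_{\mathrm{kin}}+\mlt{V}$ and check the hypotheses of Theorem~\ref{th:Ehrenfest_Model} for each summand. For $H_{\mathrm{kin}}=-\tfrac{1}{2M}\Delta$ I would take $B_{1}=B_{2}=(2M)^{-1/2}\partial_{1}$: this falls in the second case of the theorem with $b_{j}\equiv(2M)^{-1/2}$ constant, so every bound on $b_{j}$ and its derivatives is trivial. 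For $\mlt{V}$ I factor $V=c_{1}c_{2}$ by setting $c_{1}(x)=\mathrm{sign}(V(x))\sqrt{|V(x)|}$ and $c_{2}(x)=\sqrt{|V(x)|}$; the bound $|V(x)|\le K(1+x^{2})$ makes both $c_{j}$ Borel measurable with $|c_{j}(x)|\le \sqrt{K}(1+|x|)$, which is exactly the growth allowed by the first case of Theorem~\ref{th:Ehrenfest_Model}.

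With (\ref{eq:10}) available for $A=H$, the rest is an algebraic computation carried out on the core $C_{c}^{\infty}(\mathbb{R},\mathbb{C})$ of $C=-\Delta+x^{2}$. Recalling $L_{1}=-i\eta\mlt{x}$, $L_{\ell}=0$ for $\ell\ge 2$ and $G=-iH-\tfrac{\eta^{2}}{2}x^{2}$, the two drift terms in~(\ref{eq:10}) combine into
\begin{equation*}
\langle X_{s},(HG+G^{*}H)X_{s}\rangle
=-\tfrac{\eta^{2}}{2}\langle X_{s},(Hx^{2}+x^{2}H)X_{s}\rangle,
\end{equation*}
since the $\pm iH^{2}$ contributions cancel by symmetry of $H$, while the single noise term is $\langle L_{1}X_{s},HL_{1}X_{s}\rangle=\eta^{2}\langle X_{s},xHx\,X_{s}\rangle$. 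Using $[\Delta,x]=2\partial_{1}$ and $[\Delta,x^{2}]=2I+4x\partial_{1}$ (and the fact that $V$ commutes with $\mlt{x}$), I would verify the identity
\begin{equation*}
xHx-\tfrac{1}{2}\bigl(Hx^{2}+x^{2}H\bigr)=\frac{1}{2M}\,I
\end{equation*}
on the above core. The full integrand of (\ref{eq:10}) therefore reduces, almost surely, to $\tfrac{\eta^{2}}{2M}\|X_{s}\|^{2}$.

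To conclude, I would invoke the mean norm-square conservation supplied by Theorem~\ref{teo:Smooth}, which gives $\mathbb{E}\|X_{s}\|^{2}=\mathbb{E}\|\xi\|^{2}=1$ for every $s\ge 0$. Substituting this into (\ref{eq:10}) and performing the $s$-integration produces exactly~(\ref{eq:17}).

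I expect the only non-routine point to be the admissibility of the decomposition $H=B_{1}^{*}B_{2}+\widetilde{B}_{1}^{*}\widetilde{B}_{2}$ within the framework of Theorem~\ref{th:Ehrenfest_Model}: the first case of that theorem only accommodates multiplication operators whose two factors are of linear growth, whereas $V$ itself may be genuinely quadratic. The $\sqrt{|V|}$ factorization bypasses this, but one still has to check that the resulting $B_{1},B_{2}$ satisfy Hypothesis~\ref{hyp:Ehrenfest} relative to $C=-\Delta+x^{2}$, so that the formal commutator manipulations leading to $xHx-\tfrac{1}{2}(Hx^{2}+x^{2}H)=(2M)^{-1}I$ are genuinely justified on the solution $X_{s}(\xi)$ and not merely on the core.
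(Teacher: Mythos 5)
Your proposal is correct and follows essentially the same route as the paper: both arguments split $H$ into its kinetic and potential parts to fit the factorizations allowed by Theorem \ref{th:Ehrenfest_Model}, carry out the commutator algebra on the core $C^{\infty}_{c}\left( \mathbb{R} ,\mathbb{C}\right)$ of $C=-\Delta+x^{2}$, extend to $\mathcal{D}\left( C\right)$ by relative boundedness with respect to $C$, and conclude with the mean norm-square conservation from Theorem \ref{teo:Smooth}. The only difference is organizational — the paper obtains two separate Ehrenfest identities (for $\left\langle V\right\rangle$ and for the kinetic energy) whose $V^{\prime}P+PV^{\prime}$ terms cancel upon addition, whereas you package the same cancellation into the single identity $xHx-\tfrac{1}{2}\left(Hx^{2}+x^{2}H\right)=\tfrac{1}{2M}I$ — and the core-to-solution extension you flag at the end is exactly the paper's argument that all the relevant quadratic forms are $C$-bounded, so that caveat is routine (your signed square-root factorization of $V$ is in fact a slightly more careful treatment of a possibly sign-indefinite potential than the paper's informal use of $\sqrt{A}$).
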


\begin{proof}
 Deferred to Subsection \ref{subsec:cor_heat}.
\end{proof}

\begin{remark}
 \cite{Schneider}  
 restricted their attention to 
 $$V \left( x \right) =  M \omega^2  x^{2} /2. $$
\end{remark}

\section{Proofs}
\label{sect:proofs}

\subsection{Proof of  Lemma \ref{lema:Measurability:g}}
\label{subsec:lemaMeasurability:g}

We first characterize the domain of $C$ by means of Yosida approximations of $-C$.

\begin{lemma}
\label{lema22}
Let $C$ be a self-adjoint positive operator in $\mathfrak{h}$.
Then 
\begin{equation*}
 \mathcal{D}\left( C \right)  
 =  
 \left\{ x \in \mathfrak{h}: \left( C R_{n}  x \right)_{n }  \medspace  converges \right\} 
 = 
 \left\{ x \in \mathfrak{h}: \sup_{n \in \mathbb{N}}  \left\Vert  C R_{n}  x \right\Vert < \infty  \right\},
\end{equation*}
where $R_{n} = n  \left(n+C \right)^{-1}$.
\end{lemma}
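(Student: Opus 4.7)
The plan is to exploit the basic properties of the Yosida approximant $R_n = n(n+C)^{-1}$ of the self-adjoint positive operator $C$: that $R_n \in \mathfrak{L}(\mathfrak{h})$ with $\|R_n\|\le 1$, that $R_n x \to x$ strongly for every $x \in \mathfrak{h}$, and that $R_n$ maps $\mathfrak{h}$ into $\mathcal{D}(C)$ and commutes with $C$ on $\mathcal{D}(C)$. These properties follow either from the Borel functional calculus for self-adjoint operators or directly from $(n+C)R_n = nI$, the latter equality also yielding the useful identity $CR_n = n(I - R_n)$ on all of $\mathfrak{h}$, which in particular shows that $CR_n \in \mathfrak{L}(\mathfrak{h})$.

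Set $\mathcal{A} = \{x \in \mathfrak{h} : (CR_n x)_n \text{ converges}\}$ and $\mathcal{B} = \{x \in \mathfrak{h} : \sup_n \|CR_n x\| < \infty\}$. First I would prove $\mathcal{D}(C) \subset \mathcal{A}$: for $x \in \mathcal{D}(C)$ the commutation relation gives $CR_n x = R_n(Cx)$, and strong convergence of $R_n$ on $\mathfrak{h}$ forces $CR_n x \to Cx$. The inclusion $\mathcal{A} \subset \mathcal{B}$ is trivial since convergent sequences in a Hilbert space are bounded.

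The only substantive step is $\mathcal{B} \subset \mathcal{D}(C)$. Fix $x \in \mathcal{B}$, so $M := \sup_n \|CR_n x\| < \infty$. Since $\mathfrak{h}$ is a Hilbert space, the Banach--Alaoglu theorem furnishes a subsequence $(n_k)$ and $y \in \mathfrak{h}$ with $CR_{n_k} x \rightharpoonup y$ weakly. Since $R_{n_k} x \in \mathcal{D}(C)$ for every $k$ and $R_{n_k} x \to x$ strongly (hence weakly), I would invoke the fact that a self-adjoint operator $C$ is closed, and therefore its graph is a norm-closed linear subspace of $\mathfrak{h} \times \mathfrak{h}$, hence also weakly closed. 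Applied to $(R_{n_k} x, CR_{n_k} x) \rightharpoonup (x, y)$, this gives $x \in \mathcal{D}(C)$ with $Cx = y$, completing the chain of inclusions.

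I do not expect any serious obstacle; the only point requiring care is the passage from the bounded sequence $(CR_n x)$ to a limit that can be identified with $Cx$, which is handled cleanly by weak compactness combined with closedness of $C$. A fully equivalent alternative, if preferred, is to apply the spectral theorem: writing $C = \int_0^\infty \lambda \, dE(\lambda)$ gives $\|CR_n x\|^2 = \int_0^\infty (n\lambda/(n+\lambda))^2 \, d\|E(\lambda)x\|^2$, and since $(n\lambda/(n+\lambda))^2 \uparrow \lambda^2$ monotonically, the monotone convergence theorem yields $\sup_n \|CR_n x\|^2 = \int_0^\infty \lambda^2 \, d\|E(\lambda)x\|^2$, finite exactly when $x \in \mathcal{D}(C)$, simultaneously establishing both nontrivial inclusions.
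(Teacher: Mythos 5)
Your proof is correct and follows essentially the same route as the paper: the inclusion $\mathcal{D}(C)\subset\{x:\ (CR_nx)_n \text{ converges}\}$ via $CR_nx=R_nCx\to Cx$, and the inclusion of the boundedness set into $\mathcal{D}(C)$ via the Banach--Alaoglu theorem applied to the bounded sequence $(CR_nx)_n$. The only cosmetic difference is in identifying the weak subsequential limit: you invoke weak closedness of the (norm-closed, linear) graph of $C$, whereas the paper computes $\langle x,Cy\rangle=\langle z,y\rangle$ for $y\in\mathcal{D}(C)$ and concludes $x\in\mathcal{D}(C^{*})=\mathcal{D}(C)$; both are standard and equivalent here.
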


\begin{proof}
 Since $-C$ is dissipative and self-adjoint, 
 for all $x \in  \mathcal{D}\left( C\right)$ we have
$$
C R_{n} x \longrightarrow_{n \rightarrow \infty} Cx
$$
 (see, e.g.,  Pazy \cite{Pazy}). Thus 
$ \mathcal{D}\left( C \right) \subset \left\{ x \in \mathfrak{h}: \left( C R_{n}  x \right)_{n }  \medspace  converges \right\} 
$.

Now, assume that 
$\left( \left\Vert  C  R_{n} x \right\Vert \right)_{n \in \mathbb{N}}$ is bounded. 
Using the Banach-Alaoglu theorem we deduce that there 
exists a subsequence 
$\left( C  R_{n_{k}} x \right)_{k \in  \mathbb{N}}$ 
which converges weakly to a vector $z \in \mathfrak{h}$. 
Since $R_{n} x \longrightarrow_{n \rightarrow \infty} x$,
for any $y \in  \mathcal{D}\left( C\right)$ we have
\[
 \left\langle  x, Cy \right\rangle
  =    \lim_{k \rightarrow \infty}  \left\langle  R_{n_{k}} x, C y \right\rangle 
  =    \lim_{k \rightarrow \infty}  \left\langle  CR_{n_{k}} x, y \right\rangle 
  =   \left\langle  z, y \right\rangle.
\]
Hence $x \in \mathcal{D}\left(C^{\ast} \right)$ ($= \mathcal{D}\left(C \right)$), and so 
$
 \left\{ x \in \mathfrak{h}: \sup_{n \in \mathbb{N}}  \left\Vert   C R_{n} x \right\Vert < \infty  \right\} \subset \mathcal{D}\left( C \right)
$.
\end{proof}

The assertion of Lemma \ref{lema:Measurability:g} follows straightforward from the next lemma.

\begin{lemma}
 \label{lema:Measurability}
Let $C$ be a self-adjoint positive operator on $\mathfrak{h}$.
Suppose that
$L \in \mathfrak{L}\left( \left( \mathcal{D}\left( C\right), \left\Vert \cdot \right\Vert _{C}\right)  ,\mathfrak{h}\right)$.
Then 
$L \circ\pi_{C}: \left(\mathfrak{h}, \mathcal{B} \left( \mathfrak{h} \right) \right) \rightarrow \left(\mathfrak{h}, \mathcal{B} \left( \mathfrak{h} \right) \right)$
 is measurable.
\end{lemma}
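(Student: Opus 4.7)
\textbf{Proof plan for Lemma \ref{lema:Measurability}.}
The plan is to exhibit $L\circ\pi_C$ as the pointwise limit of a sequence of explicitly Borel measurable maps constructed from the Yosida approximation $R_n=n(n+C)^{-1}$. Because $R_n$ is a bounded operator on $\mathfrak{h}$ with range in $\mathcal{D}(C)$, the composition $LR_n$ is a well-defined bounded linear operator on $\mathfrak{h}$, hence continuous and Borel measurable. The goal is then to multiply this approximant by the indicator $\mathbf{1}_{\mathcal{D}(C)}$ and pass to the limit.

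First I would verify that $\mathcal{D}(C)$ is itself Borel. This is exactly where Lemma \ref{lema22} is used: it gives
\[
 \mathcal{D}(C)=\bigl\{x\in\mathfrak{h}:\sup_{n\in\mathbb{N}}\|CR_n x\|<\infty\bigr\}
 =\bigcup_{m=1}^{\infty}\bigcap_{k=1}^{\infty}\bigl\{x\in\mathfrak{h}:\|CR_k x\|\le m\bigr\}.
\]
Each set $\{x:\|CR_kx\|\le m\}$ is closed since $CR_k\in\mathfrak{L}(\mathfrak{h})$, so $\mathcal{D}(C)$ is an $F_\sigma$ set; in particular $\mathbf{1}_{\mathcal{D}(C)}:\mathfrak{h}\to\mathbb{R}$ is Borel measurable.

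Next I would define $\phi_n:\mathfrak{h}\to\mathfrak{h}$ by
\[
 \phi_n(x)=\mathbf{1}_{\mathcal{D}(C)}(x)\,LR_n x.
\]
Each $\phi_n$ is Borel measurable as the product of a Borel scalar function and a continuous (hence Borel measurable) $\mathfrak{h}$-valued map. I then claim $\phi_n(x)\to L\pi_C(x)$ for every $x\in\mathfrak{h}$. For $x\notin\mathcal{D}(C)$ this is trivial: $\phi_n(x)=0=L\pi_C(x)$. For $x\in\mathcal{D}(C)$ the standard Yosida facts $R_nx\to x$ and $CR_nx\to Cx$ in $\mathfrak{h}$ yield $R_nx\to x$ in the graph norm $\|\cdot\|_C$; since $L\in\mathfrak{L}((\mathcal{D}(C),\|\cdot\|_C),\mathfrak{h})$ is graph-norm continuous, $LR_nx\to Lx=L\pi_C(x)$.

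The conclusion follows from the fact that the pointwise limit of a sequence of Borel measurable maps into a separable metric space is Borel measurable. There is no real obstacle in this argument; the only nontrivial point is recognizing that the resolvent characterization of $\mathcal{D}(C)$ provided by Lemma \ref{lema22} both makes $\mathcal{D}(C)$ Borel and supplies the natural measurable approximation $LR_n$ that converges on $\mathcal{D}(C)$ in the graph norm.
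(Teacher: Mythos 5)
Your proposal is correct and follows essentially the same route as the paper: both use Lemma \ref{lema22} to see that $\mathcal{D}(C)$ is Borel (so $\pi_C$, equivalently your indicator $\mathbf{1}_{\mathcal{D}(C)}$, is measurable), both approximate via the bounded operators $LR_n$ with $R_n = n(n+C)^{-1}$, and both conclude by pointwise convergence $LR_n x \to Lx$ on $\mathcal{D}(C)$ from $R_n x \to x$, $CR_n x \to Cx$ and the graph-norm boundedness of $L$. Your map $\phi_n$ coincides with the paper's $LR_n \circ \pi_C$, so this is the same argument, with the Borel-ness of $\mathcal{D}(C)$ spelled out slightly more explicitly.
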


\begin{proof}
Let $R_{n}$ be as in Lemma \ref{lema22}.
Using Lemma \ref{lema22} we obtain that  $\mathcal{D}\left( C\right)$ is a Borel set of $\mathfrak{h}$
since $C R_{n} \in \mathfrak{L}\left( \mathfrak{h} \right)$,
and so
$\pi_{C}: \left(\mathfrak{h}, \mathcal{B} \left( \mathfrak{h} \right) \right) \rightarrow \left(\mathfrak{h}, \mathcal{B} \left( \mathfrak{h} \right) \right)$ 
is measurable.
Since  the range of $R_{n}$ is a subset of  $\mathcal{D}\left( C\right)$
and  $L \in \mathfrak{L}\left( \left( \mathcal{D}\left( C\right), \left\Vert \cdot \right\Vert _{C}\right)  ,\mathfrak{h}\right)$,
$LR_{n} \in  \mathfrak{L}\left( \mathfrak{h} \right)$.
 Hence $L  R_{n} \circ \pi_{C}$ is measurable.
 It follows from 
 $R_{n}  \longrightarrow_{n \rightarrow \infty} I$ and 
 $$ C R_{n} x \longrightarrow_{n \rightarrow \infty} Cx$$
that
$
L R_{n}  \circ \pi_{C} \longrightarrow_{n \rightarrow \infty}  
L \circ \pi_{C}
$,
which implies the measurability of $ L \circ \pi_{C}$.
\end{proof}

\subsection{Proof of Theorem \ref{teorema1}}
\label{subsec:existence}

First, we extend  the inequality given in Condition  H2.3  to  $\mathcal{D}\left( C^{2} \right)$.

\begin{remark}
 \label{nota2}
 Let $L$ be a closable operator in $\mathfrak{h}$ such that 
$\mathcal{D}\left( C\right) \subset \mathcal{D}\left( L\right) $, 
with $C$  self-adjoint positive operator in $\mathfrak{h}$.  
Applying the closed graph theorem gives 
$L \in \mathfrak{L}\left( \left( \mathcal{D}\left( C\right) ,\left\Vert \cdot\right\Vert _{C}\right) ,\mathfrak{h}\right)$. 
\end{remark}

\begin{lemma}
\label{lema2i}
Suppose that $C$ satisfies Conditions H2.1 - H2.3 of Hypothesis \ref{hyp:CF-inequality}.
If $x$ belongs to $\mathcal{D}\left( C^{2} \right)$ and $t \geq 0$,
then 
$L_{\ell}\left( t \right) x \in  \mathcal{D}\left( C \right) $
for any $\ell \in \mathbb{N}$, and 
\begin{equation}
 \label{eq:2}
  2\Re\left\langle C^{2} x, G \left( t \right) x\right\rangle 
 +\sum_{\ell=1}^{\infty }\left\Vert C L_{\ell} \left( t \right) x \right\Vert ^{2}
 \leq \alpha \left( t \right) \left\Vert x\right\Vert_{C}^{2} .
\end{equation}
\end{lemma}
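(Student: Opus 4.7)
The strategy is a density/closability argument exploiting the fact that $\mathfrak{D}_{1}$ is a core for $C^{2}$. Pick $x\in\mathcal{D}(C^{2})$ and choose $x_{n}\in\mathfrak{D}_{1}$ with $x_{n}\to x$ and $C^{2}x_{n}\to C^{2}x$ in $\mathfrak{h}$. Since $C$ is self-adjoint and non-negative, $\|C(x_{n}-x)\|^{2}=\langle C^{2}(x_{n}-x),x_{n}-x\rangle\to 0$, hence also $Cx_{n}\to Cx$ and $\|x_{n}\|_{C}\to\|x\|_{C}$. By Conditions H2.1 and H2.2, the operators $G(t)$ and $L_{\ell}(t)$ are bounded from $(\mathcal{D}(C),\|\cdot\|_{C})$ into $\mathfrak{h}$, so $G(t)x_{n}\to G(t)x$ and, for every fixed $\ell$, $L_{\ell}(t)x_{n}\to L_{\ell}(t)x$ in $\mathfrak{h}$.

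Next, apply H2.3 to each $x_{n}$:
\begin{equation*}
\sum_{\ell=1}^{\infty}\|CL_{\ell}(t)x_{n}\|^{2}
\le \alpha(t)\|x_{n}\|_{C}^{2}-2\Re\langle C^{2}x_{n},G(t)x_{n}\rangle.
\end{equation*}
The right-hand side converges (as $n\to\infty$) to $\alpha(t)\|x\|_{C}^{2}-2\Re\langle C^{2}x,G(t)x\rangle$ by the strong convergences above, so
\begin{equation*}
M:=\sup_{n}\sum_{\ell=1}^{\infty}\|CL_{\ell}(t)x_{n}\|^{2}<\infty.
\end{equation*}

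The main obstacle is showing the membership $L_{\ell}(t)x\in\mathcal{D}(C)$; this is where closedness of $C$ does the work. Fix $\ell$. Since $\|CL_{\ell}(t)x_{n}\|$ is bounded in $n$, the Banach--Alaoglu theorem produces a subsequence (still labelled $n$) along which $CL_{\ell}(t)x_{n}$ converges weakly to some $z_{\ell}\in\mathfrak{h}$. Combined with $L_{\ell}(t)x_{n}\to L_{\ell}(t)x$ strongly and the self-adjointness of $C$, for every $w\in\mathcal{D}(C)$ we get
\begin{equation*}
\langle z_{\ell},w\rangle=\lim_{n}\langle CL_{\ell}(t)x_{n},w\rangle=\lim_{n}\langle L_{\ell}(t)x_{n},Cw\rangle=\langle L_{\ell}(t)x,Cw\rangle,
\end{equation*}
which shows $L_{\ell}(t)x\in\mathcal{D}(C^{*})=\mathcal{D}(C)$ and $CL_{\ell}(t)x=z_{\ell}$. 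Weak lower semicontinuity of the norm then yields $\|CL_{\ell}(t)x\|^{2}\le\liminf_{n}\|CL_{\ell}(t)x_{n}\|^{2}$.

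Finally, pass to the limit in the full inequality. Using strong convergence $C^{2}x_{n}\to C^{2}x$ and $G(t)x_{n}\to G(t)x$, the cross term $2\Re\langle C^{2}x_{n},G(t)x_{n}\rangle$ converges to $2\Re\langle C^{2}x,G(t)x\rangle$. Applying Fatou's lemma (on the counting measure in $\ell$) together with the weak lower semicontinuity established above gives
\begin{equation*}
2\Re\langle C^{2}x,G(t)x\rangle+\sum_{\ell=1}^{\infty}\|CL_{\ell}(t)x\|^{2}\le \liminf_{n}\Bigl(2\Re\langle C^{2}x_{n},G(t)x_{n}\rangle+\sum_{\ell=1}^{\infty}\|CL_{\ell}(t)x_{n}\|^{2}\Bigr)\le \alpha(t)\|x\|_{C}^{2},
\end{equation*}
which is \eqref{eq:2}. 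The only delicate point is the weak-limit identification of $CL_{\ell}(t)x$ above; everything else is bookkeeping based on the bounds in H2.1--H2.2 and the choice of core.
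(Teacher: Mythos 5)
Your argument is correct, but it reaches the two conclusions (membership $L_{\ell}(t)x\in\mathcal{D}(C)$ and the inequality \eqref{eq:2}) by a different mechanism than the paper. The paper exploits the fact that the core $\mathfrak{D}_{1}$ is a linear subspace and applies H2.3 to the differences $x_{n}-x_{n'}$: since $G(t)$ is relatively bounded with respect to $C^{2}$ (Remark \ref{nota2} plus H2.1), this shows that $\bigl(CL_{\ell}(t)x_{n}\bigr)_{\ell}$ is Cauchy in the $\ell^{2}(\mathbb{N};\mathfrak{h})$ sense, and closedness of $C$ then gives \emph{strong} convergence $CL_{\ell}(t)x_{n}\to CL_{\ell}(t)x$, so one can pass to the limit in H2.3 with genuine limits (in particular the sums converge). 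You instead apply H2.3 to each $x_{n}$ separately, obtain only a uniform bound, and recover the domain membership by weak compactness together with the identification of the weak limit through $C^{*}=C$, finishing with lower semicontinuity and Fatou; this yields exactly the inequality the lemma asks for, though not the strong convergence the paper gets for free, and it is close in spirit to the Banach--Alaoglu argument the paper itself uses in Lemma \ref{lema22}. One small bookkeeping point in your final step: the subsequence produced by weak compactness depends on $\ell$, so to sum over $\ell$ you should either observe that the weak limit is independent of the subsequence (hence the whole sequence $CL_{\ell}(t)x_{n}$ converges weakly to $CL_{\ell}(t)x$, giving $\|CL_{\ell}(t)x\|\le\liminf_{n}\|CL_{\ell}(t)x_{n}\|$ along the full sequence) or argue with finite partial sums $\sum_{\ell\le N}$ before letting $N\to\infty$; with that made explicit, the proof is complete.
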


\begin{proof}
Since $\mathfrak{D}_{1}$ is a core of $C^{2}$,  there exists a sequence $\left( x_{n} \right)_{n \in \mathbb{N}}$ in $\mathfrak{D}_{1}$ converging to $x$ such that
$
C^{2}x_{n} \longrightarrow _{n \rightarrow \infty} C^{2}x
$.
Using Remark \ref{nota2} and Condition H2.1 we deduce that
 $G \left( t \right) \in \mathfrak{L}\left( \left( \mathcal{D}\left( C^{2}\right), \left\Vert \cdot \right\Vert _{C^{2}}\right),\mathfrak{h}\right)$,
 and so   Condition H2.3 leads to
\begin{equation}
 \label{eq:1}
 \lim _{n, n^\prime \rightarrow \infty} \sum_{\ell=1}^{\infty }
\left\Vert C L_{\ell} \left( t \right) \left(x_{n}-x_{n^\prime}\right) \right\Vert ^{2} =0.
\end{equation}
By $C$ is closed,  
from (\ref{eq:1}) we have 
$ L_{\ell} \left( t \right) x \in \mathcal{D}\left( C \right)$ 
and
$C L_{\ell} \left( t \right) x_{n} \rightarrow C L_{\ell} \left( t \right) x$ as $n \rightarrow \infty$.
Then (\ref{eq:2}) follows immediately, 
because (\ref{eq:2}) is true for $x_n$ for all $n$. 
 
\end{proof}

The inequality of Condition H2.4 can be immediately extended to $\mathcal{D}\left( C \right)$, by the definition of core and Fatou's lemma, following the lines of the proof of Lemma \ref{lema2i}.

\begin{lemma}
\label{lema5}
Under Conditions H2.1, H2.2 and H2.4, 
for all $x$ in $\mathcal{D}\left( C \right)$ we have
\[
2\Re\left\langle  x, G \left( t \right) x\right\rangle 
+\sum_{k=1}^{\infty }\left\Vert  L_{\ell} \left( t \right) x \right\Vert ^{2}\leq 0.
\]
\end{lemma}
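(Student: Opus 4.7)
The plan is to extend the inequality from the core $\mathfrak{D}_{2}$ to all of $\mathcal{D}(C)$ by a standard graph-closure argument, using Conditions H2.1 and H2.2 as $\|\cdot\|_{C}$-continuity estimates and Fatou's lemma to control the infinite sum of squared noise norms. This is the non-quadratic counterpart of Lemma \ref{lema2i}, so the structure should be parallel but with strict inequality $\leq 0$ preserved in the limit.

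First I would fix $t\geq 0$ and an arbitrary $x\in\mathcal{D}(C)$. Because $\mathfrak{D}_{2}$ is a core of $C$, there exists a sequence $(x_{n})_{n\geq 1}\subset\mathfrak{D}_{2}$ with $x_{n}\to x$ and $Cx_{n}\to Cx$ in $\mathfrak{h}$, i.e.\ $\|x_{n}-x\|_{C}\to 0$. By Condition H2.1 together with Remark \ref{nota2}, $G(t)\in\mathfrak{L}((\mathcal{D}(C),\|\cdot\|_{C}),\mathfrak{h})$, hence $G(t)x_{n}\to G(t)x$; combined with $x_{n}\to x$ this gives $\Re\langle x_{n},G(t)x_{n}\rangle\to\Re\langle x,G(t)x\rangle$. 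Analogously, Condition H2.2 implies that each $L_{\ell}(t)$ is bounded from $(\mathcal{D}(C),\|\cdot\|_{C})$ into $\mathfrak{h}$, so for every fixed $\ell\in\mathbb{N}$, $L_{\ell}(t)x_{n}\to L_{\ell}(t)x$ and consequently $\|L_{\ell}(t)x_{n}\|^{2}\to\|L_{\ell}(t)x\|^{2}$.

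Second, I would apply Condition H2.4 to each $x_{n}\in\mathfrak{D}_{2}$ in the rearranged form
\[
\sum_{\ell=1}^{\infty}\|L_{\ell}(t)x_{n}\|^{2}\leq -2\Re\langle x_{n},G(t)x_{n}\rangle ,
\]
and pass to the limit inferior. Fatou's lemma with respect to the counting measure on $\mathbb{N}$ yields
\[
\sum_{\ell=1}^{\infty}\|L_{\ell}(t)x\|^{2}\leq\liminf_{n\to\infty}\sum_{\ell=1}^{\infty}\|L_{\ell}(t)x_{n}\|^{2}\leq -2\Re\langle x,G(t)x\rangle ,
\]
which is the claimed inequality.

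I do not expect any genuine obstacle: H2.1 and H2.2 provide exactly the two continuities needed to pass to the limit in the quadratic form $2\Re\langle \cdot,G(t)\cdot\rangle$ and termwise in $\|L_{\ell}(t)\cdot\|^{2}$, and Fatou's lemma handles the series. The only point meriting care is that H2.2 is an estimate for each $L_{\ell}(t)$ separately, not for the full sum, so one must use Fatou rather than dominated convergence; this is precisely why the extension preserves the inequality without requiring a stronger summability assumption on the family $\{L_{\ell}(t)\}$.
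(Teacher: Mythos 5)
Your proposal is correct and is essentially the argument the paper intends: approximate $x$ in the graph norm by a sequence from the core $\mathfrak{D}_{2}$, use the $C$-relative bounds of H2.1 and H2.2 to pass to the limit in $\Re\langle x_{n},G(t)x_{n}\rangle$ and termwise in $\|L_{\ell}(t)x_{n}\|^{2}$, and invoke Fatou's lemma for the series. The paper proves Lemma \ref{lema5} exactly this way, as a short extension by core density and Fatou "following the lines of the proof of Lemma \ref{lema2i}".
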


In contrast to  \cite{MoraReIDAQP}, where we used the Galerkin method, 
in the proof of Theorem \ref{teorema1} we obtain $X_{t}\left( \xi \right)$
as the $L^{2}\left( \mathbb{P} , \mathfrak{h}\right)$-weak limit of the 
solutions to the sequence of stochastic evolution equations (\ref{3.29}) given below.

\begin{definition}
\label{definicion1}
Let Hypothesis \ref{hyp:L-G-C-domain} hold,
together with Conditions H2.1 and H2.2.
Suppose that $\xi$ is a $\mathfrak{F}_{0}$-measurable random variable belonging 
to $L^{2}\left( \mathbb{P},\mathfrak{h}\right) $. For each natural number $n$, 
 we  define $X^{n}$   to be the unique continuous solution of 
 \begin{equation}
\label{3.29}
X_{t}^{n}  =
\xi +\int_{0}^{t}G^{n}\left( s \right) X_{s}^{n}  ds 
+ \sum_{\ell=1}^{n}\int_{0}^{t}L_{\ell}^{n} \left( s \right) X_{s}^{n} dW_{s}^{\ell},
\end{equation}
where  $G^{n} \left( s \right) = \widetilde{R}_{n}G \left( s \right) \widetilde{R}_{n}$ 
and $L_{\ell}^{n}  \left( s \right) = L_{\ell} \left( s \right) \widetilde{R}_{n}$ with 
$
\widetilde{R}_{n} = n  \left(n+C^{2} \right)^{-1}
$.
\end{definition}

\begin{remark}
Recall that  $C^{2} \widetilde{R}_{n} \in \mathfrak{L}\left( \mathfrak{h} \right)$
and $\left\Vert \widetilde{R}_{n} \right\Vert \leq 1$. As a consequence, $X^{n}$ is well-defined
because H2.1 and H2.2 imply that  $G^{n} \left( t \right)$ and $L_{\ell}^{n}  \left( t \right)$
are  bounded operators in $ \mathfrak{h}$
whose norms are uniformly bounded on compact time intervals.
\end{remark}

Though the next three   estimates for $ X^{n} $ 
essentially coincide with those given in Lemma 2.3 of  \cite{MoraReIDAQP},
the infinite-dimensional nature of (\ref{3.29})
forces us to 
use a more refined analysis.

\begin{lemma}
 \label{lema1}
 Adopt Hypothesis \ref{hyp:L-G-C-domain}, together with Conditions H2.1, H2.2 and H2.4.
Then for any $t \geq 0$,
$
 \mathbb{E} \left\Vert X_{t}^{n} \right\Vert ^{2} \leq  \mathbb{E} \left\Vert \xi \right\Vert ^{2} 
 $.
 Moreover, for all $x \in \mathfrak{h}$ and  $t \geq 0$ we have
 \begin{equation}
\label{3.37}
2\Re\left\langle  x, G^{n} \left( t \right) x\right\rangle 
+\sum_{\ell=1}^{\infty }\left\Vert  L_{\ell}^{n} \left( t \right) x \right\Vert ^{2}
\leq 0.
\end{equation}
\end{lemma}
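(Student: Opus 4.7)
The strategy is to address the algebraic inequality \eqref{3.37} first, because the $L^2$-bound will then follow from It\^o's formula once the drift in $\|X_t^n\|^2$ is shown to be non-positive.

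For \eqref{3.37}, I would exploit that $\widetilde R_n=n(n+C^2)^{-1}$ is bounded, self-adjoint, of norm $\leq 1$, and has range contained in $\mathcal D(C^2)\subset \mathcal D(C)$. Fix $x\in\mathfrak h$ and set $y=\widetilde R_n x$. Then $y\in\mathcal D(C)$, so by Hypothesis \ref{hyp:L-G-C-domain} both $G(t)y$ and $L_\ell(t)y$ are well defined. By self-adjointness of $\widetilde R_n$,
\[
\langle x,G^n(t)x\rangle=\langle \widetilde R_n x,G(t)\widetilde R_n x\rangle=\langle y,G(t)y\rangle,
\qquad
L_\ell^n(t)x=L_\ell(t)y.
\]
Plugging into the left hand side of \eqref{3.37} and invoking Lemma \ref{lema5} applied to $y\in\mathcal D(C)$ immediately yields the desired non-positivity.

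For the norm estimate, I would apply the finite-dimensional (in the noise index) It\^o formula to $\|X_t^n\|^2$, legitimate because $G^n(s)$ and $L_\ell^n(s)$ are bounded operators with norms uniformly controlled on every compact time interval. This gives
\[
\|X_t^n\|^2=\|\xi\|^2+\int_0^t\!\Bigl(2\Re\langle X_s^n,G^n(s)X_s^n\rangle+\sum_{\ell=1}^n\|L_\ell^n(s)X_s^n\|^2\Bigr)ds+M_t,
\]
where $M_t=2\sum_{\ell=1}^n\int_0^t\Re\langle X_s^n,L_\ell^n(s)X_s^n\rangle dW_s^\ell$ is a continuous local martingale. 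By the first part, the $ds$-integrand is pointwise $\leq 0$. A standard localization by $\tau_N=\inf\{s\geq 0:\|X_s^n\|\geq N\}$ turns $M_{\cdot\wedge\tau_N}$ into a genuine martingale; taking expectations gives $\mathbb E\|X_{t\wedge\tau_N}^n\|^2\leq\mathbb E\|\xi\|^2$. Letting $N\to\infty$ and using Fatou's lemma (together with continuity of paths) concludes $\mathbb E\|X_t^n\|^2\leq\mathbb E\|\xi\|^2$.

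The only mildly delicate point is the passage from the local-martingale identity to the expectation bound; everything else is a direct algebraic manipulation leveraging the self-adjointness of $\widetilde R_n$ and Lemma \ref{lema5}. No use of Conditions H2.3 is needed here, which is consistent with the statement of the lemma listing only H2.1, H2.2 and H2.4 as hypotheses.
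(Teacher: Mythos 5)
Your argument is correct and matches the paper's proof essentially step for step: the inequality (\ref{3.37}) is obtained by writing the left-hand side as $2\Re\langle \widetilde R_n x, G(t)\widetilde R_n x\rangle+\sum_\ell\|L_\ell(t)\widetilde R_n x\|^2$ and applying Lemma \ref{lema5} to $\widetilde R_n x\in\mathcal D(C^2)\subset\mathcal D(C)$, and the bound $\mathbb E\|X_t^n\|^2\le\mathbb E\|\xi\|^2$ then follows, as in the paper, from It\^o's formula, dropping the non-positive drift, localizing with the exit times of $\|X^n\|$, and Fatou's lemma.
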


\begin{proof}
Since the range of $\widetilde{R}_{n}$ is a subset of $\mathcal{D}\left( C^{2} \right)$,
Lemma \ref{lema5} leads to (\ref{3.37}). Using complex It\^o's formula we obtain
\begin{equation}
\label{3.30}
\left\Vert X_{t}^{n} \right\Vert ^{2} \leq  \left\Vert \xi \right\Vert ^{2} 
+ \sum_{\ell=1}^{n} \int_{0}^{t} 2\Re \left\langle  X_{s}^{n}, L_{\ell}^{n} 
\left( s \right) X_{s}^{n} \right\rangle dW_{s}^{\ell}.
\end{equation}
Set $\tau _{j} = \inf{ \left\{  t \geq0: \left\Vert  X_{t}^{n}  \right\Vert > j \right\} }$. 
Then $\tau _{j} \nearrow \infty$ as $j \rightarrow \infty$,
because $X^{n}$ is pathwise continuous.
By (\ref{3.30}), Fatou's lemma  yields
$
\mathbb{E} \left\Vert X_{t}^{n} \right\Vert ^{2}  \leq  
\liminf_{j \rightarrow \infty} \mathbb{E} \left\Vert X_{t \wedge \tau _{j} }^{n} \right\Vert ^{2} 
   \leq  \mathbb{E} \left\Vert \xi \right\Vert ^{2}
$.
\end{proof}

\begin{lemma}
 \label{lema3}
 Let Hypothesis \ref{hyp:CF-inequality} hold.
 If $\xi \in L_{C}^{2}\left( \mathbb{P},\mathfrak{h}\right) $,
 then
 \begin{equation}
 \label{3.35}
\mathbb{E} \left\Vert C  X_{t}^{n} \right\Vert ^{2} 
\leq 
\exp \left( t \alpha \left( t \right) \right) \left( \mathbb{E} \left\Vert C  \xi  \right\Vert ^{2} 
+ t \alpha\left( t \right)  \mathbb{E} \left\Vert  \xi  \right\Vert ^{2} \right).
\end{equation}
\end{lemma}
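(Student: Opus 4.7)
The plan is to apply It\^o's formula to $\|CX_t^n\|^2$ and then use Lemma \ref{lema2i} together with Gr\"onwall's inequality. Two preliminary remarks make the argument work. First, because $\widetilde{R}_n = n(n+C^2)^{-1}$ is a bounded Borel function of $C$, it commutes with $C$ on $\mathcal{D}(C)$, satisfies $\|\widetilde{R}_n\|\le 1$, and $C\widetilde{R}_n$ is a bounded operator on $\mathfrak{h}$ by the functional calculus. Second, $X_t^n$ stays in $\mathcal{D}(C)$ almost surely whenever $\xi\in\mathcal{D}(C)$: indeed, the range of $\widetilde{R}_n$ lies in $\mathcal{D}(C^2)$, so $G^n(s)X_s^n\in\mathcal{D}(C^2)\subset\mathcal{D}(C)$, and Lemma \ref{lema2i} applied to $\widetilde{R}_n X_s^n\in\mathcal{D}(C^2)$ shows $L_\ell^n(s)X_s^n=L_\ell(s)\widetilde{R}_n X_s^n\in\mathcal{D}(C)$.

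With these remarks in hand, I would apply complex It\^o's formula to the function $x\mapsto\|Cx\|^2$ (rigorously, first to $x\mapsto\|CR_m x\|^2$ with $R_m=m(m+C)^{-1}$, which is everywhere defined and $C^2$, and then let $m\to\infty$ using $CR_m\to C$ strongly on $\mathcal{D}(C)$ and $\|CR_m\|_{\mathfrak{L}(\mathcal{D}(C),\mathfrak{h})}\le 1$). The resulting expansion reads
\begin{equation*}
\|CX_t^n\|^2 = \|C\xi\|^2 + \int_0^t\!\Bigl(2\Re\langle C^2 X_s^n,G^n(s)X_s^n\rangle+\sum_{\ell=1}^n\|CL_\ell^n(s)X_s^n\|^2\Bigr)ds + M_t,
\end{equation*}
where $M_t$ is the stochastic integral term. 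Using self-adjointness of $\widetilde{R}_n$ and its commutation with $C$, the integrand equals
\begin{equation*}
2\Re\langle C^2\widetilde{R}_n X_s^n,G(s)\widetilde{R}_n X_s^n\rangle+\sum_{\ell=1}^n\|CL_\ell(s)\widetilde{R}_n X_s^n\|^2.
\end{equation*}
Since $\widetilde{R}_n X_s^n\in\mathcal{D}(C^2)$, Lemma \ref{lema2i} bounds this by $\alpha(s)\|\widetilde{R}_n X_s^n\|_C^2$. Then $\|\widetilde{R}_n\|\le 1$ and $\|C\widetilde{R}_n x\|=\|\widetilde{R}_n Cx\|\le\|Cx\|$ give the further estimate $\alpha(s)(\|X_s^n\|^2+\|CX_s^n\|^2)$.

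At this point I would localize with $\tau_j=\inf\{t\ge 0:\|CX_t^n\|>j\}$, which is finite a.s.\ but $\tau_j\nearrow\infty$ by pathwise continuity of $CX_\cdot^n$ (valid because $CR_m X_\cdot^n$ is continuous and converges uniformly on compacts to $CX_\cdot^n$, or equivalently since $\|C\widetilde{R}_n\|<\infty$ makes the SDE a genuine $\mathfrak{h}$-valued SDE in which $\|C\cdot\|$ is continuous in probability). On $[0,t\wedge\tau_j]$ the local martingale $M$ is a true martingale; taking expectations and invoking Lemma \ref{lema1} to control $\mathbb{E}\|X_s^n\|^2\le\mathbb{E}\|\xi\|^2$ produces
\begin{equation*}
\mathbb{E}\|CX_{t\wedge\tau_j}^n\|^2\le \mathbb{E}\|C\xi\|^2+t\alpha(t)\mathbb{E}\|\xi\|^2+\alpha(t)\int_0^t\mathbb{E}\|CX_{s\wedge\tau_j}^n\|^2\,ds.
\end{equation*}
Gr\"onwall's inequality gives a bound uniform in $j$, and Fatou's lemma as $j\to\infty$ yields (\ref{3.35}). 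The main obstacle is the measure-theoretic bookkeeping around It\^o's formula applied to the unbounded quadratic functional $\|C\cdot\|^2$: one needs the regularization by $R_m$ (or equivalently the assertion that $C\widetilde{R}_n X_t^n$ is a continuous semimartingale) and the localization by $\tau_j$ to turn the local martingale into a true martingale; once these technicalities are settled, the estimate follows mechanically from Lemma \ref{lema2i} and the commutation of $\widetilde{R}_n$ with $C$.
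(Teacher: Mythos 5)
Your argument is correct and follows essentially the same route as the paper: commute $\widetilde{R}_n$ with $C$, invoke Lemma \ref{lema2i} to get the quadratic-form bound $\alpha(s)\left(\Vert X_s^n\Vert^2+\Vert CX_s^n\Vert^2\right)$, localize, take expectations, and conclude by Gronwall and Fatou. The only difference is presentational: where you regularize It\^o's formula with $R_m=m(m+C)^{-1}$ and pass to the limit, the paper settles the same "bookkeeping" you flag by introducing the continuous It\^o process $Y^n=C\xi+\int_0^\cdot CG^n(s)X_s^n\,ds+\sum_\ell\int_0^\cdot CL_\ell^n(s)X_s^n\,dW_s^\ell$ (with $CG^n(t)$, $CL_\ell^n(t)$ bounded), identifying $CX_t^n=Y_t^n$ via Propositions 1.6 and 4.15 of \cite{DaPrato}, and then applying It\^o to $\Vert Y^n\Vert^2$.
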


\begin{proof}
 
 Combining Condition H2.1 with  Lemma \ref{lema2i} we obtain that
 $CG^{n} \left( t \right) $ and $ C L_{\ell}^{n} \left( t \right)$ 
 are bounded operators on $\mathfrak{h}$ 
 whose norms are uniformly bounded on compact intervals.
 Lemma  \ref{lema1}  gives
 $\mathbb{E} \left\Vert C G^{n} \left( t \right) X_{t}^{n} \right\Vert ^{2}
  \leq 
  \left\Vert C G^{n} \left( t \right) \right\Vert ^{2} \mathbb{E} \left\Vert \xi \right\Vert ^{2}
 $
 and 
 $$
 \mathbb{E} \left\Vert C L_{\ell}^{n} \left( t \right) X_{t}^{n} \right\Vert ^{2} 
 \leq  
 \left\Vert C L_{\ell}^{n} \left( t \right) \right\Vert ^{2} \mathbb{E} \left\Vert \xi \right\Vert ^{2} .
$$
Therefore
$
CX_{t}^{n} = Y_{t}^{n}
$
$a.s.$ for any $t \geq 0$,
where
\[
Y^{n} = C \xi +\int_{0}^{\cdot}CG^{n} \left( s \right) X_{s}^{n} ds 
+ \sum_{\ell=1}^{n}\int_{0}^{\cdot}CL_{\ell}^{n} \left( s \right) X_{s}^{n} dW_{s}^{\ell}.
\]
This follows from, for instance, Propositions 1.6 and 4.15 of  \cite{DaPrato}.

Since $\widetilde{R}_{n}$ commutes with both $C$ and $C^2$,
using Lemma \ref{lema2i} and $\left\Vert \widetilde{R}_{n} \right\Vert \leq 1$
we deduce that for any  $x \in \mathcal{D}\left( C^{2} \right)$ and $t \geq 0$,
 \begin{eqnarray*}
&& 
2\Re\left\langle C x, CG^{n} \left( t \right) x\right\rangle 
 +\sum_{\ell=1}^{n}\left\Vert C L_{\ell}^{n} \left( t \right) x \right\Vert ^{2} 
 \\
 & \leq &
 2\Re \left\langle C^{2}\widetilde{R}_{n} x, G \left( t \right) \widetilde{R}_{n}x \right\rangle
 +\sum_{\ell=1}^{ \infty}\left\Vert C L_{\ell} \left( t \right) \widetilde{R}_{n}x \right\Vert ^{2} 
 \leq 
  \alpha \left( t \right)  \left\Vert \widetilde{R}_{n}x\right\Vert_{C}^{2} 
 \leq 
  \alpha \left( t \right)  \left\Vert x\right\Vert_{C}^{2}.
 \end{eqnarray*}
 As $\mathcal{D}\left( C^{2} \right)$ is a core of $C$, 
by a passage to the limit we get that for all $y \in \mathcal{D}\left( C \right)$ and $t \geq 0$,
\begin{equation}
\label{3.34}
 2\Re\left\langle C y, CG^{n} \left( t \right) y\right\rangle 
 +\sum_{\ell=1}^{n}\left\Vert C L_{\ell}^{n} \left( t \right) y \right\Vert ^{2} 
 \leq 
 \alpha \left( t \right) \left\Vert y\right\Vert_{C}^{2} .
 \end{equation}

Finally, 
choose  $\tau _{j} = \inf{ \left\{  t \geq0: \left\Vert  Y_{t}^{n}  \right\Vert > j \right\} }$.
Applying It\^o's formula yields
\[
 \mathbb{E}  \left\Vert Y_{t \wedge \tau _{j} }^{n}  \right\Vert ^{2}   = 
 \mathbb{E}  \left\Vert C \xi  \right\Vert ^{2} 
 +  \mathbb{E} \int_{0}^{t \wedge \tau _{j} } \left(
 2\Re\left\langle Y_{s}^{n}, CG^{n}  \left( s \right)  X_{s}^{n}\right\rangle 
 +\sum_{\ell=1}^{n}\left\Vert C L_{\ell}^{n}  \left( s \right)  X_{s}^{n} \right\Vert ^{2} \right) ds,
\]
because 
$
\mathbb{E} \left\vert \Re \left\langle  Y_{s \wedge \tau _{j} }^{n}, C L_{\ell}^{n} \left( s \right) X_{s}^{n}  \right\rangle  \right\vert ^{2} 
\leq 
j^{2} \left\Vert C L_{\ell}^{n}  \left( s \right)  \right\Vert ^{2} \mathbb{E} \left\Vert \xi \right\Vert ^{2}
$
by  Lemma \ref{lema1}.
Since $Y_{s}^{n} = CX_{s}^{n}$ a.s.,
combining  (\ref{3.34}) with Lemma \ref{lema1} we have
\[
 \mathbb{E}  \left\Vert Y_{t \wedge \tau _{j} }^{n}  \right\Vert ^{2} \leq 
  \mathbb{E}  \left\Vert C \xi \right\Vert ^{2} 
+  \alpha \left( t \right) \int_{0}^{t} \mathbb{E}  \left\Vert C  X_{s}^{n}  \right\Vert ^{2}  ds
 +  t  \alpha \left( t \right) \mathbb{E}  \left\Vert \xi \right\Vert ^{2}   ,
\]
and so
 $$
 \mathbb{E}  \left\Vert Y_{t}^{n}  \right\Vert ^{2} 
  \leq   \liminf_{j \rightarrow \infty}
 \mathbb{E}  \left\Vert Y_{t \wedge \tau _{j} }^{n}  \right\Vert ^{2} 
   \leq 
    \mathbb{E}  \left\Vert C \xi \right\Vert ^{2} 
  +  t \alpha \left( t \right)  \mathbb{E}  \left\Vert \xi \right\Vert ^{2} 
  +  \alpha \left( t \right) \int_{0}^{t} \mathbb{E}  \left\Vert Y_{s}^{n}  \right\Vert ^{2}  ds .
$$
The Gronwall-Bellman lemma now leads to (\ref{3.35}).
\end{proof}

\begin{lemma}
 \label{lema4}
 Fix $T >0$.
 Under the assumptions of Theorem \ref{teorema1},
 \begin{equation}
 \label{3.36}
\mathbb{E} \left\Vert  X_{t}^{n} - X_{s}^{n} \right\Vert ^{2} 
\leq K_{T, \xi} \left( t - s \right),
\end{equation}
where $0 \leq s \leq t < T$ and $K_{T, \xi}$ is a constant depending of $T$ and $\xi$.
\end{lemma}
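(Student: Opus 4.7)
The plan is to subtract the integral equation (\ref{3.29}) at times $s$ and $t$ to get
\[
X_t^n - X_s^n = \int_s^t G^n(r) X_r^n \, dr + \sum_{\ell=1}^n \int_s^t L_\ell^n(r) X_r^n \, dW_r^\ell,
\]
apply the elementary inequality $\|a+b\|^2 \le 2\|a\|^2 + 2\|b\|^2$, and then estimate the drift term with Cauchy--Schwarz and the stochastic integral with It\^o's isometry. The goal is to show that all resulting $r$-integrands are bounded uniformly in $n$ and in $r\in[0,T]$ by constants depending only on $T$ and $\xi$, so the total bound is proportional to $(t-s)$.

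For the drift, I would use Condition H2.1 together with the fact that $\widetilde{R}_n$ commutes with $C$ and satisfies $\|\widetilde{R}_n\|\le 1$: for $X_r^n\in\mathcal{D}(C)$ one has $\widetilde{R}_n X_r^n\in\mathcal{D}(C^2)$ and
\[
\|G^n(r) X_r^n\|^2 \le \|G(r)\widetilde{R}_n X_r^n\|^2 \le K(r)\|\widetilde{R}_n X_r^n\|_C^2 \le K(r)\|X_r^n\|_C^2.
\]
Then Cauchy--Schwarz gives $\mathbb{E}\|\int_s^t G^n(r)X_r^n dr\|^2 \le (t-s)\int_s^t K(r)\mathbb{E}\|X_r^n\|_C^2 dr$, and Lemmas~\ref{lema1} and \ref{lema3} yield a uniform-in-$n$ bound $\sup_{r\in[0,T]}\mathbb{E}\|X_r^n\|_C^2 \le M_{T,\xi}$; thus this contribution is at most $KT(t-s)M_{T,\xi}$ using monotonicity of $K$.

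For the martingale part, It\^o's isometry reduces matters to controlling $\mathbb{E}\int_s^t \sum_{\ell=1}^n \|L_\ell^n(r) X_r^n\|^2 dr$. Here the obstacle is that Condition H2.2 only bounds each $L_\ell$ separately, not the series. I would instead apply Lemma~\ref{lema5} (the extension of H2.4 to $\mathcal{D}(C)$) to $\widetilde{R}_n X_r^n \in \mathcal{D}(C)$, which yields
\[
\sum_{\ell=1}^\infty \|L_\ell(r)\widetilde{R}_n X_r^n\|^2 \le -2\Re\langle \widetilde{R}_n X_r^n, G(r)\widetilde{R}_n X_r^n\rangle \le 2 K(r)^{1/2}\|X_r^n\|\,\|X_r^n\|_C,
\]
combining H2.1 with $\|\widetilde{R}_n\|\le 1$ and commutation of $\widetilde{R}_n$ with $C$. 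Taking expectations and applying the uniform bounds from Lemmas~\ref{lema1} and \ref{lema3} again, this integrand is bounded by a constant depending only on $T$ and $\xi$, so the stochastic-integral contribution is also $O(t-s)$. Adding the two estimates produces (\ref{3.36}); the main obstacle throughout is precisely this need to tame $\sum_\ell \|L_\ell^n(r) x\|^2$ without a direct summability hypothesis, which is why Lemma~\ref{lema5} is essential.
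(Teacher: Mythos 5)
Your argument is correct, and all the bounds you use are uniform in $n$, as needed for the later compactness step; but it follows a genuinely different route from the paper. The paper does not split $X_t^n-X_s^n$ into drift plus martingale: it applies It\^o's formula (after localizing with $\tau_j=\inf\{t:\|X_t^n\|>j\}$ and finishing with Fatou) to $\|X_{t\wedge\tau_j}^n-X_{s\wedge\tau_j}^n\|^2$ and then invokes the dissipativity inequality (\ref{3.37}) so that the entire series $\sum_{\ell}\|L_\ell^n(r)X_r^n\|^2$ is absorbed by the diagonal drift term $2\Re\langle X_r^n,G^n(r)X_r^n\rangle$; only the cross term $-2\Re\langle X_{s\wedge\tau_j}^n,G^n(r)X_r^n\rangle$ survives, which is estimated by H2.1 together with Lemmas \ref{lema1} and \ref{lema3}. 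Your version instead needs an explicit bound on the noise series, and the way you obtain it---Lemma \ref{lema5} plus H2.1, giving $\sum_{\ell}\|L_\ell(r)y\|^2\le -2\Re\langle y,G(r)y\rangle\le 2K(r)^{1/2}\|y\|\,\|y\|_C$ applied to $y=\widetilde{R}_nX_r^n$---is sound; in fact the paper derives exactly this bound later, in the proof of Lemma \ref{lema6}. What each approach buys: the paper's cancellation trick never requires controlling the series directly and handles integrability issues by stopping; your decomposition is more elementary (Cauchy--Schwarz for the drift, It\^o's isometry for the martingale part, no stopping time or Fatou needed since the boundedness of $G^n(r)$, $L_\ell^n(r)$ and Lemma \ref{lema1} already make the integrands square integrable), at the modest cost of the extra series estimate. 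Both proofs ultimately rest on the same a priori bounds of Lemmas \ref{lema1} and \ref{lema3} and produce a constant $K_{T,\xi}$ independent of $n$.
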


\begin{proof}
Consider $\tau _{j} = \inf{ \left\{  t \geq0: \left\Vert  X_{t}^{n}  \right\Vert > j \right\} }$. 
According to  It\^o's formula we have
\begin{align*}
 & 
 \mathbb{E} \left\Vert X_{t  \wedge \tau _{j}}^{n} - X_{s  \wedge \tau _{j}}^{n} \right\Vert ^{2}
 \\
 & 
 = 
\mathbb{E} \int_{s  \wedge \tau _{j}}^{t  \wedge \tau _{j}}
\left( 
2\Re\left\langle X_{r}^{n} - X_{s  \wedge \tau _{j}}^{n}, G^{n} \left( r \right) X_{r}^{n}\right\rangle 
+ \sum_{\ell=1}^{n}\left\Vert L_{\ell}^{n} \left( r \right) X_{r}^{n} \right\Vert ^{2} 
\right) ds,
\end{align*}
 and hence  (\ref{3.37}) leads to
\[
\mathbb{E} \left\Vert X_{t  \wedge \tau _{j}}^{n} - X_{s  \wedge \tau _{j}}^{n} \right\Vert ^{2} 
 \leq
- \mathbb{E} \int_{s  \wedge \tau _{j}}^{t  \wedge \tau _{j}} 2\Re\left\langle X_{s}^{n}, G^{n} \left( r \right) X_{r}^{n}\right\rangle dr.
\]
From Condition H2.1, 
$\left\Vert \widetilde{R}_{n} \right\Vert \leq 1$ and $\widetilde{R}_{n}C \subset C\widetilde{R}_{n}$ 
we deduce that 
$
\left\Vert G^{n} \left( t \right) x \right\Vert ^{2} \leq K \left( t \right) \left\Vert x \right\Vert_{C} ^{2}
$
for all $x \in \mathcal{D}\left( C \right)$.
Therefore
$$
\mathbb{E} \left\Vert X_{t  \wedge \tau _{j}}^{n} - X_{s  \wedge \tau _{j}}^{n} \right\Vert ^{2} 
 \leq
K \left( t \right) \mathbb{E} \int_{s  \wedge \tau _{j}}^{t  \wedge \tau _{j}} \left\Vert  X_{s}^{n} \right\Vert \left\Vert X_{r}^{n} \right\Vert_{C} dr
$$
by $X_{s}^{n} \in \mathcal{D}\left( C \right)$ a.s.,
and so Fatou's lemma  implies 
\[
\mathbb{E} \left\Vert X_{t}^{n} - X_{s}^{n} \right\Vert ^{2}  
 \leq 
\liminf_{j \rightarrow \infty} \mathbb{E} \left\Vert X_{t  \wedge \tau _{j}}^{n} - X_{s  \wedge \tau _{j}}^{n} \right\Vert ^{2}  
  \leq  
 K \left( t \right) \int_{s}^{t} \sqrt{\mathbb{E} \left\Vert  X_{r}^{n}  \right\Vert_{C} ^{2} } \sqrt{\mathbb{E} \left\Vert X_{s}^{n}  \right\Vert ^{2} } 
 dr.
\]
Applying Lemmata \ref{lema1} and \ref{lema3} we obtain (\ref{3.36}).
\end{proof}

We next obtain a strong $C$-solution of (\ref{eq:SSE}) by means of a limit procedure.

\begin{definition}
 For any natural number $n$, we define $\left(\mathfrak{G}_{s}^{\xi,n}\right)  _{s\geq0}$ to be the filtration that satisfies the usual hypotheses generated by $\xi$ and $W^{1},\ldots,W^{n}$. Let $t$ be a non-negative real number. By  $\mathfrak{G}_{t}^{\xi,W}$  we mean 
 the $\sigma$-algebra  generated by $\cup_{n\in\mathbb{N}}\mathfrak{G}_{t}^{\xi,n}$. As usual,  $\mathfrak{G}_{t+}^{\xi,W} = \cap_{\epsilon >0}\mathfrak{G}_{t+\epsilon}^{\xi,W} $.
 \end{definition}

 \begin{lemma}
\label{lema30}
Let the assumptions of Theorem \ref{teorema1} hold.
Fix $T > 0$. 
Then, we can extract from any subsequence of  
$\left( X^{n}   \right)_{n \in \mathbb{N}}$ a subsequence $\left( X^{n_{k}}   \right)_{k \in \mathbb{N}}$ for which there exists a  $\left(\mathfrak{G}_{t+}^{\xi,W}\right)_{t\in\left[  0,T\right]}$-predictable process $\left( Z_{t} \right) _{t\in\left[  0,T\right]}$ such that for any $t\in\left[  0,T\right]$,
\begin{equation}
\label{3.39}
X_{t}^{n_{k}}  \longrightarrow_{k\rightarrow\infty} Z_{t}  \qquad weakly\ in\ L^{2}\left(
\left( \Omega,\mathfrak{G}_{t}^{\xi ,W},\mathbb{P} \right), \mathfrak{h}\right).
\end{equation}
\end{lemma}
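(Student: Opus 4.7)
The plan is to combine the uniform a priori bounds from Lemmata \ref{lema1} and \ref{lema4} with a weak compactness / diagonal extraction argument, and then use the time-continuity estimate to promote weak convergence at a countable dense set of times to every $t\in[0,T]$.

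First, by Lemma \ref{lema1} we have $\sup_n\sup_{t\in[0,T]}\mathbb{E}\|X^n_t\|^2\le\mathbb{E}\|\xi\|^2$, so at each fixed $t$ the sequence $(X^n_t)_n$ is bounded in the reflexive Hilbert space $L^2(\Omega,\mathfrak{G}^{\xi,W}_t,\mathbb{P};\mathfrak{h})$. I would fix a countable dense set $D\subset[0,T]$ containing $0$ and $T$, and, given an arbitrary subsequence, apply Banach--Alaoglu together with a Cantor diagonal extraction to select a further subsequence $(X^{n_k})_k$ and random variables $Z_t$, $t\in D$, such that $X^{n_k}_t\rightharpoonup Z_t$ weakly in $L^2(\Omega;\mathfrak{h})$ for every $t\in D$. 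Because $X^{n_k}_t$ is $\mathfrak{G}^{\xi,n_k}_t\subset\mathfrak{G}^{\xi,W}_t$-measurable, and this measurability is preserved under weak $L^2$-limits, $Z_t$ is $\mathfrak{G}^{\xi,W}_t$-measurable for each $t\in D$.

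Next, I would extend the definition of $Z_t$ to every $t\in[0,T]$ using Lemma \ref{lema4}. Weak lower semicontinuity of the $L^2$-norm together with \eqref{3.36} gives, for $s,t\in D$,
$$
\mathbb{E}\|Z_t-Z_s\|^2\le\liminf_{k\to\infty}\mathbb{E}\|X^{n_k}_t-X^{n_k}_s\|^2\le K_{T,\xi}|t-s|,
$$
so the map $D\ni t\mapsto Z_t\in L^2(\Omega;\mathfrak{h})$ is uniformly continuous and extends uniquely to an $L^2$-continuous map on $[0,T]$. To check that weak convergence still holds at every extended $t$, I would take $y\in L^2(\Omega;\mathfrak{h})$ and $t_j\in D$ with $t_j\to t$, and use the triangle inequality
$$
|\mathbb{E}\langle X^{n_k}_t-Z_t,y\rangle|\le|\mathbb{E}\langle X^{n_k}_t-X^{n_k}_{t_j},y\rangle|+|\mathbb{E}\langle X^{n_k}_{t_j}-Z_{t_j},y\rangle|+|\mathbb{E}\langle Z_{t_j}-Z_t,y\rangle|.
$$
The first term is controlled by $\sqrt{K_{T,\xi}|t-t_j|}\,\|y\|_{L^2}$ uniformly in $k$ (Lemma \ref{lema4}), the second vanishes as $k\to\infty$ for fixed $j$, and the third vanishes as $j\to\infty$ by $L^2$-continuity of $Z$. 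So $Z_t$ is the weak limit of $X^{n_k}_t$ in $L^2(\Omega,\mathfrak{G}^{\xi,W}_t,\mathbb{P};\mathfrak{h})$ for every $t\in[0,T]$, and adaptedness to $(\mathfrak{G}^{\xi,W}_{t+})$ follows as before.

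For the $(\mathfrak{G}^{\xi,W}_{t+})$-predictability of the process $Z$, I would view each $X^{n_k}$ (which has continuous sample paths and is adapted to $(\mathfrak{G}^{\xi,n_k}_t)\subset(\mathfrak{G}^{\xi,W}_{t+})$, hence predictable) as a bounded element of the Hilbert space $\mathcal{P}$ of $(\mathfrak{G}^{\xi,W}_{t+})$-predictable $\mathfrak{h}$-valued processes in $L^2([0,T]\times\Omega;\mathfrak{h})$. This subspace is strongly closed, hence weakly closed. Extracting a further weakly convergent subsequence in $\mathcal{P}$ produces a predictable limit $\tilde Z$ which, by Fubini and the pointwise weak convergence already established, coincides $dt\otimes d\mathbb{P}$-a.e.\ with the process $Z_t$ constructed above. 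I then replace $Z$ by this predictable version.

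The main obstacle is the passage from weak convergence along the countable set $D$ to weak convergence at every fixed $t\in[0,T]$: this is exactly where the quantitative mean-square time-continuity of Lemma \ref{lema4} is essential, as a purely compactness-based argument only yields convergence along a dense set. A secondary subtlety is matching the filtration in the statement, $(\mathfrak{G}^{\xi,W}_t)$ for the weak convergence but $(\mathfrak{G}^{\xi,W}_{t+})$ for predictability, which is handled by the closed-subspace argument in the last step.
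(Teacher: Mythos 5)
Your extraction argument, up to the point where \eqref{3.39} is established for every $t\in[0,T]$, is correct and rests on exactly the same ingredients as the paper's proof (the uniform bound of Lemma \ref{lema1} and the mean-square time regularity \eqref{3.36} of Lemma \ref{lema4}); the paper merely organizes the compactness differently, proving equicontinuity of the scalar functions $t\mapsto\mathbb{E}\left\langle \chi_j, X^{n}_t\right\rangle$ for an orthonormal basis $(\chi_j)_j$ of $L^{2}\left(\left(\Omega,\mathfrak{G}_{T}^{\xi,W},\mathbb{P}\right),\mathfrak{h}\right)$ and invoking Arzel\`a--Ascoli plus diagonalization, where you extract along a countable dense set of times and upgrade to all $t$ by a three-term estimate. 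Either route is fine, and your adaptedness argument via weak closedness of $L^{2}\left(\left(\Omega,\mathfrak{G}_{t}^{\xi,W},\mathbb{P}\right),\mathfrak{h}\right)$ is also fine.

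The genuine gap is in your final predictability step. Passing to a weak limit in the (weakly closed) subspace of predictable elements of $L^{2}\left([0,T]\times\Omega;\mathfrak{h}\right)$ identifies the predictable process $\tilde Z$ with your limit $Z$ only up to a $dt\otimes\mathbb{P}$-null set, i.e.\ $\tilde Z_t=Z_t$ a.s.\ only for Lebesgue-a.e.\ $t$. Once you ``replace $Z$ by this predictable version'', the convergence \eqref{3.39} is guaranteed only for a.e.\ $t\in[0,T]$, whereas the lemma (and its use in Lemmata \ref{lema31} and \ref{lema32}, where $t$ is arbitrary but fixed) requires it for every $t$; at an exceptional time the representative $\tilde Z_t$ need not be the weak limit of $X^{n_k}_t$. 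What is needed is a predictable \emph{modification} of the adapted, mean-square continuous limit process, not a predictable process agreeing with it merely $dt\otimes\mathbb{P}$-a.e. This is how the paper closes the argument: from \eqref{3.36} and weak lower semicontinuity each component $\left\langle e_j,\psi\right\rangle$ of the limit $\psi$ is adapted and mean-square continuous, hence admits a $\left(\mathfrak{G}_{t+}^{\xi,W}\right)$-predictable version by Proposition 3.6 of \cite{DaPrato}, and setting $Z_t=\sum_j \widetilde{\left\langle e_j,\psi\right\rangle}_t\, e_j$ (and $0$ off the convergence set) gives a predictable process with $Z_t=\psi_t$ a.s.\ at every fixed $t$, so \eqref{3.39} holds at all times. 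Your proof is repaired by substituting this (or an equivalent) modification argument for the weak-closure step; note that the $L^{2}$-continuity of $t\mapsto Z_t$ you already established supplies the required stochastic continuity.
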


\begin{proof}
By Lemmata \ref{lema1} and \ref{lema4},
using a compactness method
in the same way as in the proof of Lemma 2.4  of   \cite{MoraReIDAQP}
we obtain our assertion (see Subsection \ref{subsec:lemma30} for details).
\end{proof}

In contrast with  \cite{MoraReIDAQP},
in the following steps 
we do not  make assumptions about
the adjoints of 
$G \left( t \right)$ and $L_{\ell} \left( t \right) $,
which even may not exist.

\begin{lemma}
\label{lema31}
Adopt the assumptions of Theorem \ref{teorema1},
together with the notation of Lemma \ref{lema30}.
Let $t \in \left[  0,T\right]$.
Then 
$
\mathbb{E}\left\Vert Z_{t} \right\Vert ^{2}\leq \mathbb{E}\left\Vert \xi\right\Vert ^{2}
$,
 $Z_{t} \in Dom(C)$ a.s., and 
\begin{equation}
\label{3.42}
\mathbb{E}\left\Vert CZ_{t}  \right\Vert ^{2}
\leq
 \exp \left( \alpha \left( t \right) t \right) \left(  \mathbb{E}\left\Vert C \xi \right\Vert ^{2}
+\alpha  \left( t \right) t   \mathbb{E}\left\Vert \xi\right\Vert ^{2} \right)  .
\end{equation}
Moreover,
$
G^{n_{k}} \left( t \right) X_{t}^{n_{k}}
 \longrightarrow_{k\rightarrow\infty} 
 G \left( t \right) Z_{t}
 $
 weakly in 
 $
 L^{2}\left( \left( \Omega,\mathfrak{G}_{t}^{\xi ,W},\mathbb{P} \right), \mathfrak{h}\right)
 $,
 and for all 
 $ \ell \in \mathbb{N}$, 
\begin{equation}
\label{3.43}
L_{\ell}^{n_{k}} \left( t \right) X_{t}^{n_{k}}   \longrightarrow_{k\rightarrow\infty} 
L_{\ell} \left( t \right) Z_{t} 
\quad weakly\ in\ L^{2}\left(\left( \Omega,\mathfrak{G}_{t}^{\xi ,W},\mathbb{P} \right), \mathfrak{h}\right).
\end{equation}
\end{lemma}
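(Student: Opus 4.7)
My plan is to establish the three assertions of the lemma in sequence, using Lemmas \ref{lema22}, \ref{lema1}, \ref{lema3}, \ref{lema30}, Remark \ref{nota2}, and only standard weak-compactness tools in Hilbert and Bochner spaces.

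First, $\mathbb{E}\|Z_t\|^{2} \leq \mathbb{E}\|\xi\|^{2}$ is immediate from the weak convergence $X_t^{n_k}\to Z_t$ in $L^{2}(\Omega,\mathfrak{h})$ provided by Lemma \ref{lema30}, the a priori bound in Lemma \ref{lema1}, and weak lower semi-continuity of the $L^{2}$-norm.

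For the regularity $Z_t\in\mathcal{D}(C)$ a.s. and the bound (\ref{3.42}), I would work with the Yosida approximation $R_m=m(m+C)^{-1}$ of $C$. Since $R_m$ is a self-adjoint contraction commuting with $C$, we have $\|CR_m x\|=\|R_mCx\|\le\|Cx\|$ for $x\in\mathcal{D}(C)$, so Lemma \ref{lema3} yields $\mathbb{E}\|CR_m X_t^{n_k}\|^{2}\le M$ where $M$ is the right-hand side of (\ref{3.42}). Because $CR_m$ is a bounded operator on $\mathfrak{h}$, the weak convergence of $X_t^{n_k}$ transfers to $CR_m X_t^{n_k}\rightharpoonup CR_m Z_t$ weakly in $L^{2}(\Omega,\mathfrak{h})$, and weak lower semi-continuity gives $\mathbb{E}\|CR_m Z_t\|^{2}\le M$ uniformly in $m$. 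A first Fatou's lemma then forces $\liminf_m\|CR_m Z_t\|^{2}<\infty$ a.s., and the Banach-Alaoglu argument already used in the proof of Lemma \ref{lema22} identifies this with $Z_t\in\mathcal{D}(C)$ a.s.; once this is known, $CR_m Z_t\to CZ_t$ a.s. in $\mathfrak{h}$, and a second Fatou's lemma yields (\ref{3.42}).

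Finally, the weak convergences (\ref{3.43}) and the analogous statement for $G^{n_k}(t)X_t^{n_k}$ rest on upgrading $X_t^{n_k}\rightharpoonup Z_t$ to weak convergence of $\widetilde R_{n_k}X_t^{n_k}$ toward $Z_t$ in the graph-norm Bochner space $L^{2}(\Omega,(\mathcal{D}(C),\|\cdot\|_{C}))$. Self-adjointness of $\widetilde R_{n_k}$, combined with $\widetilde R_{n_k}\eta\to\eta$ in $L^{2}(\Omega,\mathfrak{h})$ by dominated convergence and the uniform bound on $X_t^{n_k}$ from Lemma \ref{lema1}, already gives $\widetilde R_{n_k}X_t^{n_k}\rightharpoonup Z_t$ weakly in $L^{2}(\Omega,\mathfrak{h})$. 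Because $\widetilde R_{n_k}$ commutes with $C$ and is a contraction, Lemma \ref{lema3} bounds $\mathbb{E}\|\widetilde R_{n_k}X_t^{n_k}\|_{C}^{2}$ uniformly, so the sequence is bounded in the reflexive Hilbert space $L^{2}(\Omega,(\mathcal{D}(C),\|\cdot\|_{C}))$; continuity of the inclusion into $L^{2}(\Omega,\mathfrak{h})$ forces every subsequential weak limit to equal $Z_t$, hence the whole sequence converges weakly to $Z_t$ in the graph-norm space. By Remark \ref{nota2} together with H2.1 and H2.2, both $G(t)$ and $L_{\ell}(t)$ lie in $\mathfrak{L}((\mathcal{D}(C),\|\cdot\|_{C}),\mathfrak{h})$; their canonical lifts at the Bochner level are bounded, hence weakly continuous, and deliver at once $L_{\ell}^{n_k}(t)X_t^{n_k}=L_{\ell}(t)\widetilde R_{n_k}X_t^{n_k}\rightharpoonup L_{\ell}(t)Z_t$ and $G(t)\widetilde R_{n_k}X_t^{n_k}\rightharpoonup G(t)Z_t$. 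The remaining $\widetilde R_{n_k}$ in $G^{n_k}(t)=\widetilde R_{n_k}G(t)\widetilde R_{n_k}$ is removed by testing against $\eta\in L^{2}(\Omega,\mathfrak{h})$ and invoking self-adjointness, since
$$
\mathbb{E}\langle\eta,(\widetilde R_{n_k}-I)G(t)\widetilde R_{n_k}X_t^{n_k}\rangle=\mathbb{E}\langle(\widetilde R_{n_k}-I)\eta,G(t)\widetilde R_{n_k}X_t^{n_k}\rangle\to 0
$$
by Cauchy-Schwarz, using $(\widetilde R_{n_k}-I)\eta\to 0$ in $L^{2}(\Omega,\mathfrak{h})$ and that $G(t)\widetilde R_{n_k}X_t^{n_k}$ is bounded in $L^{2}(\Omega,\mathfrak{h})$ by H2.1.

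The main obstacle I anticipate is the interplay between the two limits $m\to\infty$ and $k\to\infty$ in the second step: producing the a.s.-statement $Z_t\in\mathcal{D}(C)$ requires Fatou on both the $m$-axis (to extract pointwise membership) and the $k$-axis (to propagate the uniform bound through weak lower semi-continuity), glued together by the Yosida characterization of Lemma \ref{lema22}. The rest is bookkeeping around weak convergence and the fact that bounded operators between Hilbert spaces are weakly continuous.
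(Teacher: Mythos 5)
Your argument is correct, but it identifies the limit by a genuinely different mechanism than the paper. The paper's proof shares your outer skeleton: it too uses the transfer identity $\mathbb{E}\langle U,\widetilde R_{n_k}X^{n_k}_t\rangle=\mathbb{E}\langle\widetilde R_{n_k}U,X^{n_k}_t\rangle\to\mathbb{E}\langle U,Z_t\rangle$, the uniform bounds of Lemmata \ref{lema1} and \ref{lema3}, and exactly your final trick for stripping the outer $\widetilde R_{n_k}$ off $G^{n_k}(t)$. The difference is in the middle step: for $L\in\mathfrak{L}\left( \left( \mathcal{D}\left( C\right),\left\Vert\cdot\right\Vert_C\right),\mathfrak{h}\right)$ the paper extracts, via Banach--Alaoglu, weak $L^{2}\left(\mathbb{P},\mathfrak{h}\right)$-limits of $L\widetilde R_{l}X^{l}_t$ and $C\widetilde R_{l}X^{l}_t$ along further subsequences and then invokes the fact that the set of triples $\left(\eta,L\eta,C\eta\right)$ with $\eta\in L^{2}_{C}$ is a closed, hence weakly closed, linear manifold of $L^{2}\left(\mathbb{P},\mathfrak{h}^{3}\right)$; this single graph-type argument delivers $Z_t\in\mathcal{D}(C)$ a.s., the bound (\ref{3.42}) (taking $L=C$ and using lower semicontinuity), and the convergences (\ref{3.43}) and $G(t)\widetilde R_{n_k}X^{n_k}_t\rightharpoonup G(t)Z_t$ all at once. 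You split this into two devices: the approximants $R_m=m(m+C)^{-1}$ with Lemma \ref{lema22} and two Fatou arguments for the domain membership and (\ref{3.42}), and weak compactness in the Bochner space $L^{2}\left(\mathbb{P},\left(\mathcal{D}\left(C\right),\left\Vert\cdot\right\Vert_C\right)\right)$ combined with the continuous embedding and the subsequence principle for the operator limits. Both routes are sound; yours has the merit that the first device is elementary (only the bounded operators $CR_m$ and the characterization already proved in Lemma \ref{lema22}) and that the second yields weak convergence of the whole sequence $\widetilde R_{n_k}X^{n_k}_t$ in the graph norm, after which (\ref{3.43}) follows in one line from boundedness of $L_\ell(t)$ and $G(t)$ on the graph space, whereas the paper identifies limits only along further subsequences. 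Two small points to tighten: Lemma \ref{lema22} is phrased with $\sup_m\left\Vert CR_m x\right\Vert<\infty$, while your first Fatou gives only $\liminf_m\left\Vert CR_m Z_t\right\Vert^{2}<\infty$ a.s., so either note that $m\mapsto\left\Vert CR_m x\right\Vert$ is nondecreasing (spectral calculus) or, as you implicitly do, run the Banach--Alaoglu argument of that proof along a bounded subsequence; and for the graph-norm Bochner space you should record that $\left(\mathcal{D}\left(C\right),\left\Vert\cdot\right\Vert_C\right)$ is a separable Hilbert space, that $\widetilde R_{n_k}X^{n_k}_t$ is strongly measurable into it (Pettis), and that $\left\Vert\widetilde R_{n_k}X^{n_k}_t\right\Vert_C\le\left\Vert X^{n_k}_t\right\Vert_C$ because $\widetilde R_{n}C\subset C\widetilde R_{n}$ and $\left\Vert\widetilde R_{n}\right\Vert\le 1$ --- routine, but needed for the compactness claim.
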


\begin{proof}
By  (\ref{3.39}),
Lemma \ref{lema1} leads to
 $\mathbb{E}\left\Vert Z_{t} \right\Vert ^{2}\leq \mathbb{E}\left\Vert \xi\right\Vert ^{2}$.
 
 For a given  
 $U \in L^{2}\left(\left( \Omega,\mathfrak{G}_{t}^{\xi ,W},\mathbb{P} \right), \mathfrak{h}\right)$,
 the dominated convergence theorem yields
 $\widetilde{R}_{n} U  \longrightarrow_{n \rightarrow \infty} U$ in $L^{2}\left(\left( \Omega,\mathfrak{G}_{t}^{\xi ,W},\mathbb{P} \right), \mathfrak{h}\right)$,
 and so using (\ref{3.39})  we get 
 \begin{equation}x
\label{3.41}
\mathbb{E} \left\langle U, \widetilde{R}_{n_{k}} X^{n_{k}}_{t} \right\rangle 
=
\mathbb{E} \left\langle \widetilde{R}_{n_{k}}U, X^{n_{k}}_{t} \right\rangle
 \longrightarrow_{k\rightarrow\infty} 
 \mathbb{E} \left\langle U, Z_{t}  \right\rangle.
\end{equation}

Suppose that 
$L \in \mathfrak{L}\left( \left( \mathcal{D}\left( C\right), \left\Vert \cdot \right\Vert _{C}\right)  ,\mathfrak{h}\right)$,
and define  $L^{n}=L \widetilde{R}_{n}$.
 Since $\widetilde{R}_{n}C \subset C \widetilde{R}_{n}$ and 
 $\left\Vert \widetilde{R}_{n} \right\Vert \leq 1$, 
 applying Lemma \ref{lema3} and the Banach-Alaoglu theorem we deduce that any subsequence of 
 $\left( n_{k}  \right)_{k \in \mathbb{N}}$ contains a subsequence denoted (to shorten notation) by $\left(l  \right)_{l \in \mathbb{N}}$ such that   
 $\left( L^{l}X^{l}_{t}    \right)_{l \in \mathbb{N}}$ 
 and $\left( C \widetilde{R}_{l} X^{l}_{t}    \right)_{l \in \mathbb{N}}$ are weakly convergent in 
 $L^{2}\left(\left( \Omega,\mathfrak{G}_{t}^{\xi ,W},\mathbb{P} \right), \mathfrak{h}\right)$. 
 By (\ref{3.41}),
 $$
 \left( \widetilde{R}_{l}X^{l}_{t}   ,  L^{l}X^{l}_{t} , C \widetilde{R}_{l} X^{l}_{t}   \right)
 \text{converges weakly in }
 L^{2}\left(\left( \Omega,\mathfrak{G}_{t}^{\xi ,W},\mathbb{P} \right), \mathfrak{h}^{3}\right) .
 $$
 
 The set 
$
\mathcal{D}\left( C\right) \times L  \left( \mathcal{D}\left( C\right) \right) \times C  \left( \mathcal{D}\left( C\right) \right) $
is closed in $\mathfrak{h}^{3}$,
because $L$ is relatively bounded with respect to $C$.
Then
the set of all triple $\left( \eta,A\eta ,L\eta\right) $,
with 
$\eta\in L_{C}^{2}\left( \left( \Omega,\mathfrak{G}_{t}^{\xi ,W},\mathbb{P} \right),\mathfrak{h}\right) $,
is a closed linear linear manifold of $L^{2}\left(  \left( \Omega,\mathfrak{G}_{t}^{\xi ,W},\mathbb{P} \right), \mathfrak{h}^{3}\right) $,
and hence  closed with respect to the weak topology of $L^{2}\left(\left( \Omega,\mathfrak{G}_{t}^{\xi ,W},\mathbb{P} \right), \mathfrak{h}^{3}\right)$.
Using (\ref{3.41}) we now get
$
\left( \widetilde{R}_{l}X^{l}_{t}  ,  L^{l}X^{l}_{t}  , C \widetilde{R}_{l} X^{l}_{t}    \right) 
$
converges weakly in $L^{2}\left(\left( \Omega,\mathfrak{G}_{t}^{\xi ,W},\mathbb{P} \right), \mathfrak{h}^{3}\right)$
to
$
\left( Z_{t}, L  Z_{t} ,C  Z_{t} \right)
$
as $l\rightarrow\infty$, 
which implies 
 \begin{equation}
\label{6.4}
L^{n_{k}}X_{t}^{n_{k}}  \longrightarrow_{k\rightarrow\infty} L Z_{t}
 \quad weakly\ in\ L^{2}\left(\left( \Omega,\mathfrak{G}_{t}^{\xi ,W},\mathbb{P} \right), \mathfrak{h}\right),
\end{equation}
and so (\ref{3.43}) holds by Condition H2.2. 
Taking $L = C$ in (\ref{6.4}) and using Lemma \ref{lema3} we get (\ref{3.42}).

Condition H2.1, together with (\ref{6.4}), shows that 
$G \left( t \right)   \widetilde{R}_{n_{k}}X_{t}^{n_{k}} $ 
converges
to $ G \left( t \right)   Z_{t} $   weakly in $L^{2}\left( \left( \Omega,\mathfrak{G}_{t}^{\xi ,W},\mathbb{P} \right), \mathfrak{h}\right)$ as $n \rightarrow \infty$.
It follows that for any  
$U \in L^{2}\left(\left( \Omega,\mathfrak{G}_{t}^{\xi ,W},\mathbb{P} \right), \mathfrak{h}\right)$,
\[
\mathbb{E} \left\langle U, \widetilde{R}_{n_{k}} G \left( t \right)   \widetilde{R}_{n_{k}} X^{n_{k}}_{t}   \right\rangle
=
\mathbb{E} \left\langle \widetilde{R}_{n_{k}}U, G \left( t \right)   \widetilde{R}_{n_{k}} X^{n_{k}}_{t}  \right\rangle  
\longrightarrow_{k\rightarrow\infty} 
\mathbb{E} \left\langle U, G \left( t \right)   Z_{t} \right\rangle .
\qedhere
\]
\end{proof}

By Lemma \ref{lema31},
as in Lemma 2.5 of   \cite{MoraReIDAQP}
we establish that  $Z_{t}\left( \xi \right)$ satisfies (\ref{eq:SSE}) a.s.
using the following predictable representation.

\begin{remark}
\label{nota3}
Let $\chi \in L^{2}\left(  \left( \Omega,\mathfrak{G}_{t}^{\xi,m} ,\mathbb{P} \right), \mathbb{C}\right)  $, with $t \in \left[ 0,T\right]$.  Then, there exist $ \left( \mathfrak{G} _{s}^{\xi ,m}\right)  _{s}$ - predictable processes $H^{1}, \cdots, H^{m}$ such that: 
(i)  $H^{1}, \cdots, H^{m} \in L^{2}\left( \left( \left[ 0,T\right] \times\Omega,dt\otimes\mathbb{P} \right), \mathbb{C}\right)  $; 
and
(ii) $ \chi=\mathbb{E}\left(  \chi|\mathfrak{G}_{0}^{\xi,m}\right)  +\sum_{j=1}^{m}\int_{0}^{t}H_{s}^{j}dW_{s}^{j} $.

\end{remark}

\begin{lemma}
\label{lema6}
Assume the setting of Theorem \ref{teorema1}.
Suppose that  $\left( X^{n_{k}}   \right)_{k \in \mathbb{N}}$ and $\chi$ 
are as in Lemma \ref{lema30} and Remark \ref{nota3} respectively. 
If $x \in \mathfrak{h}$,
then 
\[
 \lim_{k \rightarrow \infty}
\mathbb{E}  \left\langle   \chi x, 
\sum_{\ell=1}^{n_{k}} \int_{0}^{t}  L^{n_{k}}_{\ell} \left( s \right) X_{s}^{n_{k}}  dW_{s}^{j}
\right\rangle
 =
\mathbb{E} \left\langle  \chi x, \sum_{\ell=1}^{\infty}  
\int_{0}^{t}   L_{\ell}  \left( s \right) \pi \left(Z_{s} \right)  dW^{\ell}_{s} \right\rangle.
\]
 
\end{lemma}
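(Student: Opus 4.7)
The plan is to reduce both sides of the identity to finite sums of deterministic time integrals using It\^o's isometry, and then interchange the limit in $k$ with integration over $s$ by dominated convergence. By Remark \ref{nota3}, write $\chi = \chi_0 + \sum_{j=1}^m \int_0^t H_s^j\, dW_s^j$ with $\chi_0$ being $\mathfrak{G}_0^{\xi,m}$-measurable (hence $\mathfrak{F}_0$-measurable) and $H^1,\dots,H^m \in L^2([0,T]\times\Omega,\mathbb{C})$ predictable. Since each stochastic integral $\int_0^t \langle x, L_\ell^{n_k}(s) X_s^{n_k}\rangle\, dW_s^\ell$ is a martingale with zero conditional mean given $\mathfrak{F}_0$, the $\chi_0$-part contributes nothing, and cross terms between independent Brownian motions vanish. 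Applying the It\^o isometry in the complex-valued setting yields, for $n_k \ge m$,
\begin{equation*}
\mathbb{E}\left\langle \chi x,\sum_{\ell=1}^{n_k}\int_0^t L^{n_k}_\ell(s) X_s^{n_k}\, dW_s^\ell\right\rangle
=\sum_{\ell=1}^m \mathbb{E}\int_0^t \overline{H_s^\ell}\,\langle x, L^{n_k}_\ell(s) X_s^{n_k}\rangle\, ds,
\end{equation*}
and an identical manipulation (declaring $H^\ell \equiv 0$ for $\ell>m$, so only $\ell\le m$ survives) gives
\begin{equation*}
\mathbb{E}\left\langle \chi x,\sum_{\ell=1}^{\infty}\int_0^t L_\ell(s)\pi_C(Z_s)\, dW_s^\ell\right\rangle
=\sum_{\ell=1}^m \mathbb{E}\int_0^t \overline{H_s^\ell}\,\langle x, L_\ell(s)\pi_C(Z_s)\rangle\, ds.
\end{equation*}

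Now fix $\ell\le m$. For each $s\in[0,t]$, Lemma \ref{lema31} asserts the weak convergence $L^{n_k}_\ell(s)X_s^{n_k}\to L_\ell(s)\pi_C(Z_s)$ in $L^2((\Omega,\mathfrak{G}_s^{\xi,W},\mathbb{P}),\mathfrak{h})$. Since $H_s^\ell\, x$ belongs to that space, testing against it yields pointwise (in $s$) convergence of the integrand $\mathbb{E}[\overline{H_s^\ell}\langle x,L^{n_k}_\ell(s) X_s^{n_k}\rangle]$ to $\mathbb{E}[\overline{H_s^\ell}\langle x,L_\ell(s)\pi_C(Z_s)\rangle]$. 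To swap the limit with $\int_0^t\cdots ds$, I use H2.2 combined with Lemma \ref{lema3} to obtain the uniform bound
\begin{equation*}
\sup_{k\in\mathbb{N}}\sup_{s\in[0,t]}\mathbb{E}\|L^{n_k}_\ell(s) X_s^{n_k}\|^2
\le K_\ell(t)\sup_{k,s}\mathbb{E}\|X_s^{n_k}\|_C^2 \le M_\ell,
\end{equation*}
which, via Cauchy--Schwarz, dominates the integrand by the $L^1([0,t])$ function $s\mapsto \|x\|\sqrt{M_\ell}\,\|H_s^\ell\|_{L^2(\Omega)}$. Dominated convergence gives the limit for each of the finitely many $\ell\le m$, and the conclusion follows by summation.

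The only mildly delicate point is the It\^o-isometry bookkeeping at the start (ensuring the $\chi_0$-part vanishes and that the cross terms in $\ell\ne j$ vanish). Everything else is routine: the pointwise-in-$s$ weak convergence is already established in Lemma \ref{lema31}, and the domination follows immediately from the $a$ priori estimate of Lemma \ref{lema3} together with H2.2, so no new analytic input beyond what the previous lemmas supply is needed.
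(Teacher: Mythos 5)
Your overall strategy coincides with the paper's: use the predictable representation of $\chi$ from Remark \ref{nota3} to reduce both pairings to $\sum_{\ell\le m}\int_0^t \mathbb{E}\,H_s^\ell\,\langle x,\cdot\rangle\,ds$, then pass to the limit in $k$ using the pointwise-in-$s$ weak convergence of Lemma \ref{lema31} together with the a priori bounds of Lemmata \ref{lema1} and \ref{lema3} (plus $\Vert\widetilde{R}_n\Vert\le 1$, $\widetilde{R}_nC\subset C\widetilde{R}_n$) and dominated convergence. That part of your argument is fine.

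There is, however, a gap in your treatment of the right-hand side. The ``identical manipulation'' you apply to $\sum_{\ell=1}^{\infty}\int_0^t L_\ell(s)\pi_C(Z_s)\,dW_s^\ell$ presupposes that this series of stochastic integrals converges in $L^{2}\left(\mathbb{P},\mathfrak{h}\right)$, so that the pairing with $\chi$ can be computed term by term and only the terms with $\ell\le m$ survive. At this point of the argument $Z$ is not yet known to solve anything, so this convergence must be proved, and the estimate you invoke, Condition H2.2, cannot deliver it: it gives $\Vert L_\ell(s)y\Vert^{2}\le K_\ell(s)\Vert y\Vert_C^{2}$ with an $\ell$-dependent constant and no summability over $\ell$. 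What is needed is the summable bound $\sum_{\ell=1}^{\infty}\Vert L_\ell(s)y\Vert^{2}\le K(s)\Vert y\Vert_C^{2}$ for $y\in\mathcal{D}(C)$, which follows from Condition H2.1 together with Lemma \ref{lema5} (i.e., from H2.4); combined with $\sup_{s\in[0,t]}\mathbb{E}\Vert Z_s\Vert_C^{2}<\infty$ from Lemma \ref{lema31}, it yields the $L^{2}\left(\mathbb{P},\mathfrak{h}\right)$-convergence of $\sum_{\ell=1}^{n}\int_0^t L_\ell(s)\pi_C(Z_s)\,dW_s^\ell$ as $n\to\infty$. This is exactly the closing step of the paper's proof that your proposal omits; once it is added, the remainder of your argument goes through as in the paper.
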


\begin{proof} 

Throughout this proof, $H^{1}, \cdots, H^{m}$  are as in Remark \ref{nota3}.
First,
using Lemma \ref{lema1}, basic properties of stochastic integrals and Fubini's theorem we deduce
that for all  $n \geq m$,
\[
\mathbb{E}   \chi \left\langle  x, 
\sum_{\ell=1}^{n} \int_{0}^{t}  L^{n}_{\ell} \left( s \right) X_{s}^{n} dW_{s}^{\ell}
\right\rangle
=
 \sum_{\ell=1}^{m} \int_{0}^{t}   
 \mathbb{E}\, H^{\ell}_{s}  \left\langle  x, L^{n}_{\ell} \left( s \right) X_{s}^{n} \right\rangle 
 ds.
\]
By $\left\Vert \widetilde{R}_{n} \right\Vert \leq 1$ 
and
 $\widetilde{R}_{n} C  \subset C \widetilde{R}_{n}$,
combining (\ref{3.43}),  Lemmata \ref{lema1} and \ref{lema3}, and  the dominated convergence theorem
we obtain that for any $\ell=1,\ldots,m$,
\[
\int_{0}^{t}   \mathbb{E}\, H^{\ell}_{s}  \left\langle  x, L^{n_{k}}_{\ell} \left( s \right) X_{s}^{n_{k}}  \right\rangle ds
\longrightarrow_{k \rightarrow \infty}
 \int_{0}^{t}   \mathbb{E}\, H^{\ell}_{s}  \left\langle  x, L_{\ell}  \left( s \right) \pi \left( Z_{s} \right) \right\rangle ds,
\]
and so 
$$
\lim_{k \rightarrow \infty}
\mathbb{E}\,  \left\langle   \chi x, 
\sum_{\ell=1}^{n_{k}} \int_{0}^{t}  L^{n_{k}}_{\ell} \left( s \right)  X_{s}^{n_{k}} dW_{s}^{\ell}
\right\rangle
 =
\sum_{\ell=1}^{m}
 \int_{0}^{t}  
  \mathbb{E}\, H^{\ell}_{s}  \left\langle  x, L_{\ell} \left( s \right)  \pi \left( Z_{s} \right) \right\rangle 
 ds .
$$

Second, Lemmata \ref{lema1} and  \ref{lema3}, 
together with Condition H2.2,
yield
\[
 \sum_{\ell=1}^{m} \int_{0}^{t}  
  \mathbb{E}\, H^{\ell}_{s}  \left\langle  x, L_{\ell}  \left( s \right)  \pi \left(Z_{s} \right) \right\rangle ds
 =
\sum_{\ell=1}^{n}
 \mathbb{E}\, \chi  \int_{0}^{t}  \left\langle  x, L_{\ell}  \left( s \right) \pi \left(Z_{s} \right) \right\rangle dW^{\ell}_{s}
\]
whenever $n \geq m$.
Condition H2.1 and Lemma \ref{lema5} show that 
$
 \sum_{k=1}^{\infty }\left\Vert  L_{\ell} \left( t \right) y \right\Vert ^{2} 
\leq K\left( t \right)  \left\| y \right\|_{C}^{2}
$
for all $y$ in $\mathcal{D}\left( C \right)$ and $t \geq 0$.
Therefore 
$
\sum_{\ell=1}^{n}   \int_{0}^{t}  L_{\ell}  \left( s \right)  \pi \left(Z_{s} \right) dW^{\ell}_{s} 
$
converges in $L^{2}\left(\mathbb{P}, \mathfrak{h}\right)$ to
$
\sum_{\ell=1}^{\infty}   \int_{0}^{t}  L_{\ell}  \left( s \right)  \pi \left(Z_{s} \right) dW^{\ell}_{s}
$,
which implies that 
$$
 \sum_{\ell=1}^{m} \int_{0}^{t}  
  \mathbb{E}\, H^{\ell}_{s}  \left\langle  x, L_{\ell}  \left( s \right)  \pi \left(Z_{s} \right) \right\rangle ds
=
 \sum_{\ell=1}^{\infty} 
\mathbb{E}\, \chi  \int_{0}^{t}  \left\langle  x, L_{\ell}  \left( s \right)  \pi \left(Z_{s} \right) \right\rangle 
dW^{\ell}_{s} . 
\qedhere
$$
\end{proof}

\begin{lemma}
\label{lema32}
Adopt the assumptions of Theorem \ref{teorema1}. 
Let $T$ and $Z$ be defined as in Lemma \ref{lema30}. 
Then for all $t \in\left[  0,T\right]$ we have 
\begin{equation*}
Z_{t} =
\xi +\int_{0}^{t}G \left( s \right) \pi _{C}\left( Z_{s} \right) ds
+\sum_{\ell=1}^{\infty }\int_{0}^{t}L_{\ell} \left( s \right) \pi _{C}\left( Z_{s} \right) dW_{s}^{\ell} \quad a.s.
\end{equation*}
\end{lemma}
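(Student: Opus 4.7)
The plan is to test the equation (\ref{3.29}) satisfied by $X^{n_k}$ against arbitrary random variables of the form $\chi x$, with $x\in\mathfrak{h}$ and $\chi\in L^2((\Omega,\mathfrak{G}_t^{\xi,m},\mathbb{P}),\mathbb{C})$ for some $m\in\mathbb{N}$, pass to the weak limit term by term, and use density of such test variables in $L^2((\Omega,\mathfrak{G}_t^{\xi,W},\mathbb{P}),\mathfrak{h})$ to read off the integral equation for $Z_t$. Since Lemma \ref{lema31} asserts $Z_s\in\mathcal{D}(C)$ $\mathbb{P}$-a.s., we have $\pi_C(Z_s)=Z_s$ a.s., so the integrands in the limit naturally take the form claimed in the statement.

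Applying $\mathbb{E}\langle\chi x,\cdot\rangle$ to (\ref{3.29}) produces four contributions. The left-hand side converges to $\mathbb{E}\langle\chi x,Z_t\rangle$ by Lemma \ref{lema30}, and the initial datum term is independent of $k$. For the drift, Fubini's theorem rewrites it as $\int_0^t \mathbb{E}\langle\chi x,G^{n_k}(s)X_s^{n_k}\rangle\,ds$; pointwise in $s$ Lemma \ref{lema31} gives the convergence to $\mathbb{E}\langle\chi x,G(s)Z_s\rangle$, and to justify exchange of limit and integral I would invoke dominated convergence using the estimate $\|G^{n_k}(s)X_s^{n_k}\|\le \sqrt{K(s)}\,\|X_s^{n_k}\|_C$ (coming from Condition H2.1, $\|\widetilde{R}_{n_k}\|\le 1$ and $\widetilde{R}_{n_k}C\subset C\widetilde{R}_{n_k}$) together with the uniform bound $\sup_k\mathbb{E}\|CX_s^{n_k}\|^2<\infty$ supplied by Lemma \ref{lema3}. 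The stochastic integral term is treated in one stroke by Lemma \ref{lema6}, which is precisely designed to identify its weak limit as $\mathbb{E}\langle\chi x,\sum_{\ell}\int_0^t L_\ell(s)\pi_C(Z_s)\,dW^\ell_s\rangle$. Combining the three limits and using that the family of test functions $\chi x$ spans a dense subspace of $L^2((\Omega,\mathfrak{G}_t^{\xi,W},\mathbb{P}),\mathfrak{h})$ as $m$ varies yields the identity $\mathbb{P}$-a.s. for each fixed $t\in[0,T]$.

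The main obstacle lies in the stochastic integral: one must pass to the limit both in the operator truncation $\widetilde{R}_{n_k}$ built into $L_\ell^{n_k}$ and in the truncation of the infinite series at $\ell=n_k$, while weak convergence in $L^2$ alone does not immediately commute with the It\^o integral. These difficulties are handled via the predictable representation of Remark \ref{nota3}, the weak convergence (\ref{3.43}), the uniform graph-norm bound of Lemma \ref{lema3}, and the summability $\sum_\ell\|L_\ell(s)y\|^2\le K(s)\|y\|_C^2$ (a consequence of Conditions H2.1 and H2.4). All of these ingredients are assembled in Lemma \ref{lema6}, so once that lemma is available, the present identification of $Z$ as a strong $C$-solution reduces to the routine limit passage sketched above.
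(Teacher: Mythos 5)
Your proposal is correct and follows essentially the same route as the paper's proof: test (\ref{3.29}) against $\chi x$ with $\chi\in L^{2}\left( \left( \Omega,\mathfrak{G}_{t}^{\xi,m},\mathbb{P} \right), \mathbb{C}\right)$, identify the stochastic-integral limit via Lemma \ref{lema6}, handle the drift by dominated convergence using Lemmata \ref{lema1}, \ref{lema3} and \ref{lema31}, and then extend to all of $L^{2}\left( \left( \Omega,\mathfrak{G}_{t}^{\xi,W},\mathbb{P} \right), \mathbb{C}\right)$ (the paper does this by a monotone class theorem, your density argument is the same step in different clothing). The only detail the paper makes explicit that you gloss over is that, since Lemma \ref{lema31} gives weak convergence in $L^{2}\left( \left( \Omega,\mathfrak{G}_{s}^{\xi ,W},\mathbb{P} \right), \mathfrak{h}\right)$, the drift term is paired with $\mathbb{E}\left[ \chi \,|\, \mathfrak{G}_{s}^{\xi ,W} \right]$ rather than with $\chi$ itself, which is harmless because $\left\langle x, G^{n_{k}}\left( s \right) X_{s}^{n_{k}} \right\rangle$ is $\mathfrak{G}_{s}^{\xi ,W}$-measurable.
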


\begin{proof}

Consider $x \in \mathfrak{h}$ and 
 let $\left( X^{n_{k}} \right)_{k \in \mathbb{N}}$ be as in Lemma \ref{lema30}. 
According to Lemma \ref{lema6} we have 
\[
\lim_{k \to\infty}
\mathbb{E}  \left\langle   \chi x, 
\sum_{\ell=1}^{n_{k}} \int_{0}^{t}  L^{n_{k}}_{\ell} \left( s \right) X_{s}^{n_{k}} dW_{s}^{\ell}
\right\rangle
=
\mathbb{E} \left\langle  \chi x, \sum_{\ell=1}^{\infty}  \int_{0}^{t}   L_{\ell} \left( s \right) \pi \left(Z_{s} \right)  dW^{\ell}_{s} \right\rangle.
\]
Using Lemmata  \ref{lema1}, \ref{lema3} and \ref{lema31} 
and 
the dominated convergence theorem we obtain
\[
 \int_{0}^{t} \hspace{-1 pt} \mathbb{E} \left[ \left\langle  x, G^{n_{k}} \left( s \right)  X_{s}^{n_{k}}  
 \right\rangle
 \mathbb{E} \left[ \chi  |  \mathfrak{G} _{s}^{\xi ,W} \right] \right] ds 
\rightarrow_{k \rightarrow \infty} \hspace{-3 pt}
 \int_{0}^{t} \hspace{-2 pt} \mathbb{E} \left[ \left\langle  x, G \left( s \right)   \pi \left(Z_{s}  \right) \right\rangle 
 \mathbb{E} \left[ \chi  |  \mathfrak{G} _{s}^{\xi ,W} \right] \right] ds,
\]
since  
$\mathbb{E} \left[\,  \chi \, | \, \mathfrak{G} _{s}^{\xi ,W} \right] \in L^{2}\left(  \mathbb{P}, \mathbb{C}\right) $.
Thus,
combining (\ref{3.39}) with the definition of $X^{n}$ yields
\begin{equation}
\label{3.46}
  \mathbb{E} \chi \left\langle  x, Z_{t} \right\rangle 
  = 
 \mathbb{E} \chi  \left\langle  x, \xi +  \int_{0}^{t}  G \left( s \right)  \pi \left( Z_{s} \right) ds
  + \sum_{\ell=1}^{\infty}  
\int_{0}^{t}  \left\langle  x, L_{\ell} \left( s \right)  \pi \left(Z_{s}  \right) \right\rangle dW^{\ell}_{s} \right\rangle.
\end{equation}
As in  \cite{MoraReIDAQP}, using  a monotone class theorem 
(e.g., Th. I.21 of  \cite{Dellacherie}) 
we extend the range of validity of  (\ref{3.46}) from 
$\chi \in L^{2}\left(  \left( \Omega,\mathfrak{G}_{t}^{\xi,m} ,\mathbb{P} \right), \mathbb{C}\right) $ 
to any bounded $\chi \in L^{2}\left(  \left( \Omega,\mathfrak{G}_{t}^{\xi,W} ,\mathbb{P} \right), \mathbb{C}\right) $, 
which completes the proof.
\end{proof}

We are now in a position to finish the proof of Theorem \ref{teorema1} by classical arguments. 

\begin{proof}[Proof of Theorem \ref{teorema1}]
 Consider $T >0$. 
 First, we combine Lemma \ref{lema5} with It\^o's formula to deduce that there exists at most one strong $C$-solution of (\ref{eq:SSE}) on $\left[ 0,T\right] $ (see proof of Lemma 2.2 of  \cite{MoraReIDAQP} for details). Second, for all 
$t \in \left[ 0,T\right]$, we set  
 \[
 Z^{T}_{t}= 
 \xi +\int_{0}^{t}G \left( s \right)  \pi _{C}\left( Z_{s} \right) ds
 +\sum_{\ell=1}^{\infty }\int_{0}^{t}L_{\ell} \left( s \right)  \pi _{C}\left( Z_{s} \right) dW_{s}^{\ell},
 \]
where $Z$ is as in Lemma \ref{lema30}.
Using Lemma \ref{lema32} we see that $ Z^{T}$ is a continuous version of $Z$. 
Hence $ Z^{T}$ is a strong $C$-solution of (\ref{eq:SSE}) on $\left[ 0,T\right] $, 
and so $ Z^{T}$ is the unique one.
 
 Define $\widetilde{\Omega}$ to be the set of all $\omega$ satisfying $Z^{n}_{t}  \left( \omega \right) = Z^{n+1}_{t}  \left( \omega \right)$ for all $n \in \mathbb{N}$ 
and any $t \in \left[ 0,n\right]$. 
For any $t  \in  \left[ 0,n\right]$ with $n \in \mathbb{N}$, 
we choose  
$ X_{t}\left( \xi \right) \left( \omega \right) = Z^{n}_{t}  \left( \omega \right)$
whenever $ \omega \in \widetilde{\Omega}$.
Set $X\left( \xi \right) \equiv 0$ in the complement of  $\widetilde{\Omega}$.
 Thus $X\left( \xi \right)$ is the unique strong $C$-solution of (\ref{eq:SSE}) on $\left[ 0, \infty  \right[$.
\end{proof}

\subsection{Proof of Lemma \ref{lema:EquivH2.3}}     
\label{subsec:lemaEquivH2.3} 
\begin{proof}

According to Lemma \ref{lema:Measurability}
we have that 
$ G \left( t \right) \circ\pi_{C^2} $ and $C L_{\ell} \left( t \right) \circ\pi_{C^2} $
are measurable functions from 
$ \left(\mathfrak{h}, \mathcal{B} \left( \mathfrak{h} \right) \right)$
to $\left(\mathfrak{h}, \mathcal{B} \left( \mathfrak{h} \right) \right)$,
and so  $\mathcal{L}^{+} \left( t, \zeta \right) $ is a positive random variable
for every  $\zeta \in  L_{C}^{2}\left( \mathbb{P},\mathfrak{h}\right) \bigcap \mathfrak{D}_{1}$.
Hence $\mathbb{E} \left( \mathcal{L}^{+} \left( t, \zeta \right)  \right)$ is well-defined.

Condition H2.3 leads straightforward to Condition H2.3'.
In the other direction,
we assume from now on that H2.3' holds.
Fix $t \geq 0$. 
To obtain a contradiction,
suppose that for any $n \in \mathbb{N}$ there exists $x_n \in \mathfrak{D}_{1}$ such that
\begin{equation}
\label{eq:2.1}
 2\Re\left\langle C^{2} x_n, G \left( t \right) x_n\right\rangle 
+\sum_{\ell=1}^{\infty }\left\Vert C L_{\ell} \left( t \right) x_n \right\Vert ^{2}
> n  \left\Vert x_n \right\Vert_{C}^{2} .
\end{equation}
We can consider a random variable $\zeta$ defined by
$$
\mathbb{P} \left( \zeta = y_n \right)
=
\frac{p}{n^2 \left( 1 + \left\Vert C y_n \right\Vert ^2 \right)} ,
$$
where $y_n = x_n / \left\Vert x_n \right\Vert $ and
$$
1/p =  
\sum_{n=1}^{\infty} \frac{1}{  n^2 \left( 1 + \left\Vert C y_n \right\Vert ^2 \right) }  < \infty.
$$
Then  $\left\Vert  \zeta \right\Vert = 1$,
$\zeta \in \mathfrak{D}_{1}$,
and
$
\mathbb{E} \left\Vert \zeta \right\Vert_{C}^2
=
p \sum_{n=1}^{\infty} 1/n^2
<
\infty .
$ 
Using \eqref{eq:2.1} yields 
$$
\mathbb{E}  \mathcal{L}^{+} \left( t, \zeta \right) 
\geq
p \sum_{n=1}^{\infty} 1/n 
= \infty ,
$$
which contradicts Condition H2.3'.
Therefore, 
there exists a constant $\beta \left( t \right) \geq 0$
such that for all $x \in \mathfrak{D}_{1}$,
\begin{equation}
\label{eq:2.2}
 2\Re\left\langle C^{2} x, G \left( t \right) x \right\rangle 
+\sum_{\ell=1}^{\infty }\left\Vert C L_{\ell} \left( t \right) x \right\Vert ^{2}
\leq
 \beta \left( t \right)  \left\Vert x \right\Vert_{C}^{2} .
\end{equation}
By abuse of notation, we denote by  $\beta \left( t \right)$
the smallest $\beta \left( t \right)$ satisfying \eqref{eq:2.2}.

Suppose, contrary to H2.3,  that  
$\sup_{t \in \left[ 0 , T \right]} \beta \left( t \right) = \infty$
for some $T>0$.
Then,
there exists a sequence $\left( s_n \right)_{n \in \mathbb{N}}$ 
of different elements of $ \left[ 0 , T \right]$ such that 
\begin{equation}
\label{eq:2.3}
 2\Re\left\langle C^{2} z_n, G \left( s_n \right) z_n \right\rangle 
+\sum_{\ell=1}^{\infty }\left\Vert C L_{\ell} \left( s_n\right) z_n \right\Vert ^{2}
>
 n^3  \left\Vert z_n \right\Vert_{C}^{2} 
\end{equation}
for some $z_n  \in \mathfrak{D}_{1}$ satisfying $\left\Vert  z_n \right\Vert = 1$.
Similarly to the paragraph above,
we can choose a random variable $\zeta$ defined by
$$
\mathbb{P} \left( \zeta = z_n \right)
=
\frac{p}{n^2 \left( 1 + \left\Vert C z_n \right\Vert ^2 \right)} ,
$$
with 
$
1/p =  
\sum_{n=1}^{\infty} 1 / \left(  n^2 \left( 1 + \left\Vert C z_n \right\Vert ^2 \right) \right)  < \infty
$.
Hence
$
\mathbb{E} \left\Vert \zeta \right\Vert_{C}^2
<
\infty 
$ .
From \eqref{eq:2.3} we deduce that
$$
\mathbb{E} \mathcal{L}^{+} \left( s_n, \zeta \right)  
\geq
 \mathcal{L}^{+} \left( s_n, z_n \right)  
\geq
n p ,
$$
which contradicts Condition H2.3'.
Taking 
$$
\alpha \left( t \right) = \sup_{t \in \left[ 0 , t \right]} \beta \left( s \right) < \infty.
$$
we can assert that Condition H2.3 holds. 
\end{proof}

\subsection{Proof of Theorem \ref{teo:Smooth}}     
\label{subsec:teoSmooth}

Throughout this subsection,
$C$ denotes the operator in $L^2 \left( \mathbb{R}^d, \mathbb{C} \right)$ given by 
$C=  - {\Delta} + \left| x \right| ^{2}$.
Moreover,
$ \left\| \cdot \right\| $ stands for the norm in $L^2 \left( \mathbb{R}^d, \mathbb{C} \right)$,
and 
we shall often use Einstein summation convention 
(each index can appear at most twice in any term, repeated indexes are implicitly summed over).

Since $\left| x \right| ^{2}$ is locally in $L^{2}\left( \mathbb{R}^{d} ,\mathbb{C}\right) $,
the operator $ - {\Delta} + \left| x \right| ^{2}$ is  
essentially self-adjoint on $C^{\infty}_{c}\left( \mathbb{R}^{d} , \mathbb{C} \right) $ 
(see, e.g., Th. X.29 of  \cite{ReedSimonVol2}). 
The Hermite functions (i.e., Hermite polynomials multiplied by $\hbox{\rm e}^{-x^2/2}$)  
are the eigenfunctions of the operator in $L^{2}\left( \mathbb{R},\mathbb{C}\right) $ 
given by $  -  d^{2} / dx^{2}  + x^{2} $,
and hence 
the Schwarz space of rapidly decreasing functions is an essential domain for 
 $\left( -  d^{2} / dx^{2}  + x^{2} \right)^2$.
We can now use standard approximation arguments to show that 
$C^{\infty}_{c}\left( \mathbb{R} , \mathbb{C} \right) $ is a core for 
$\left( -  d^{2} / dx^{2}  + x^{2} \right)^2$,
which implies that 
$C^{\infty}_{c}\left( \mathbb{R}^{d} , \mathbb{C} \right) $ is a core for  $C^2$
(see, e.g., Th. VIII.33 of  \cite{ReedSimonVol1}),
and so $C^{\infty}_{c}\left( \mathbb{R}^{d} , \mathbb{C} \right) $ is 
an essential domain for $C$.
As 
$- {\Delta} + \left| x \right| ^{2} 
= \sum_{j=1}^d (\partial_j + x_j)^*(\partial_j+x_j) + d I$,
the operator $C$ is bounded from below by $d$ times the identity operator $I$.


We next provide some relative bounds on  $C$,  $\Delta$ and the multiplication operator by $\left|x \right|^2$.

\begin{lemma}\label{lemma:key-op-ineq}
Let  $f \in  C^{\infty}_{c}\left( \mathbb{R}^{d} , \mathbb{C}\right) $.  Then
\begin{equation}
  \label{eq:C-Delta-x^2-identity}
      \left\| C f\right\|^2 
    = \left\| \Delta f\right\|^2 + \left\|\, |x|^2 f\right\|^2 
 + 
2\sum_{j=1}^d\left\langle \partial_j f, |x|^2\partial_j f\right\rangle 
-2d\left\| f\right\|^2,
\end{equation}
\begin{equation}
  \label{eq:(1+|x|)gradf-estimate} 
\sum_{j=1}^d \left\|  (1+|x|)\partial_j f\right\|^2  \le  4 \left\| C f\right\|^2, 
\end{equation}
\begin{equation}
                         \label{eq:(1+|x|^2)-estimate}
\left\|\, (1+|x|^2) f\right\|^2  \le  8 \left\| C f\right\|^2.
\end{equation}
\end{lemma}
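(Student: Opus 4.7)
The plan is to prove the three estimates in the stated order, treating (\ref{eq:C-Delta-x^2-identity}) as an exact integration-by-parts identity and deriving (\ref{eq:(1+|x|)gradf-estimate}) and (\ref{eq:(1+|x|^2)-estimate}) from it combined with the spectral bound $C\ge dI$ noted just before the lemma.

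For (\ref{eq:C-Delta-x^2-identity}), I would simply expand $\|Cf\|^{2} = \|\Delta f\|^{2} + \|\,|x|^{2}f\|^{2} - 2\Re\langle \Delta f,|x|^{2}f\rangle$ and evaluate the cross term by integration by parts, which is legitimate since $f\in C_{c}^{\infty}(\mathbb{R}^{d},\mathbb{C})$. One IBP on each $\partial_{j}^{2}$ yields $-\langle \Delta f,|x|^{2}f\rangle = \sum_{j}\langle \partial_{j}f,\,2x_{j}f+|x|^{2}\partial_{j}f\rangle$, and a second IBP applied to $2\Re\sum_{j}\langle \partial_{j}f, x_{j}f\rangle = \sum_{j}\int x_{j}\partial_{j}|f|^{2}\,dx$ gives $-d\|f\|^{2}$. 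Collecting the terms produces exactly (\ref{eq:C-Delta-x^2-identity}).

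For (\ref{eq:(1+|x|)gradf-estimate}), I would start from the elementary inequality $(1+|x|)^{2}\le 2+2|x|^{2}$ (which holds because $2|x|\le 1+|x|^{2}$), giving $\sum_{j}\|(1+|x|)\partial_{j}f\|^{2}\le 2\|\nabla f\|^{2}+2\sum_{j}\|\,|x|\partial_{j}f\|^{2}$. The first summand satisfies $2\|\nabla f\|^{2}=-2\langle f,\Delta f\rangle\le \|f\|^{2}+\|\Delta f\|^{2}$ by Cauchy--Schwarz and Young's inequality. For the second, rearranging (\ref{eq:C-Delta-x^2-identity}) and discarding the nonpositive term $-\|\,|x|^{2}f\|^{2}$ yields $2\sum_{j}\|\,|x|\partial_{j}f\|^{2}\le \|Cf\|^{2}-\|\Delta f\|^{2}+2d\|f\|^{2}$. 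Adding the two bounds, the $\|\Delta f\|^{2}$ contributions cancel, leaving $\sum_{j}\|(1+|x|)\partial_{j}f\|^{2}\le \|Cf\|^{2}+(2d+1)\|f\|^{2}$. Since $C\ge dI$ implies $\|Cf\|^{2}\ge d^{2}\|f\|^{2}$, we get $(2d+1)\|f\|^{2}\le \bigl((2d+1)/d^{2}\bigr)\|Cf\|^{2}\le 3\|Cf\|^{2}$ for every $d\ge 1$, which closes the bound at $4\|Cf\|^{2}$.

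For (\ref{eq:(1+|x|^2)-estimate}), I would run the parallel argument: $(1+|x|^{2})^{2}\le 2+2|x|^{4}$ (from $2|x|^{2}\le 1+|x|^{4}$) gives $\|(1+|x|^{2})f\|^{2}\le 2\|f\|^{2}+2\|\,|x|^{2}f\|^{2}$; and (\ref{eq:C-Delta-x^2-identity}) with the nonnegative terms $\|\Delta f\|^{2}$ and $2\sum_{j}\|\,|x|\partial_{j}f\|^{2}$ dropped yields $\|\,|x|^{2}f\|^{2}\le \|Cf\|^{2}+2d\|f\|^{2}$. Combining and again using $\|f\|^{2}\le \|Cf\|^{2}/d^{2}$ gives $\|(1+|x|^{2})f\|^{2}\le 2\|Cf\|^{2}+(4d+2)\|f\|^{2}\le \bigl(2+(4d+2)/d^{2}\bigr)\|Cf\|^{2}\le 8\|Cf\|^{2}$ for all $d\ge 1$, with the constant saturated at $d=1$. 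The main subtlety of the proof is hitting the explicit constants $4$ and $8$: the elementary squaring inequalities each waste a factor of $2$, and only the algebraic cancellation of $\|\Delta f\|^{2}$ in (\ref{eq:(1+|x|)gradf-estimate}) together with the dimension-dependent spectral bound $C\ge dI$ brings the constants back down to the claimed values, with $d=1$ being the tight case in both estimates.
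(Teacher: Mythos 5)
Your proposal is correct and follows essentially the same route as the paper: you establish the identity \eqref{eq:C-Delta-x^2-identity} by integration by parts (the paper organizes the same computation as a double sum with the commutation relation $[\partial_j,x_j]=I$), then derive both inequalities from $(1+|x|)^2\le 2(1+|x|^2)$, respectively $(1+|x|^2)^2\le 2+2|x|^4$, a rearrangement of the identity with nonnegative terms dropped, and the spectral bound $C\ge dI$ (so $\|Cf\|^2\ge d^2\|f\|^2$), arriving at the same intermediate estimates $\|Cf\|^2+(2d+1)\|f\|^2$ and $2\|Cf\|^2+2(2d+1)\|f\|^2$ and the same constants $4$ and $8$. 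Your use of Young's inequality for $-2\langle f,\Delta f\rangle$ is just the paper's completion of the square $-\|(\Delta+1)f\|^2\le 0$ in disguise, so there is no substantive difference.
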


\begin{proof}  Using integration by parts yields
\begin{equation}
\label{eq:C-Delta-x^2-identity-a1}
 \left\| C   f   \right\| ^{2} 
 =  \sum_{j,k=1}^d \left( \left\langle -\partial_j^2 f, -\partial_k^2 f \right\rangle    
+\left\langle x_j^2 f, x_k^2 f \right\rangle \right) 
 -  \sum_{j,k=1}^d \left\langle f, (\partial_j^2 x_k^2 + x_j^2\partial_k^2)f \right\rangle .
\end{equation}
A short computation based on the commutation relation 
$[ \partial_j,x_j ] = I$
gives
\begin{equation}
\label{eq:C-Delta-x^2-identity-a2}
 (\partial_j^2 x_j^2 + x_j^2\partial_j^2)f
= 2\partial_j x_j^2 \partial_j f + 2(\partial_j x_j - x_j\partial_j) f
= 2\partial_j x_j^2 \partial_j f + 2 f.
\end{equation}
For any  $j\not=k$ we have
$ 
\partial_j^2 x_k^2 +  x_j^2\partial_k^2
=
\partial_j x_k^2 \partial_j  +  \partial_k x_j^2\partial_k
$
on
$C^{\infty}_{c}\left( \mathbb{R}^{d} ,\mathbb{C}\right) $,
which together with (\ref{eq:C-Delta-x^2-identity-a1}) and (\ref{eq:C-Delta-x^2-identity-a2}) 
implies  (\ref{eq:C-Delta-x^2-identity}).

We now check inequality (\ref{eq:(1+|x|)gradf-estimate}). 
Combining (\ref{eq:C-Delta-x^2-identity}) with the inequality
\[
\sum_{j=1}^d \left\|  (1+|x|)\partial_j f\right\|^2 
 \le  2\sum_{j=1}^d \left\langle \partial_j f, (1+|x|^2)\partial_j f\right\rangle 
= 2 \left\langle  f, -\Delta f\right\rangle 
+  2\sum_{j=1}^d \left\langle \partial_j f,|x|^2\partial_j f\right\rangle,
\]
we obtain
$
\sum_{j=1}^d \left\|  (1+|x|)\partial_j f\right\|^2 
 \le 
 2 \left\langle  f, -\Delta f\right\rangle - \left\|\Delta f\right\|^2   
 -\left\|\, |x|^2 f\right\|^2+\left\| Cf\right\|^2 +2d \left\| f\right\|^2
 $,
and hence
\begin{equation}
\label{eq:(1+|x|)gradf-estimate-a1}
 \sum_{j=1}^d \left\|  (1+|x|)\partial_j f\right\|^2 
 \le 
 -\left\| (\Delta+1) f\right\|^2 +\left\| Cf\right\|^2 +(2d+1) \left\| f\right\|^2 .
\end{equation}
By  $C^2\ge d^2 1$,
 $(2d+1) \left\| f\right\|^2\le (2d^{-1}+d^{-2})\left\| Cf\right\|^2\le 3\left\| Cf\right\|^2$ since $d\ge 1$.
 Then,  (\ref{eq:(1+|x|)gradf-estimate}) follows from (\ref{eq:(1+|x|)gradf-estimate-a1}).

According to (\ref{eq:C-Delta-x^2-identity}), we have 
\[
\left\| (1+|x|^2)  f\right\|^2 
 \le  2 \left\| f\right\|^2  + 2\left\| |x|^2  f\right\|^2 
 \le   2(2d+1) \left\| f\right\|^2 +  2 \left\| C f\right\|^2  .
\]
Then
$
\left\| (1+|x|^2)  f\right\|^2
 \le  2(1+2d^{-1}+d^{-2}) \left\| C f\right\|^2
$,
which leads to (\ref{eq:(1+|x|^2)-estimate}).
\end{proof}

\begin{lemma}\label{lem:G-satisfies-H.2.1}
Under Hypothesis \ref{hyp:growth-cond}, 
the operators $G(t)$ and $L_\ell(t)$ satisfy 
Hypothesis \ref{hyp:L-G-C-domain},
as well as Conditions H2.1 and H2.2 of Hypothesis \ref{hyp:CF-inequality}.
\end{lemma}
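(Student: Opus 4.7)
The plan is to verify the four required properties (H1.1, H1.2, H2.1, H2.2) starting from the core $C^\infty_c(\mathbb{R}^d,\mathbb{C})$ of $C$ (established in the paragraph preceding Lemma \ref{lemma:key-op-ineq}) and then to extend all estimates to $\mathcal{D}(C)$ by closure. The workhorse is Lemma \ref{lemma:key-op-ineq}, which gives the relative bounds $\|\Delta f\|^2 \le \|Cf\|^2 + 2d\|f\|^2$, $\sum_j\|(1+|x|)\partial_j f\|^2 \le 4\|Cf\|^2$ and $\|(1+|x|^2)f\|^2 \le 8\|Cf\|^2$.

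For Condition H2.2, write $L_\ell(t)f = \sigma_{\ell j}(t,\cdot)\partial_j f + \eta_\ell(t,\cdot)f$ for $f\in C^\infty_c$. Under (H4.2.a), both $|\sigma_{\ell j}|$ and $|\eta_\ell|$ are dominated by $K(t)$, and \eqref{eq:(1+|x|)gradf-estimate} together with $\|\eta_\ell f\|\le K(t)\|f\|$ yields $\|L_\ell(t)f\|^2 \le K_\ell(t)\|f\|_C^2$. Under (H4.2.b), $\sigma_{\ell j}$ is independent of $x$ while, by the mean value theorem applied to $|\eta_\ell(t,0)|\le K(t)$ and the bound on $\nabla\eta_\ell$, one has $|\eta_\ell(t,x)|\le K(t)(1+|x|)$; then $\|\eta_\ell f\|^2$ is controlled via \eqref{eq:(1+|x|^2)-estimate}. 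For Condition H2.1, decompose $G(t) = -iH(t) - \tfrac12 \sum_\ell L_\ell^*L_\ell$. Each term of $H(t)f = -\alpha\Delta f + 2iA^j\partial_j f + i(\partial_j A^j)f + Vf$ is controlled by $\|f\|_C$ using Lemma \ref{lemma:key-op-ineq} and the growth conditions in (H4.1): the Laplacian by \eqref{eq:C-Delta-x^2-identity}, the magnetic term $A^j\partial_j f$ by \eqref{eq:(1+|x|)gradf-estimate} (since $|A^j|\le K(t)(1+|x|)$), and $Vf$ by \eqref{eq:(1+|x|^2)-estimate} (since $|V|\le K(t)(1+|x|^2)$). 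On $C^\infty_c$, one expands $L_\ell^* L_\ell f$ into a second-order differential operator whose principal part $\overline{\sigma}_{\ell j}\sigma_{\ell k}\partial_j\partial_k f$ is handled using the identity $\sum_{j,k}\|\partial_j\partial_k f\|^2 = \|\Delta f\|^2$ together with the pointwise bounds on $\sigma_{\ell j}$; the first-order and multiplication remainders pair the (derivatives of) $\sigma_{\ell j}$ and $\eta_\ell$ with factors $(1+|x|)\partial_j f$ or $(1+|x|^2)f$, controlled by Lemma \ref{lemma:key-op-ineq} and (H4.2). Density of $C^\infty_c$ in $(\mathcal{D}(C),\|\cdot\|_C)$ extends these inequalities to $\mathcal{D}(C)$, simultaneously delivering the domain inclusions in H1.1 and H1.2.

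For the measurability assertions in Hypothesis \ref{hyp:L-G-C-domain}, joint measurability of $\sigma_{\ell j}, \eta_\ell, A^j, V$ in $(t,x)$ combines with the continuity of $L_\ell(t), G(t):(\mathcal{D}(C),\|\cdot\|_C)\to\mathfrak{h}$ just established (uniformly on compact $t$-intervals) to give measurability of $(t,f)\mapsto L_\ell(t)\pi_C(f)$ and $(t,f)\mapsto G(t)\pi_C(f)$. A clean route is to pass through the resolvent approximation $R_n = n(n+C)^{-1}$ as in the proof of Lemma \ref{lema:Measurability}: the operators $L_\ell(t)R_n$ and $G(t)R_n$ are bounded on $\mathfrak{h}$, their $t$-dependence is measurable from the measurability of the coefficients, and their strong limits as $n\to\infty$ coincide with $L_\ell(t)\circ\pi_C$ and $G(t)\circ\pi_C$. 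The main obstacle is bookkeeping in the expansion of $L_\ell^*L_\ell$ as a second-order differential operator, verifying that every cross term of derivatives of $\sigma_{\ell j}$ and $\eta_\ell$ is majorized by $\|f\|_C^2$, especially under (H4.2.b) where $\eta_\ell$ is allowed linear growth in $x$.
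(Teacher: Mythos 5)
Your proposal is correct and follows essentially the same route as the paper: estimates on the core $C^\infty_c(\mathbb{R}^d,\mathbb{C})$ via Lemma \ref{lemma:key-op-ineq} and the growth conditions of Hypothesis \ref{hyp:growth-cond}, the expansion of $L_\ell(t)^*L_\ell(t)$ as a second-order differential operator with exactly the pairings of coefficients against $(1+|x|)\partial_j f$ and $(1+|x|^2)f$ that the paper uses, extension to $\mathcal{D}(C)$ by density, and measurability through the machinery of Lemma \ref{lema:Measurability}. One small caution on the resolvent route: the measurable approximants should be $L_\ell(t)R_n\circ\pi_C$ (as in Lemma \ref{lema:Measurability}) rather than $L_\ell(t)R_n$ alone, since off $\mathcal{D}(C)$ the latter need not converge to $0=L_\ell(t)\pi_C(x)$.
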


\begin{proof} For all $f\in C^{\infty}_{c}\left( \mathbb{R}^{d} ,\mathbb{C} \right) $ we have 
(Einstein summation convention on $j$)
\[
\left\| H(t) f \right\| \le \alpha \left\|\Delta f \right\| 
+ 2 \left\| A^j(t,\cdot)\partial_j f \right\|  
+  \left\| f \partial_j A^j (t,\cdot) \right\| 
+ \left\| V(t,\cdot)f\right\|.
\]
Combining the Schwarz inequality with (\ref{eq:(1+|x|)gradf-estimate}) gives
\[
  \left\| A^j(t,\cdot)\partial_j f \right\|   
 \le K(t)\left(d\, \textstyle{\sum_{j=1}^d}\left\|\left (1+|x|\right)\partial_j f \right\|^2\right )^{1/2} 
   \le  2d^{1/2} K(t) \left\| Cf\right\|. 
\]
By (\ref{eq:(1+|x|^2)-estimate}), 
$\left\| V(t,\cdot)f\right\| \le 8^{1/2}K(t)\left\| Cf\right\|$.
Moreover, 
Condition H4.1 implies 
$$
\left\| f \partial_j A^j (t,\cdot) f \right\| \le d\, K(t)  \left\| f\right\|, 
$$
and 
the identity  (\ref{eq:C-Delta-x^2-identity}) yields
$ \left\|\Delta f \right\|  \le 
 \left\| C f \right\| +(2  d)^{1/2} \left\| f \right\|$.
 Therefore
\begin{equation}\label{eq:H-C-rel-bd}
\left\| H(t) f \right\| \le 
\left(  \alpha +\left (2d^{1/2}+8^{1/2}\right)K(t)\right)\left\| Cf\right\|
+ \left(\alpha (2d)^{1/2}+dK(t)\right)  \left\| f\right\|.
\end{equation}

A straightforward computation yields 
\[
 L_\ell(t)^* L_\ell(t)  
 = 
 - \overline{\sigma}_{\ell j} \sigma_{\ell k}\partial_j\partial_k  
 - \left(
                 \overline{\sigma}_{\ell k}\eta_\ell -\overline{\eta}_\ell\sigma_{\ell k}
                 + \overline{\sigma}_{\ell j} \mlt{  \partial_j \sigma_{\ell k} }
       \right)\partial_k
  +\left(\overline{\eta}_\ell\eta_\ell
           - \overline{\sigma}_{\ell j} \mlt{ \partial_j\eta_\ell  }
           \right) 
\]
Since $1\le \ell\le m$,
using Condition H4.2 and the Schwarz inequality we deduce that
\[
\left\|\overline{\sigma}_{\ell j} 
\sigma_{\ell k}\partial_j\partial_k f  \right\| 
\le m d\, K(t)^2 \left(\textstyle{\sum_{j,k=1}^d} 
\left\|\partial_j\partial_k f\right\|^2\right)^{1/2}
= m d\, K(t)^2 \left\|\Delta f\right\| .
\]
From $|\eta_\ell|\le K(t)(1+|x|)$, $|\sigma_{\ell j}|\le K(t)$ and $|\partial_j\sigma_{\ell j}|\le K(t)$
we have 
\begin{eqnarray*}
\left\| \left(\overline{\eta}_\ell\eta_\ell
- \overline{\sigma}_{\ell j} \mlt{ \partial_j\eta_\ell } \right)f \right\|
& \le &   
        \left\|  \overline{\eta}_\ell\eta_\ell f \right\|
        +  \left\|  \overline{\sigma}_{\ell j} \mlt{  \partial_j\eta_\ell  }  f  \right\| \\
& \le &  2m K(t)^2 \left\| (1+|x|^2)f\right\| + 2md K(t)^2 \left\| (1+|x|)f\right\|\\
& \le &  2m(2d+1) K(t)^2 \left\| (1+|x|^2)f\right\|. 
\end{eqnarray*}
Similarly, 
combining the Schwarz inequality with (\ref{eq:(1+|x|)gradf-estimate}) yields
\begin{align*}
  \left\| \left(
\overline{\sigma}_{\ell j}\  \mlt{    \partial_j\sigma_{\ell k}  }
             -\overline{\eta}_\ell\sigma_{\ell k}
            + \eta_\ell\overline{\sigma}_{\ell k}\right)\partial_kf \right\|
 \ & \le 4m  K(t)^2  
\sum_{k=1}^d  \left\|  (1+|x|)\partial_k f\right\|   
  \\
 & \le  8md^{1/2}\, K(t)^2 \left\| Cf\right\| .
\end{align*}
 Summing up,
 $\sum_{\ell=1}^m \left\| L_\ell(t)^*L_\ell(t) f \right\|$ 
 is less than or equal to  $m K(t)^2$ times
\begin{eqnarray*} 
& & 8 d^{1/2}\left\|Cf\right\| + d \left\|\Delta f\right\| 
+ 2(2d+1)\left\| |x|^2f\right\| + 2(2d+1)\left\| f\right\| \\
&\le  & 8d^{1/2}\left\|Cf\right\| 
+ 2 (2d+1)\left( \left\|\Delta f\right\|+\left\| |x|^2f\right\| \right)+ 2(2d+1)\left\| f\right\| \\
&\le  & 8d^{1/2}\left\|Cf\right\| 
 + 2 (2d+1)\left( \left\| C f\right\|+(2d)^{1/2} \left\| f\right\|  \right)+ 2(2d+1)\left\| f\right\|.
 \end{eqnarray*}
This, together with  (\ref{eq:H-C-rel-bd}),  shows that $G(t)$ satisfies Condition H2.1
since $C^\infty_c(\mathbb{R}^d;\mathbb{C})$ is a core for $C$.
In a similar, but simpler, way (we deal now with first order 
differential operators), we can prove that the operators 
$L_\ell(t)$ satisfy Condition H2.2.

Let $g \in C^\infty_c(\mathbb{R}^d;\mathbb{C})$.
Since $ \sigma_{\ell k}\left(t,\cdot\right) $ is continuous, 
using Fubini's theorem we deduce the measurability of 
$
t \mapsto  \left\langle   \phi, \overline{\sigma_{\ell k}\left(t,\cdot\right)} g \right\rangle    
$
for all $\phi \in C^\infty_c(\mathbb{R}^d;\mathbb{C})$,
and so $t \mapsto  \overline{\sigma_{\ell k}\left(t,\cdot\right)} g$ is measurable.
Combining Lemma \ref{lema:Measurability} with (\ref{eq:(1+|x|)gradf-estimate})
yields the measurability of $ f \mapsto \partial_k \pi_C \left( f \right) $
as a map from $L^2 \left( \mathbb{R}^d, \mathbb{C} \right)$ to $L^2 \left( \mathbb{R}^d, \mathbb{C} \right)$.
Therefore
$
\left(t, f \right)
\mapsto
 \left\langle \overline{ \sigma_{\ell k}\left(t,\cdot\right)} g  ,  \partial_k \pi_C \left( f \right) \right\rangle    
$
is measurable,
which implies the measurability of 
$
\left(t, f \right)
\mapsto
 \sigma_{\ell k}\left(t,\cdot\right) \partial_k \pi_C \left( f \right) 
$
as a function from 
$ \left[ 0, \infty \right[ \times L^2 \left( \mathbb{R}^d, \mathbb{C} \right)$
to  $L^2 \left( \mathbb{R}^d, \mathbb{C} \right)$.
In the same manner we can see that
$
\left(t, f \right)
\mapsto
 \eta_{\ell k}\left(t,\cdot\right) \pi_C \left( f \right) 
$
is measurable,
hence Condition H1.1 holds.
Similarly, 
we can obtain that $G(t)$ satisfies Condition H1.2. 
\end{proof}

We now verify Condition H2.3 with 
$\mathfrak{D}_1 = C^\infty_c\left( \mathbb{R}^{d} ,\mathbb{C}\right) $,
which is the most complicated step of our proof.
The key inequality in Condition H2.3 at a formal purely algebraic level reads as 
$\mathcal{L}(C^2)\le \alpha(t) \left( C^2 + I \right)$,
where $\mathcal{L}$ is the formal time-dependent Lindbladian associated 
with the operators $G(t)$ and $L_\ell(t)$, namely 
$\mathcal{L}(X)= G(t)^*X + \sum_\ell L_\ell(t)^* X L_\ell(t) 
+ X G(t)$. Decomposing $\mathcal{L}$ as the sum of 
a Hamiltonian part $i[H(t),\cdot]$ and a dissipative part 
$\mathcal{L}_0(X) =\mathcal{L}(X)-i[H(t),X]$ 
we check separately that $i[H(t),C^2]\le K(t) \left( C^2 + I \right)$ 
(Lemma \ref{lem:comm-H-C-estim}) and 
$\mathcal{L}_0(C^2)\le K(t) \left( C^2  + I \right)$ 
(Lemmata \ref{lemma:CCP-part-C}, \ref{lem:estimate-L0-step1}
and \ref{lem:estimate-L0}) in the quadratic form sense. 

\begin{lemma}\label{lem:comm-H-C-estim}
Suppose that Hypothesis \ref{hyp:growth-cond}  holds. 
Then,
for all $f\in C^\infty_c\left( \mathbb{R}^{d} ,\mathbb{C}\right)$ we have
\begin{equation}\label{eq-H-C-estimate}
2\Re \left\langle C^2 f, iH(t)f\right\rangle 
\le K(t) \left(  \left\| C f\right\|^2 + \left\| f\right\|^2 \right).
\end{equation} 
\end{lemma}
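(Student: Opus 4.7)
The plan is to reduce the estimate to a commutator bound of the form $\|[C,H(t)]f\|\le K(t)(\|Cf\|+\|f\|)$ and then establish that bound by explicit computation together with Lemma~\ref{lemma:key-op-ineq} and Hypothesis~\ref{hyp:growth-cond}. First, for $f\in C^\infty_c(\mathbb{R}^d,\mathbb{C})$, the vector $Cf$ is again in $C^\infty_c$, so $H(t)Cf$ is well-defined, and $H(t)$ is formally symmetric on $C^\infty_c$ (reality of $V$, self-adjointness of $-\alpha\Delta$, and the computation $\langle iA^j\partial_j u+i\partial_j A^j u,v\rangle=\langle u,iA^j\partial_j v+i\partial_j A^j v\rangle$ obtained by one integration by parts). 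Consequently $\langle Cf,H(t)Cf\rangle\in\mathbb{R}$, and writing
\begin{equation*}
\langle C^2 f, iH(t)f\rangle
 = \langle Cf, iH(t)Cf\rangle + \langle Cf, i[C,H(t)]f\rangle,
\end{equation*}
the first summand is purely imaginary, so $2\Re\langle C^2 f, iH(t)f\rangle = 2\Re\langle Cf, i[C,H(t)]f\rangle$. By Cauchy--Schwarz and $2ab\le a^2+b^2$, the assertion \eqref{eq-H-C-estimate} follows once the commutator bound $\|[C,H(t)]f\|\le K(t)(\|Cf\|+\|f\|)$ is established.

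Next, split $H(t) = -\alpha\Delta + H_M(t) + V(t,\cdot)$, where $H_M(t) = i\sum_{j=1}^d\left(A^j\partial_j + \partial_j A^j\right)$. Using $C = -\Delta + |x|^2$ and the canonical commutation relations, direct calculation gives
\begin{align*}
[C,-\alpha\Delta] &= 2\alpha d + 4\alpha \sum_{j=1}^d x_j\partial_j,\\
[C,V(t,\cdot)] &= -(\Delta V)(t,\cdot) - 2\sum_{j=1}^d(\partial_j V)(t,\cdot)\partial_j,\\
[\,|x|^2,H_M(t)] &= -4i\sum_{j=1}^d A^j(t,\cdot)\,x_j,
\end{align*}
while $[-\Delta,H_M(t)]$ expands into a finite linear combination of terms of the types $(\Delta A^j)\partial_j$, $(\partial_k A^j)\partial_j\partial_k$, $(\partial_j\Delta A^j)\cdot$ (as a multiplication operator), and $(\partial_k\partial_j A^j)\partial_k$, with coefficients depending only on derivatives of $A^j$.

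Finally, each contribution will be estimated in $L^2$ by matching the coefficient bounds of H4.1 against the relative bounds of Lemma~\ref{lemma:key-op-ineq}. The estimates $|\partial_j V|,\,|A^j|,\,|\partial_{j'}\partial_j A^j|\le K(t)(1+|x|)$ and $|\partial_{j'}A^j|\le K(t)$ control the first-order terms via $\|(1+|x|)\partial_j f\|\le 2\|Cf\|$; the estimates $|\Delta V|,\,|\partial_j\Delta A^j|\le K(t)(1+|x|^2)$ and $|A^j x_j|\le K(t)(1+|x|^2)$ control the multiplication-type terms via $\|(1+|x|^2)f\|\le 2\sqrt{2}\|Cf\|$; and the single genuinely second-order contribution $(\partial_k A^j)\partial_j\partial_k f$ is bounded using $|\partial_k A^j|\le K(t)$ together with $\|\partial_j\partial_k f\|\le\|\Delta f\|$ (which follows from $\|\Delta f\|^2=\sum_{j,k}\|\partial_j\partial_k f\|^2$) and the identity \eqref{eq:C-Delta-x^2-identity}. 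Summing these estimates yields the required commutator bound.

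I expect the main obstacle to be the bookkeeping in $[-\Delta,H_M(t)]$: every derivative of $A^j$ that appears has to be paired with the correct line of H4.1, and the precise reason the hypothesis is stated as a bound on $\partial_j(\Delta A^j)$ rather than on $\Delta A^j$ itself is exactly to tame the multiplication term $(\partial_j\Delta A^j)f$ produced when $-\Delta$ is commuted past the multiplication operator $(\partial_j A^j)$ hidden inside $\partial_j A^j = A^j\partial_j + (\partial_j A^j)$. Once this algebraic accounting is carried through, the relative-boundedness estimates from Lemma~\ref{lemma:key-op-ineq} do the rest.
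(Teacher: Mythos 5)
Your proposal is correct and follows essentially the same route as the paper's proof: use the symmetry of $H(t)$ on $C^\infty_c\left( \mathbb{R}^{d} ,\mathbb{C}\right)$ to reduce \eqref{eq-H-C-estimate} to a relative bound on the commutator $[C,H(t)]f$, compute that commutator explicitly as a second-order operator with coefficients controlled by H4.1, and estimate the second-, first- and zeroth-order pieces via $\left\|\Delta f\right\|$, $\left\|(1+|x|)\partial_k f\right\|$ and $\left\|(1+|x|^2)f\right\|$ using Lemma \ref{lemma:key-op-ineq}. The only differences are organizational (you split $H(t)$ into kinetic, magnetic and potential parts before commuting, and use Cauchy--Schwarz with $2ab\le a^2+b^2$ instead of the paper's $2\left\|Cf\right\|\cdot\left\|[H(t),C]f\right\|$ bound), which do not change the argument.
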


\begin{proof}
Since $f\in C^\infty_c\left( \mathbb{R}^{d} ,\mathbb{C}\right)$,
\begin{eqnarray*}
&&
\left\langle  C^2f, iH(t)f\right\rangle + \left\langle  iH(t)f,   C^2f\right\rangle  
\\
& = & \left\langle  Cf, i (H(t)C+[C,H(t)])f\right\rangle 
+ \left\langle  i(H(t)C +[C,H(t)]) f, Cf\right\rangle 
 \\
& = & - i \left\langle Cf, [H(t),C]f\right\rangle + i \left\langle [H(t),C]f, Cf\right\rangle .
\end{eqnarray*}
Then
$
\left| 2\Re \left\langle C^2 f, iH(t)f\right\rangle \right|
=
\left| 2\Im \left\langle Cf, [H(t),C]f\right\rangle  \right|
\leq
2\left\|Cf\right\|\cdot\left\| [H(t),C]f\right\|
$.
A computation allows us  to write the commutator  $[H(t),C]$  as
\begin{eqnarray*}
& & 4 i \mlt{ \partial_k A^j } \partial_k\partial_j 
+ \left( 2   \mlt{\partial_j V }  
 -   4\alpha x_j + 2    i\mlt{\Delta A^j} 
+i\mlt{ \partial_j\partial_k A^k } \right)\partial_j
\\
&  & 
 + \left( \mlt{ \Delta V } -  2\alpha d 
+i \mlt{ \partial_j \Delta A^j } 
  +   4i x_j A^j \right) .
\end{eqnarray*}
Using Condition H4.1 and the Schwarz inequality we obtain that
the norm of the first term, acting on a function $f$, is less than or equal to 
\[
4 K(t) \textstyle{\sum_{j,k=1}^d} \left\| \partial_k\partial_j f \right\|
\le 4 K(t) d \left(   \textstyle{\sum_{j,k=1}^d} \left\| \partial_k\partial_j f \right\|^2\right)^{1/2}
= 2K(t) d \left\| \Delta f \right\|.
\]
The norm of the second term, with first order partial derivatives of $f$, is upper bounded by
$
K\left(t \right) \sum_{k=1}^d  \left\|(1+|x|)\partial_k f\right\| 
$,
which is less than or equal to 
$$
d^{1/2} K(t)  \left( \sum_{k=1}^d  \left\|(1+|x|)\partial_k f\right\| ^2 \right)^{1/2} \mspace{-2 mu} .
$$
The third term is not bigger than $ K(t) \left\| (1+|x|^2) f\right\|  $.
We now use Lemma \ref{lemma:key-op-ineq} to get (\ref{eq-H-C-estimate}).
\end{proof}

Starting from the formal algebraic equality 
\[
\mathcal{L}_0(C^2) = C\mathcal{L}_0(C)+\mathcal{L}_0(C) C 
+\sum_{\ell=1}^d \left[C,L_\ell(t)\right]^*\left[C,L_\ell(t)\right] 
\]
written in the quadratic form sense,
we now establish 
an estimate of 
$\left| \langle f, \mathcal{L}_0(C^2)f\rangle \right|$
(formally the left-hand side of the inequality (\ref{4.1}) given below).
Note that $\mathcal{L}_0(C^2)$ 
does not make sense as a sixth (or fourth, after simplifications) order 
differential operator acting on 
$f \in  C^\infty_c\left( \mathbb{R}^{d} ,\mathbb{C}\right) $ 
because $\sigma_{\ell k}, \eta_\ell$ are only three-times 
differentiable,
but $\mathcal{L}_0(C)$ does. The right-hand side of  (\ref{4.1}),  
however, can be written rigorously as
$
\sum_{\ell=1}^m \left\| \left[ C, L_\ell \right] f  \right\|^{2} 
+ 2 \left\| Cf  \right\|\cdot \left\|\mathcal{L}_0(C)f  \right\|
$.

\begin{lemma}\label{lemma:CCP-part-C} 
For all 
$f \in  C^\infty_c\left( \mathbb{R}^{d} ,\mathbb{C}\right) $ we have
\begin{eqnarray}
\label{4.1}
&&  \sum_{\ell=1}^m \left(\left\| C L_\ell f \right\|^{2}
-  \Re\left\langle C^{2} f, L^*_\ell L_\ell f \right\rangle \right) 
 \leq 
\sum_{\ell=1}^m \left\| \left[ C, L_\ell \right] f  \right\|^{2} 
 \\
 \nonumber
 && \hspace{5cm}
 + 
\left\| Cf  \right\| 
\left\| \sum_{\ell=1}^m \left( L_\ell^* \left[ C, L_\ell\right]  +  \left[ L_\ell^*, C\right] L_\ell\right) f  \right\|.
\end{eqnarray}
\end{lemma}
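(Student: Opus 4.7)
The proof is an algebraic manipulation based on the commutator identity $CL_\ell = L_\ell C + [C,L_\ell]$ together with self-adjointness of $C$. Since $f\in C^\infty_c(\mathbb{R}^d,\mathbb{C})$ and the coefficients of $L_\ell$ are smooth, $L_\ell f$ is again in $C^\infty_c(\mathbb{R}^d,\mathbb{C})$ (the operator is local), hence every operator composition appearing below maps $C^\infty_c(\mathbb{R}^d,\mathbb{C})$ into $L^2$, so the quadratic forms are unambiguously defined.

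First, I expand the left-hand side. Writing $CL_\ell f = L_\ell C f + [C,L_\ell] f$ gives
\begin{equation*}
\|CL_\ell f\|^{2} \;=\; \|L_\ell Cf\|^{2} + 2\Re\langle L_\ell Cf,[C,L_\ell]f\rangle + \|[C,L_\ell]f\|^{2}.
\end{equation*}
Next, using $C=C^{*}$ and the identity $CL_\ell^{*}L_\ell = L_\ell^{*}L_\ell C + L_\ell^{*}[C,L_\ell] + [C,L_\ell^{*}]L_\ell$ (which is rigorous on $C^\infty_c(\mathbb{R}^d,\mathbb{C})$ since $L_\ell^{*}L_\ell f$ and $L_\ell f$ lie in $C^\infty_c$),
\begin{equation*}
\Re\langle C^{2}f, L_\ell^{*}L_\ell f\rangle
\;=\; \|L_\ell Cf\|^{2} + \Re\langle Cf, L_\ell^{*}[C,L_\ell]f\rangle + \Re\langle Cf,[C,L_\ell^{*}]L_\ell f\rangle.
\end{equation*}

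Subtracting these two identities, the terms $\|L_\ell Cf\|^{2}$ cancel, and since $2\Re\langle L_\ell Cf,[C,L_\ell]f\rangle = 2\Re\langle Cf, L_\ell^{*}[C,L_\ell]f\rangle$, one of the two copies cancels the corresponding term from the second identity, leaving
\begin{equation*}
\|CL_\ell f\|^{2} - \Re\langle C^{2}f, L_\ell^{*}L_\ell f\rangle
\;=\; \|[C,L_\ell]f\|^{2} + \Re\bigl\langle Cf,\bigl(L_\ell^{*}[C,L_\ell] + [L_\ell^{*},C]L_\ell\bigr)f\bigr\rangle,
\end{equation*}
where I used $-[C,L_\ell^{*}] = [L_\ell^{*},C]$. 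Summing over $1\le\ell\le m$ and applying the Cauchy--Schwarz inequality to the real inner-product term produces precisely the right-hand side of (\ref{4.1}).

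The only subtle point is ensuring that the algebraic identity $CL_\ell^{*}L_\ell = L_\ell^{*}L_\ell C + L_\ell^{*}[C,L_\ell] + [C,L_\ell^{*}]L_\ell$ is applied to vectors in the domains of all operators involved. This is automatic here because $f\in C^\infty_c(\mathbb{R}^d,\mathbb{C})$, $L_\ell$ is a first-order differential operator with $C^{3}$-coefficients (by Hypothesis \ref{hyp:growth-cond}), and therefore $L_\ell f$, $L_\ell^{*}L_\ell f$, $[C,L_\ell]f$ are smooth functions of compact support to which $C$ and the other differential operators may be applied classically. Hence no technical obstacle arises; the lemma follows directly from the two expansions above and Cauchy--Schwarz.
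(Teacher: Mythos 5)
Your argument is correct and is essentially the paper's proof: both expand the quadratic form via $CL_\ell = L_\ell C + [C,L_\ell]$ (equivalently $[C,L_\ell^* L_\ell] = L_\ell^*[C,L_\ell] + [C,L_\ell^*]L_\ell$), cancel the $\left\| L_\ell C f\right\|^2$ terms, take real parts, sum over $\ell$ and finish with the Cauchy--Schwarz inequality, exactly as in the paper's term-by-term cancellation. The only nitpick is that $L_\ell f$ need not lie in $C^\infty_c$ since the coefficients are only $C^3$ under Hypothesis \ref{hyp:growth-cond}, but it is smooth enough and compactly supported, so all the formal-adjoint manipulations you invoke remain valid.
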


\begin{proof}
Rearranging terms  we have that for all $\ell=1,\dots,m$,
\begin{eqnarray*}
&  & 
   \left\langle C  L_\ell f, C  L_\ell f \right\rangle 
 - \left\langle C^{2}  f, L_\ell^*L_\ell f\right\rangle
\\
 & = & 
   \left\langle \left(  L_\ell C +  \left[ C, L_\ell \right] \right)  f,  \left(  L_\ell C +  \left[ C, L_\ell \right]\right)  f \right\rangle 
  - \left\langle C  f, (L_\ell^* C +[C,L_\ell^*]) L_\ell  f\right\rangle  \\
 & = & 
  \left\langle  L_\ell C  f,  L_\ell C  f\right\rangle 
+ \left\langle  L_\ell C  f,   \left[ C, L_\ell \right]f \right\rangle 
+ \left\langle \left[ C, L_\ell \right]  f,   L_\ell C   f \right\rangle 
+  \left\langle   \left[ C, L_\ell \right]   f,    \left[ C, L_\ell \right]  f \right\rangle  \\
 &  & -
 \left\langle L_\ell C  f,  C   L_\ell  f\right\rangle
- \left\langle C  f, [C,L_\ell^*] L_\ell  f\right\rangle
\end{eqnarray*}
Note that the sum of the first, second and fifth term vanishes and the third is equal to 
$ \left\langle L_\ell^*\left[ C, L_\ell \right]  f,   C   f \right\rangle$. 
We find then
\[
 \left\langle C  L_\ell f, C  L_\ell f \right\rangle 
 -  \left\langle C^{2}  f, L_\ell^*L_\ell  f\right\rangle
=  \left\|  \left[ C, L_\ell \right]  f \right\|^2 
+ \left\langle L_\ell^*\left[ C, L_\ell \right]  f,   C   f \right\rangle 
+ \left\langle C  f, [L_\ell^*,C] L_\ell  f\right\rangle,
\]
and so taking the real part we can write 
\begin{align*}
\left\langle C  L_\ell f, C  L_\ell f \right\rangle 
 -  \Re\left\langle C^{2}  f, L_\ell^*L_\ell  f\right\rangle 
\ & =   
\left\| \left[ C, L_\ell \right]  f \right\|^2 
+ \frac{1}{2} \left\langle L_\ell^*\left[ C, L_\ell \right]  f,   C   f \right\rangle 
\\
& \hspace{-2.5cm} 
 + \frac{1}{2} \left\langle C f,    L_\ell^*\left[ C, L_\ell \right]   f \right\rangle  
 +  \frac{1}{2}\left(\left\langle C  f, [L_\ell^*,C] L_\ell  f\right\rangle
 + \left\langle [L_\ell^*,C] L_\ell  f, C f\right\rangle\right) ,
\end{align*}
which implies 
\begin{align*}
\left\langle C  L_\ell f, C  L_\ell f \right\rangle 
 -  \Re\left\langle C^{2}  f, L_\ell^*L_\ell  f\right\rangle 
\ & =    \left\|\left[ C, L_\ell \right]  f  \right\|^2 
+ \frac{1}{2} \left\langle L_\ell^*\left[ C, L_\ell \right]  f,   C   f \right\rangle 
 \\
& 
 \hspace{-2.5cm}
+ \frac{1}{2} \left\langle [L_\ell^*,C] L_\ell  f, C f\right\rangle 
 + \frac{1}{2}\left(\left\langle C  f, [L_\ell^*,C] L_\ell  f\right\rangle
 + \left\langle C f,    L_\ell^*\left[ C, L_\ell \right]   f \right\rangle\right).
\end{align*}
The conclusion follows  summing up over $\ell$ and applying the Schwarz inequality.
\end{proof}

We now show that $\mathcal{L}_0(C)$ is a second 
order differential operator with well-behaved coefficients 
allowing us to prove that $\mathcal{L}_0(C)$ is relatively bounded with respect to $C$.

\begin{lemma}
\label{lem:estimate-L0-step1}
Under the Hypothesis \ref{hyp:growth-cond}, 
for all $f\in C^\infty_c(\mathbb{R}^d, \mathbb{C})$ we have 
\begin{equation}\label{eq:estimate-norm-L0(C)}
\left\Vert \mathcal{L}_0(C) f \right\Vert 
\le K(t) \left\Vert Cf\right\Vert .
\end{equation}
\end{lemma}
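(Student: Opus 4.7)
The plan is to express $\mathcal{L}_0(C)$ as an explicit differential operator on $C_c^\infty(\mathbb{R}^d,\mathbb{C})$, to show that its a priori third-order part cancels by virtue of the symmetry relation (\ref{4.2}), and then to control the remaining second-order operator using Lemma~\ref{lemma:key-op-ineq}.

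First, from $\mathcal{L}_0(X)=\sum_{\ell=1}^{m}\bigl(L_\ell^* X L_\ell - \tfrac12 L_\ell^* L_\ell X - \tfrac12 X L_\ell^* L_\ell\bigr)$ and the identity $L_\ell^* C L_\ell = L_\ell^* L_\ell C + L_\ell^*[C,L_\ell]$ one obtains the algebraic identity
\[
\mathcal{L}_0(C) = \frac{1}{2}\sum_{\ell=1}^{m}\bigl( L_\ell^*[C,L_\ell] + [L_\ell^*,C]\,L_\ell \bigr)
\]
on $C_c^\infty(\mathbb{R}^d,\mathbb{C})$. Writing $C=-\Delta+|x|^2$ and $L_\ell=\sigma_{\ell j}\partial_j+\eta_\ell$, a direct computation (using $[\Delta,f]=(\Delta f)+2(\partial_k f)\partial_k$ and $[|x|^2,\partial_j]=-2x_j$) yields
\[
[C,L_\ell] = -2(\partial_k\sigma_{\ell j})\partial_k\partial_j - (\Delta\sigma_{\ell j})\partial_j - 2(\partial_k\eta_\ell)\partial_k - (\Delta\eta_\ell) - 2\sigma_{\ell j}x_j,
\]
with an analogous expression for $[L_\ell^*,C]$, obtained by applying the same formula to $L_\ell^*=-\overline\sigma_{\ell j}\partial_j+(\overline\eta_\ell-\partial_j\overline\sigma_{\ell j})$.

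The crucial step is the cancellation of the a priori third-order part. Composing $L_\ell^*$ with the second-order summand of $[C,L_\ell]$ and the second-order summand of $[L_\ell^*,C]$ with $L_\ell$, and summing over $\ell$, produces a third-order term whose coefficient of $\partial_i\partial_k\partial_j$ equals
\[
2\sum_{\ell=1}^{m}\bigl(\overline\sigma_{\ell i}\,\partial_k\sigma_{\ell j} - \sigma_{\ell i}\,\partial_k\overline\sigma_{\ell j}\bigr).
\]
Applying (\ref{4.2}) with the relabelling $(k,j,h)\mapsto(i,k,j)$ gives $\sum_\ell \sigma_{\ell i}\partial_k\overline\sigma_{\ell j}=\sum_\ell \overline\sigma_{\ell i}\partial_k\sigma_{\ell j}$, so this coefficient vanishes pointwise and no symmetrisation in $(i,j,k)$ is needed. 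Under (H4.2.b) the $\sigma_{\ell j}$ do not depend on $x$ and this cancellation is automatic.

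Once the third-order part is eliminated, $\mathcal{L}_0(C)$ is a second-order differential operator whose coefficients, under Hypothesis~\ref{hyp:growth-cond}, are: bounded on $\partial_k\partial_j$; at most of linear growth in $|x|$ on $\partial_k$ (coming from terms like $\overline\sigma_{\ell i}\sigma_{\ell j}x_j$ or products with $\eta_\ell$); and at most of quadratic growth in $|x|$ at order zero (through $|\eta_\ell|^2$-type contributions). Combining $\|\Delta f\|\le\|Cf\|+(2d)^{1/2}\|f\|$ from (\ref{eq:C-Delta-x^2-identity}), the estimates (\ref{eq:(1+|x|)gradf-estimate}) and (\ref{eq:(1+|x|^2)-estimate}) of Lemma~\ref{lemma:key-op-ineq}, and $\|f\|\le d^{-1}\|Cf\|$ (since $C\ge dI$), each surviving term in $\|\mathcal{L}_0(C)f\|$ is dominated by $K(t)\|Cf\|$, giving (\ref{eq:estimate-norm-L0(C)}). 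The main obstacle is the careful bookkeeping needed to isolate and cancel the third-order contributions; after that, the proof reduces to a routine application of Lemma~\ref{lemma:key-op-ineq} together with the growth bounds in (H4.2).
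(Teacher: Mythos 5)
Your argument is correct, and it reaches the paper's conclusion by the same two pillars: the cancellation of the highest-order terms via the symmetry (\ref{4.2}), and the final reduction to Lemma \ref{lemma:key-op-ineq} once $\mathcal{L}_0(C)$ is recognized as a second-order operator with bounded, linearly growing and quadratically growing coefficients at orders $2$, $1$ and $0$ respectively. The organization, however, is different. You start from the identity $\mathcal{L}_0(C)=\tfrac12\sum_{\ell}\bigl(L_\ell^*[C,L_\ell]+[L_\ell^*,C]L_\ell\bigr)$, compute $[C,L_\ell]$ in one stroke, and cancel the a priori third-order part of the composed operator, checking (correctly, and without any symmetrisation) that its coefficient $2\sum_\ell(\overline\sigma_{\ell i}\partial_k\sigma_{\ell j}-\sigma_{\ell i}\partial_k\overline\sigma_{\ell j})$ vanishes pointwise by (\ref{4.2}). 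The paper instead exploits the product rule $\mathcal{L}_0(AB)=A\mathcal{L}_0(B)+\mathcal{L}_0(A)B+\sum_\ell[A,L_\ell^*][L_\ell,B]$ on the building blocks $x_j$ and $\partial_j$: the use of (\ref{4.2}) is localized in the simpler object $\mathcal{L}_0(\partial_j)$, whose second-order part cancels, and the explicit formulas (\ref{4.3}) and (\ref{4.4}) then record $\mathcal{L}_0(|x|^2)$ and $\mathcal{L}_0(\Delta)$ with all coefficients visible. Your route is shorter and has the advantage of working directly with the quantity $\sum_\ell\bigl(L_\ell^*[C,L_\ell]+[L_\ell^*,C]L_\ell\bigr)$ that actually appears on the right-hand side of (\ref{4.1}) in Lemma \ref{lemma:CCP-part-C}, so no separate identification step is needed; the paper's two-step route keeps each algebraic computation lower order (only first- and second-order operators ever need expanding) and produces reusable intermediate expressions, at the price of more bookkeeping. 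In both cases the closing step is the same: mixed second derivatives are controlled through $\sum_{j,k}\|\partial_j\partial_k f\|^2=\|\Delta f\|^2$ and $\|\Delta f\|\le\|Cf\|+(2d)^{1/2}\|f\|$, first-order terms through (\ref{eq:(1+|x|)gradf-estimate}), zero-order terms through (\ref{eq:(1+|x|^2)-estimate}), and $\|f\|\le d^{-1}\|Cf\|$, yielding (\ref{eq:estimate-norm-L0(C)}).
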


\begin{proof} 
Simple algebraic computations yield 
\begin{eqnarray*}
2\mathcal{L}_0(x_j) 
 =  
\left( L_\ell^*[x_j,L_\ell] +[L_\ell^*,x_j]L_\ell\right)
& = &
  \left(\partial_k\overline{\sigma}_{\ell k}-\overline{\eta}_\ell\right)\sigma_{\ell j}
- \overline{\sigma}_{\ell j}\left(\sigma_{\ell k}\partial_k 
+ \eta_\ell\right)\\
&=& 
 \partial_k(\sigma^*\sigma)_{k j} - (\sigma^*\sigma)_{j k}\partial_k
-\overline{\eta}_\ell \sigma_{\ell j}- \overline{\sigma}_{\ell j} \eta_\ell,
\end{eqnarray*}
which implies
$
2\mathcal{L}_0(x_j) 
=
\left( (\sigma^*\sigma)_{k j}- (\sigma^*\sigma)_{j k} \right) \partial_k
+\mlt{ \partial_k(\sigma^*\sigma)_{k j} } - \overline{\eta}_\ell \sigma_{\ell j}-\eta_\ell\overline{\sigma}_{\ell j}
$.
From
$
\mathcal{L}_0(|x|^2)  =  x_j \mathcal{L}_0(x_j) 
    + \mathcal{L}_0(x_j)  x_j + [ x_j, L_\ell^*][L_\ell, x_j] 
 $
 it follows that 
\begin{equation}
 \label{4.3}
\mathcal{L}_0(|x|^2)  
 =  
 x_j\left( (\sigma^*\sigma)_{k j} 
-  (\sigma^*\sigma)_{k j}\right)\partial_k
+x_j\mlt{\partial_k(\sigma^*\sigma)_{k j}}
-2\Re  (\overline{\eta}_{\ell}\sigma_{\ell j}) x_j
+(\sigma^*\sigma)_{j j} .
\end{equation}

In a similar way $-\Delta=\sum_j -\partial_j\partial_j$, and 
\begin{eqnarray*}
 2\mathcal{L}_0(\partial_j) 
 & = & 
  L_\ell^*[\partial_j,L_\ell] +[L_\ell^*,\partial_j]L_\ell
\\
  &   =   &
\left(-\partial_h\overline{\sigma}_{\ell h}+\overline{\eta}_\ell\right) 
\left(\mlt{ \partial_j\sigma_{\ell k}  }\partial_k 
+ \mlt{ \partial_j\eta_\ell } \right)
\\ 
&&
+\left( \partial_h \mlt{ \partial_j\overline{\sigma}_{\ell h}  }
- \mlt{ \partial_j\overline{\eta}_\ell} \right)
\left(\sigma_{\ell k}\partial_k + \eta_\ell\right) .
\end{eqnarray*}
In the above differential operator, 
second order terms cancel.
In fact, we can write the expression 
$-\partial_h\overline{\sigma}_{\ell h}
\mlt{\partial_j\sigma_{\ell k}}\partial_k
+\partial_h\mlt{\partial_j\overline{\sigma}_{\ell h}}
\sigma_{\ell k}\partial_k$, by an exchange of summation 
indexes $h,k$ in the second term, in the form 
\begin{align*}
&  \partial_k\sigma_{\ell h}
\mlt{\partial_j\overline{\sigma}_{\ell k}}\partial_h
-\partial_h\,\overline{\sigma}_{\ell h}
\mlt{\partial_j\sigma_{\ell k}} \partial_k 
 = \left( \sigma_{\ell h}
\mlt{\partial_j\overline{\sigma}_{\ell k}}
-\overline{\sigma}_{\ell h}
\mlt{\partial_j\sigma_{\ell k}}\right)\partial_h\partial_k \\
&    +
\left( \mlt{\partial_k\sigma_{\ell h}}
\mlt{\partial_j\overline{\sigma}_{\ell k}}
+\sigma_{\ell h}
\mlt{\partial_j\partial_k\overline{\sigma}_{\ell k}}\right)\partial_h
-\left( \mlt{\partial_h\overline{\sigma}_{\ell h}}
\mlt{\partial_j\sigma_{\ell k}}
+\overline{\sigma}_{\ell h}
\mlt{\partial_h\partial_j\sigma_{\ell k}}\right)\partial_k.
\end{align*}
This is a first order differential operator 
because both 
the second order coefficient vanishes by (\ref{4.2}) 
and 
$ 2\mathcal{L}(\partial_j) $ is equal to
\begin{align*}
&
\left( \mlt{\partial_k\sigma_{\ell h}}
\mlt{\partial_j\overline{\sigma}_{\ell k}}
+\sigma_{\ell h}
\mlt{\partial_j\partial_k\overline{\sigma}_{\ell k}}
+\mlt{\partial_j\overline{\sigma}_{\ell h}}\eta_\ell
-\overline{\sigma}_{\ell h} \mlt{\partial_j\eta_\ell}
\right)\partial_h\\
& -
\left( \mlt{\partial_h\overline{\sigma}_{\ell h}}
\mlt{\partial_j\sigma_{\ell k}}
+\overline{\sigma}_{\ell h}
\mlt{\partial_h\partial_j\sigma_{\ell k}}
+\mlt{\partial_j\overline{\eta}_{\ell}}\sigma_{\ell k}
-\overline{\eta}_\ell\mlt{\partial_j\sigma_{\ell k}}\right)\partial_k  \\
& + 
\mlt{\partial_j\partial_h\overline{\sigma}_{\ell h}}\eta_\ell
+ \mlt{\partial_j\overline{\sigma}_{\ell h}}
\mlt{\partial_h\eta_\ell}
- \hspace{-1pt}  \mlt{\partial_h\overline{\sigma}_{\ell h}}
\mlt{\partial_j\eta_\ell}
- \hspace{-1pt}  \overline{\sigma}_{\ell h}
\mlt{\partial_j\partial_h\eta_\ell}
\hspace{-1pt} +2i\Im\left(\overline{\eta}_\ell\mlt{\partial_j \eta_\ell}\right).
\end{align*}
Therefore $\mathcal{L}_0(\partial_j)=\nu_{jk}\partial_k + \xi_j $,
where
$
\nu_{jk} :=   
\Re\left( \mlt{ \partial_j\sigma_{\ell k}} \overline{\eta}_\ell
-\sigma_{\ell k} \mlt{ \partial_j \overline{\eta}_\ell } \right)
$ 
and
\begin{eqnarray*}
 2 \xi_j & := &
\mlt{ \partial_j\partial_h\overline{\sigma}_{\ell h} } \eta_\ell
+  \mlt{ \partial_j\overline{\sigma}_{\ell h} } \mlt{ \partial_h \eta_\ell }
-  \mlt{\partial_h\overline{\sigma}_{\ell h} } \mlt{ \partial_j \eta_\ell }
-\overline{\sigma}_{\ell h} \mlt{ \partial_j\partial_h\eta_\ell } 
\\
&&
+ 2 i \Im \left( \overline{\eta}_{\ell} \mlt{ \partial_j \eta_{\ell} } 
\right) .
\end{eqnarray*}
Since
$
\mathcal{L}_0(\Delta)  
=  
\partial_j \mathcal{L}_0(\partial_j) + \mathcal{L}_0(\partial_j)\partial_j + [\partial_j,L_\ell^*][\partial_j,L_\ell] 
$,
\begin{eqnarray*}
 \mathcal{L}_0(\Delta) 
 & = &
  \partial_j \left(  \nu_{jh}\partial_h + \xi_j \right)+  \left(  \nu_{jk}\partial_k + \xi_j \right)\partial_j 
  \\
 &&
 +  \left( \mlt{ \partial_j\overline{\eta}_\ell } -\partial_h \mlt{ \partial_j\overline{\sigma}_{\ell h} }
 \right)
\left(\mlt{ \partial_j\sigma_{\ell k} } \partial_k
+ \mlt{  \partial_j\eta_\ell } \right) .
\end{eqnarray*}
This gives 
\begin{align}
\label{4.4}
\mathcal{L}_0(\Delta)
 = \ &  2\nu_{jk}\partial_j\partial_k + \mlt{ \partial_j\nu_{jk} } \partial_k
          + 2 \xi_j\partial_j +  \mlt{ \partial_j\xi_j }
         - \mlt{ \partial_j\overline{\sigma}_{\ell h} } \mlt{ \partial_j\sigma_{\ell k} } \partial_h\partial_k 
         \\
         \nonumber
 &   - \left( \mlt{ \partial_h\partial_j\overline{\sigma}_{\ell h} } 
\mlt{ \partial_j\sigma_{\ell k} }
 + \mlt{ \partial_j\overline{\sigma}_{\ell h} } \mlt{ \partial_h\partial_j\sigma_{\ell k} } 
 \right) \partial_k 
 +  \mlt{\partial_j\overline{\eta}_\ell } \mlt{ \partial_j\sigma_{\ell k} } \partial_k
 \\
 \nonumber
&   - \mlt{ \partial_j\overline{\sigma}_{\ell h} } \mlt{ \partial_j\eta_\ell } \partial_h
- \mlt{ \partial_h\partial_j\overline{\sigma}_{\ell h} } \mlt{ \partial_j\eta_\ell }
- \mlt{ \partial_j\overline{\sigma}_{\ell h} } \mlt{ \partial_h\partial_j\eta_\ell }
\\
\nonumber
& + \mlt{ \partial_j\overline{\eta}_\ell } \mlt{ \partial_j\eta_\ell } .
\end{align}

By Condition H4.2, combining
$\mathcal{L}_0(C) =\mathcal{L}_0(-\Delta)+\mathcal{L}_0(|x|^2)$ 
with (\ref{4.3}) and (\ref{4.4})
we deduce that  $\mathcal{L}_0(C)$
is a second order differential operator of the form 
$\sum_{|\mu|\le 2} a_{\mu}\partial_\mu$ 
(in multiindex notation $\mu=(\mu_1,\dots,\mu_d), \  
|\mu|=\mu_1+\dots+\mu_d$, $\partial_\mu=\partial_{\mu_1}
\cdots\partial_{\mu_d}$)   with: 
$a_\mu$ bounded for $|\mu|=2$, $|a_\mu|\le K (t)(1+|x|)$ for $|\mu|=1$ and  $|a_\mu|\le K (t)(1+|x|^2)$ for $|\mu|=0$.
The conclusion follows them from applications of 
Lemma \ref{lemma:key-op-ineq} with some long 
but straightforward computations. 
\end{proof}

\begin{lemma}\label{lem:estimate-L0}
Under the Hypothesis \ref{hyp:growth-cond},
Condition H2.3 holds.
\end{lemma}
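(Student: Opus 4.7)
The plan is to verify Condition H2.3 with the core $\mathfrak{D}_1 = C^\infty_c(\mathbb{R}^d,\mathbb{C})$ of $C^2$ established at the start of this subsection. Using the decomposition $G(t) = -iH(t) - \tfrac{1}{2}\sum_{\ell=1}^m L_\ell(t)^* L_\ell(t)$, for $f\in C^\infty_c(\mathbb{R}^d,\mathbb{C})$ I would split the left-hand side of the H2.3 inequality as
\begin{equation*}
2\Re\langle C^2 f, G(t) f\rangle + \sum_{\ell=1}^m \|C L_\ell(t) f\|^2
= 2\Re\langle C^2 f, -iH(t) f\rangle + \sum_{\ell=1}^m \bigl(\|C L_\ell f\|^2 - \Re\langle C^2 f, L_\ell^* L_\ell f\rangle\bigr).
\end{equation*}
The Hamiltonian part is directly controlled by Lemma \ref{lem:comm-H-C-estim}, contributing $K(t)(\|Cf\|^2+\|f\|^2)$.

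For the dissipative part I would apply Lemma \ref{lemma:CCP-part-C}, which bounds it by
\begin{equation*}
\sum_{\ell=1}^m \|[C,L_\ell]f\|^2 + \|Cf\|\cdot \Bigl\|\sum_{\ell=1}^m\bigl(L_\ell^*[C,L_\ell]+[L_\ell^*,C]L_\ell\bigr)f\Bigr\|.
\end{equation*}
The operator inside the second norm is exactly $2\mathcal{L}_0(C)$ (by the standard identity $\mathcal{L}_0(X)=\tfrac{1}{2}\sum_\ell(L_\ell^*[X,L_\ell]+[L_\ell^*,X]L_\ell)$), so Lemma \ref{lem:estimate-L0-step1} bounds this cross term by $2K(t)\|Cf\|^2$.

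What remains is to estimate $\sum_{\ell=1}^m \|[C,L_\ell]f\|^2$. A direct computation via $[-\Delta,a]g = -(\Delta a)g - 2(\nabla a)\cdot\nabla g$ and $[|x|^2,\partial_j] = -2x_j$ yields
\begin{equation*}
[C,L_\ell]f = -2(\nabla\sigma_{\ell j})\cdot\nabla\partial_j f - (\Delta\sigma_{\ell j})\partial_j f - 2x_j\sigma_{\ell j} f - 2(\nabla\eta_\ell)\cdot\nabla f - (\Delta\eta_\ell)f.
\end{equation*}
Under (H4.2.a) all coefficients are bounded by $K(t)$, while under (H4.2.b) the terms containing derivatives of $\sigma_{\ell j}$ vanish and $\sigma_{\ell j}$ itself depends only on $t$. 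In either case, Lemma \ref{lemma:key-op-ineq} delivers $\|\Delta f\|, \|(1+|x|)\nabla f\|, \|(1+|x|^2)f\| \le K\|Cf\|$, whence $\|[C,L_\ell]f\|\le K(t)\|Cf\|$.

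Summing all contributions and using $C\ge d\,I$ to absorb the stray $\|f\|^2$ term into $\|Cf\|^2$, I obtain a non-decreasing function $\alpha$ such that the quadratic form inequality of H2.3 holds on $\mathfrak{D}_1$. The main obstacle is the case (H4.2.a), where the second-order mixed term $(\nabla\sigma_{\ell j})\cdot\nabla\partial_j f$ in $[C,L_\ell]$ forces us to bound $\|\Delta f\|$ by $\|Cf\|$; this is precisely the point at which the choice $C=-\Delta+|x|^2$ (which combines control of kinetic energy with confinement) proves essential, since any weaker reference operator would fail to dominate such cross terms.
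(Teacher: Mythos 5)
Your proposal is correct and follows essentially the same route as the paper: verify H2.3 on $\mathfrak{D}_1=C^\infty_c(\mathbb{R}^d,\mathbb{C})$ by splitting off the Hamiltonian part (Lemma \ref{lem:comm-H-C-estim}), bounding the dissipative part via Lemma \ref{lemma:CCP-part-C} with the cross term identified as $2\mathcal{L}_0(C)$ and controlled by Lemma \ref{lem:estimate-L0-step1}, and estimating $\|[C,L_\ell]f\|$ by an explicit commutator computation combined with Lemma \ref{lemma:key-op-ineq}. Your expression for $[C,L_\ell]$ is in fact the correct one (the paper's displayed formula has a slip, showing $+2\sigma_{\ell j}\partial_j$ in place of the $-2(\partial_j\eta_\ell)\partial_j$ term), and under either (H4.2.a) or (H4.2.b) it yields the same bound $\|[C,L_\ell]f\|\le K(t)\left(\|Cf\|+\|f\|\right)$ used in the paper.
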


\begin{proof} 
Since 
$
\left[ C, L_\ell\right] 
= 
   \left( (-\partial_j + x_j)\left[\partial_j + x_j,L_\ell \right]   + \left[-\partial_j + x_j,L_\ell\right] (-\partial_j + x_j) \right) 
$,
\begin{align*}
 \left[ C, L_\ell\right]
 = \ &
 \left( -\partial_j + x_j \right)
\left( \mlt{  \partial_j \sigma_{\ell k} } \partial_k 
+ \mlt{ \partial_j \eta_\ell } 
- \delta_{jk}\sigma_{\ell k}\right) 
 \\
 & +  \left(- \mlt{  \partial_j \sigma_{\ell k} } \partial_k 
- \mlt{ \partial_j \eta_\ell } - \delta_{jk}\sigma_{\ell k}\right) (-\partial_j + x_j),
\end{align*}
where $ \delta_{jk}$ is the Kronecker delta.
The term with two partial derivatives writes as
\[
- \partial_j \mlt{  \partial_j \sigma_{\ell k} } \partial_k 
+ \mlt{  \partial_j \sigma_{\ell k} } \partial_k\partial_j
=  2 \mlt{  \partial_j \sigma_{\ell k} } \partial_k\partial_j
- \mlt{  \partial_j^2\sigma_{\ell k} } \partial_k ,
\]
terms with a single partial derivative are 
\begin{eqnarray*} 
& & -\partial_j  \mlt{  \partial_j \eta_\ell }
+\delta_{jk}\partial_j\sigma_{\ell k} 
+ x_j \mlt{  \partial_j \sigma_{\ell k}} \partial_k 
+\mlt{  \partial_j \eta_\ell } \partial_j 
+\delta_{jk}\sigma_{\ell k}\partial_j 
- \mlt{ \partial_j \sigma_{\ell k}} \partial_k x_j  
\\
& = &  - \mlt{  \Delta\eta_\ell } 
+ 2 \delta_{jk} \sigma_{\ell k}\partial_j ,
\end{eqnarray*}
and terms with no partial derivatives sum up  $-2\delta_{jk}x_j\sigma_{\ell k} $.
Therefore
\[
\left[ C, L_\ell\right] 
 =  2  \mlt{ \partial_j \sigma_{\ell k} } \partial_j \partial_k 
 -  \mlt{ \partial_j ^2\sigma_{\ell k} } \partial_k 
 + 2 \sigma_{\ell j}\partial_j
 -2x_j\sigma_{\ell j}           
 - \mlt{ \Delta\eta_\ell } .
\]
We now use Condition H4.2 and Lemma \ref{lemma:key-op-ineq},
together with straightforward inequalities and estimates,
 to obtain 
$\left\|[C,L_\ell(t)]f\right\|^2
\leq
K(t) \left( \left\| Cf\right\|^2 + \left\| f\right\|^2 \right)
$.
Thus,
the claimed inequality in Condition H2.3 follows from 
Lemmata \ref{lem:comm-H-C-estim}, \ref{lemma:CCP-part-C}    
and \ref{lem:estimate-L0-step1}.
\end{proof}

\begin{proof}[Proof of Theorem \ref{teo:Smooth}]
Hypothesis 1 and Conditions H2.1, H2.2 of Hypothesis 2 hold 
by Lemma \ref{lem:G-satisfies-H.2.1}. 
In Lemma \ref{lem:estimate-L0} we verify  Condition H2.3.  
According the definition of $G \left(t \right)$
we have $
 2\Re\left\langle f, G \left( t \right) f\right\rangle 
+ \sum_{\ell=1}^{\infty }\left\Vert L_\ell \left( t \right) f\right\Vert ^{2} = 0
$
for all $f \in C^\infty_c(\mathbb{R}^d;\mathbb{C})$.
Therefore, 
Condition H3.1
(stronger form of H2.4) holds,
because  $C^\infty_c(\mathbb{R}^d;\mathbb{C})$ is a 
core for $C$ and the operators $G(t),L_\ell(t)$ are relatively 
bounded with respect to $C$ with bound uniform for $t$ in bounded intervals $[0,T]$.
Hence,
applying Theorems \ref{teorema1},  \ref{teorema2} and \ref{teorema7}
we get the assertions of the theorem. 
\end{proof}

\subsection{Proof of Theorem \ref{th:Ehrenfest}}
\label{subsec:teoEhrenfest}

\begin{proof}[Proof of Theorem \ref{th:Ehrenfest}]

Consider the stopping time 
$$
\tau _{m} = \inf{ \left\{  t \geq0: \left\Vert  X_{t}\left( \xi \right)  \right\Vert > m \right\} } \wedge T ,
$$
where $T > 0$ and $m \in \mathbb{N}$.
For any $n \in \mathbb{N}$,
we set  $ A_n =  R_n A R_n $,
with
$ R_n = n \left( n + C \right)^{-1}$.
From
$A \in  \mathfrak{L}\left( \left(  \mathcal{D}\left( C \right),  \left\Vert  \cdot \right\Vert_C\right),  \mathfrak{h} \right)$
it follows that $A_n \in \mathfrak{L}\left( \mathfrak{h} \right)$,
and so using  the complex It\^o formula we obtain
\begin{equation}
 \label{eq:5}
  \left\langle   X_{t \wedge \tau _{m}} \left( \xi \right) ,  A_n X_{t \wedge \tau _{m}} \left( \xi \right) \right\rangle
 =
 \left\langle  \xi  , A_n \xi  \right\rangle 
 +  \int_0^{t \wedge \tau _{m}} 
 \mathcal{L} \left( s, A_n,  X_{s} \left( \xi \right) \right)
 ds
 + M_{t \wedge \tau _{m}} ,
\end{equation}
where 
$t \in \left[ 0, T \right]$,
$$
M_t
=
\sum_{\ell = 1}^{\infty}
\int_0^{t} \left( 
\left\langle   X_{s} \left( \xi \right) , A_n L_{\ell}  \left( s \right) X_{s} \left( \xi \right) \right\rangle
+
\left\langle  L_{\ell}  \left( s \right)  X_{s} \left( \xi \right) , A_n X_{s} \left( \xi \right) \right\rangle
 \right) dW_s^{\ell} ,
$$
and for all $x \in \mathcal{D}\left( C \right)$,
$$
 \mathcal{L} \left( s, A_n, x  \right)
=
\left\langle  x , A_n G  \left( s \right) x \right\rangle
 +
 \left\langle  G  \left( s \right) x , A_n  x \right\rangle
 +
 \sum_{\ell = 1}^{\infty} \left\langle  L_{\ell}   \left( s \right) x , A_n   L_{\ell}   \left( s \right)  x  \right\rangle .
$$

Combining Lemma \ref{lema5} with the Cauchy-Schwarz inequality we get
\begin{align*}
 &
 \mathbb{E} \sum_{\ell = 1}^{\infty}
\int_0^{t  \wedge \tau _{m} } \left|
\left\langle   X_{s} \left( \xi \right) , A_n L_{\ell}  \left( s \right) X_{s} \left( \xi \right) \right\rangle
+
\left\langle  L_{\ell}  \left( s \right)  X_{s} \left( \xi \right) , A_n X_{s} \left( \xi \right) \right\rangle
 \right|^2 ds
 \\
 & \leq 8 m^3 \left\| A_n \right\|^2 \int_0^{T}   \mathbb{E} \left\|  G  \left( s \right)  X_{s} \left( \xi \right)  \right\|  ds ,
\end{align*}
which together with Condition H2.1, yields 
$
\mathbb{E}  M_{t \wedge \tau _{m}}  = 0
$.
Then, 
(\ref{eq:5}) leads to
\begin{equation}
\label{eq:6}
 \mathbb{E}  \left\langle   X_{t \wedge \tau _{m}} \left( \xi \right) ,  A_n X_{t \wedge \tau _{m}} \left( \xi \right) \right\rangle
 =
\mathbb{E}   \left\langle  \xi  , A_n \xi  \right\rangle 
 + \mathbb{E}  \int_0^{t \wedge \tau _{m}} 
 \mathcal{L} \left( s, A_n,  X_{s} \left( \xi \right) \right)
 ds .
\end{equation}
Since 
$
\mathbb{E} \sup_{s \in \left[0, T \right]} \left\|  X_{s} \left( \xi \right)  \right\| ^2 < + \infty
$,
applying the dominated convergence theorem gives
\[
\lim_{m \rightarrow \infty}
\mathbb{E}  \left\langle   X_{t \wedge \tau _{m}} \left( \xi \right) ,  A_n X_{t \wedge \tau _{m}} \left( \xi \right) \right\rangle
=
\mathbb{E}  \left\langle   X_{t} \left( \xi \right) ,  A_n X_{t} \left( \xi \right) \right\rangle .
\]
Letting $m \rightarrow \infty$ in (\ref{eq:6})  we deduce,
using the dominated convergence theorem, that
$$
\mathbb{E}  \left\langle   X_{t} \left( \xi \right) ,  A_n X_{t} \left( \xi \right) \right\rangle
 = 
\mathbb{E}   \left\langle  \xi  , A_n \xi  \right\rangle 
 + \mathbb{E}  \int_0^{t} 
 \mathcal{L} \left( s, A_n,  X_{s} \left( \xi \right) \right)
 ds,
$$
and so from Fubini's theorem we obtain
\begin{equation}
 \label{eq:8}
 \mathbb{E}  \left\langle   X_{t} \left( \xi \right) ,  A_n X_{t} \left( \xi \right) \right\rangle
 =
 \mathbb{E}   \left\langle  \xi  , A_n \xi  \right\rangle 
 + \int_0^{t}  \mathbb{E}   \mathcal{L} \left( s, A_n,  X_{s} \left( \xi \right) \right) ds .
\end{equation}

Let  $x \in \mathcal{D}\left( C \right)$.
By Conditions H2.2 and H5.1,
analysis similar to that in the proof of Lemma \ref{lema2i} shows that
$L_{\ell}   \left( s \right) x \in  \mathcal{D}\left( C^{1/2} \right)$ and 
\begin{equation}
\label{eq:11}
  \sum_{\ell = 1}^{\infty}
  \left\|  C^{1/2} L_{\ell}   \left( s \right) x  \right\|^2 
 \leq K \left( s \right)  \left\Vert x \right\Vert_{C}^{2} .
\end{equation}
Since $R_n C  \subset C R_n$,
$C^{1/2}$ commutes with $R_n$,
and so Condition H5.2 leads to
\begin{eqnarray*}
&& \left\|   B_j R_n L_{\ell}   \left( s \right) x - B_j L_{\ell}   \left( s \right) x  \right\|^2
\\
&& \leq
K \left( 
\left\|   R_n C^{1/2} L_{\ell}   \left( s \right) x - C^{1/2} L_{\ell}   \left( s \right) x  \right\|^2
+
\left\|   R_n  L_{\ell}   \left( s \right) x -  L_{\ell}   \left( s \right) x  \right\|^2 
\right) ,
\end{eqnarray*}
with $j=1,2$.
This  implies  
\begin{equation}
 \label{eq:7}
 B_j R_n L_{\ell}   \left( s \right) x 
\longrightarrow_{n \rightarrow \infty}
 B_j L_{\ell}   \left( s \right) x .
\end{equation}
Moreover, 
using  $R_n C^{1/2}  \subset C^{1/2} R_n$ we deduce that
\begin{eqnarray*}
   \left\|   B_j R_n L_{\ell}   \left( s \right) x  \right\|^2
& \leq &
 K 
 \left(  \left\|  R_n C^{1/2} L_{\ell}   \left( s \right) x  \right\|^2 +  \left\|   R_n L_{\ell}   \left( s \right) x  \right\|^2 \right)
 \\
 & \leq &
 K  
 \left(  \left\|  C^{1/2} L_{\ell}   \left( s \right) x  \right\|^2 +  \left\|  L_{\ell}   \left( s \right) x  \right\|^2 \right) .
\end{eqnarray*}

Lemma \ref{lema5} and Condition H2.1 lead to
$
  \sum_{\ell = 1}^{\infty}
  \left\|  L_{\ell}   \left( s \right) x  \right\|^2 
 \leq K \left( s \right)  \left\Vert x \right\Vert_{C}^{2} 
$.
Therefore,
applying the dominated convergence theorem, together with (\ref{eq:11}) and (\ref{eq:7}), yields
\begin{eqnarray*}
&&
 \int_0^{t}  \mathbb{E}  
 \sum_{\ell = 1}^{\infty} \left\langle B_1 R_n L_{\ell}   \left( s \right) X_{s} \left( \xi \right) , 
 B_2 R_n   L_{\ell}   \left( s \right)  X_{s} \left( \xi \right)  \right\rangle
 ds
 \\
 && \hspace{3cm} \longrightarrow_{n \rightarrow \infty} 
 \int_0^{t}  \mathbb{E}  
 \sum_{\ell = 1}^{\infty} \left\langle B_1  L_{\ell}   \left( s \right) X_{s} \left( \xi \right) , 
 B_2    L_{\ell}   \left( s \right)  X_{s} \left( \xi \right)  \right\rangle
 ds .
 \end{eqnarray*}
Hence
\begin{eqnarray}
\label{eq:9}
&&
 \int_0^{t}  \mathbb{E}  
 \sum_{\ell = 1}^{\infty} \left\langle B_1 R_n L_{\ell}   \left( s \right) X_{s} \left( \xi \right) , 
 B_2 R_n   L_{\ell}   \left( s \right)  X_{s} \left( \xi \right)  \right\rangle
 ds
\\
\nonumber
 &&  \hspace{3cm} \longrightarrow_{n \rightarrow \infty} 
  \int_0^{t}  \sum_{\ell = 1}^{\infty}  \mathbb{E}  
  \left\langle B_1  L_{\ell}   \left( s \right) X_{s} \left( \xi \right) , 
 B_2    L_{\ell}   \left( s \right)  X_{s} \left( \xi \right)  \right\rangle
 ds .
\end{eqnarray}
According to $R_n ^* = R_n$,
for any $x \in \mathcal{D}\left( C \right)$  we have
\begin{eqnarray*}
  \mathcal{L} \left( s, A_n, x  \right)
 & = &
\left\langle R_n A^* R_n x ,  G  \left( s \right) x \right\rangle
 +
 \left\langle  G  \left( s \right) x , R_n A R_n x \right\rangle
 \\
 &&
+  \sum_{\ell = 1}^{\infty} \left\langle B_1 R_n L_{\ell}   \left( s \right) x , B_2 R_n   L_{\ell}   \left( s \right)  x  \right\rangle .
\end{eqnarray*}
By (\ref{eq:9}) and Condition H5.3,
letting $n \rightarrow \infty$ in  (\ref{eq:8}) we get (\ref{eq:10}).
\end{proof}

\subsection{Proof of Theorem \ref{th:Ehrenfest_Model}}
\label{subsec:teoEhrenfest_Model}

\begin{proof}[Proof of Theorem \ref{th:Ehrenfest_Model}]
Let $C = - \Delta+ \left| x \right| ^{2}$.
According to Theorem \ref{teo:Smooth},
(\ref{eq:SSE}) has a unique strong $C$-solution with initial datum in
 $L_{C}^{2}\left(\mathbb{P};\mathfrak{h}\right)$.
 Moreover,
 in the proof of Theorem \ref{teo:Smooth}
we verify that $C$ satisfies Hypothesis \ref{hyp:CF-inequality}.

Suppose that $f$ belongs to $C^\infty_c(\mathbb{R}^d, \mathbb{C})$,
which is a core for $C$.
Then, 
for any $\ell = 1, \ldots, m$ and $t \geq 0$ we have
$$
 \left\|   C^{1/2} L_{\ell}   \left( t \right) f  \right\|^2
 =
 \sum_{j=1}^{d} 
 \left\| i \partial_j \left(  L_{\ell}   \left( t \right) f \right) \right\|^2
 +
 \left\|  \left| x \right| L_{\ell}   \left( t \right) f  \right\|^2 .
$$
Since 
\begin{equation}
\label{eq:12}
 \sum_{j,k=1}^d  \left\|  \partial_j \partial_k  f  \right\|^2
=
\left\| - \Delta  f  \right\|^2 ,
\end{equation}
combining  Hypothesis \ref{hyp:growth-cond} with Lemma \ref{lemma:key-op-ineq} 
yields
Condition H5.1.

Consider  $f \in C^\infty_c(\mathbb{R}^d, \mathbb{C})$.
Then
$
 \left\| \mlt{c_j}   f  \right\|^2
 \leq 
 K \left( \left\|  f  \right\|^2 + \left\langle  f ,  \left|  x  \right|^2 f \right\rangle
 \right)
 $
 and
 $$
 \left\| \mlt{b_j} \partial_k   f  \right\|^2
 \leq 
 K 
 \left\langle  f ,  - \partial^{2}_{k}  f \right\rangle .
$$
In addition,
$
\left\|  \partial_k \mlt{a_j}   f  \right\|^2
\leq
2 \left\|  \mlt{ \partial_k a_j}   f  \right\|^2 + 2  \left\|  \mlt{ a_j}    \partial_k f  \right\|^2
$.
Therefore 
$$
\left\| B_j   f  \right\|^2 
\leq 
K \left( \left\|  f  \right\|^2 + \left\langle  f ,  C  f \right\rangle \right)
=
K  \left\|  f  \right\|^{2}_{C^{1/2}} ,
$$
and so $B_j$ satisfies  Condition H5.2, 
because $C^\infty_c(\mathbb{R}^d, \mathbb{C})$ is a core for $C^{1/2}$.

We now take 
$B_1 =  \mlt{b_1} \partial_\ell$ and $B_2 =  \partial_k \mlt{a_2}$,
and so $ A = -  \partial_\ell  \mlt{\bar{b_1}}  \partial_k \mlt{a_2} $.
For any  $f \in C^\infty_c(\mathbb{R}^d, \mathbb{C})$,
\begin{eqnarray*}
 A f 
& = &
- \left(  \partial_\ell  \bar{b_1} \right) \left( \partial_k a_2 \right) f
-  \bar{b_1} \left(  \partial_\ell   \partial_k a_2 \right) f
-  \bar{b_1} \left(  \partial_k a_2 \right) \partial_\ell f
- \left(  \partial_\ell  \bar{b_1} \right)  a_2 \partial_k f
\\
&&
-  \bar{b_1} \left(  \partial_\ell  a_2 \right) \partial_k f
-  \bar{b_1} a_2 \partial_\ell  \partial_k  f .
\end{eqnarray*}
Using Lemma \ref{lemma:key-op-ineq}, together with (\ref{eq:12}),  yields 
$
\left\| A   f  \right\|^2 \leq K \left\|   f  \right\|^2_{C} 
$,
and hence for all $f \in  \mathcal{D}\left( C \right)$,
$
\left\| A   f  \right\|^2 \leq K \left\|   f  \right\|^2_{C}
$
since $C^\infty_c(\mathbb{R}^d, \mathbb{C})$ is a core for $C$.
Similarly, 
we obtain 
$
\left\| A^{*}   f  \right\|^2 \leq K \left\|   f  \right\|^2_{C}
$
for all  $f \in  \mathcal{D}\left( C \right)$.
Thus Condition H5.3 holds in this case.
In the same manner we can check Condition H5.3
for the other possible choices of $B_1$ and $B_2$.
Finally,
applying Theorems \ref{th:Ehrenfest}  and  \ref{cor:Ehrenfest} we get (\ref{eq:10}) and (\ref{eq:14}),
respectively.
\end{proof}

\subsection{Proof of Corollary \ref{cor:heat}}
\label{subsec:cor_heat}

\begin{proof}[Proof of Corollary \ref{cor:heat}]
Set $P =  - i d/dx $.
Suppose that either $A = P^2$ or $A = \mlt{V}$.
From $L_1^{*} = - L_1$ it follows that
for all $f \in C^\infty_c(\mathbb{R}, \mathbb{C})$,
$$
\left\langle A^* f , G f \right\rangle +  \left\langle G f, A f \right\rangle 
+ \left\langle \sqrt{A} L_{1} f , \sqrt{A} L_{1} f \right\rangle
=
\left\langle f , 
\left( -i \left[ A, H \right] - \frac{1}{2} \left[ \left[ L_1 , A \right] , L_1\right] \right) f 
\right\rangle .
$$
Using $ \left[\,\mlt{V} , P \,\right] = i \mlt{V^{\prime}}$ yields
$$
-i \left[ A, H \right] - \frac{1}{2} \left[ \left[ L_1 , A \right] , L_1\right]
=
\left\{
\begin{array}{ll}
\frac{1}{2M} \left( \mlt{V^\prime} P + P \mlt{V^\prime} \right),
&
\hbox{\rm if } \  A =  \mlt{V}
\\
 -  \left( \mlt{V^\prime} P + P \mlt{V^\prime} \right) + \eta^2,
 &
\hbox{\rm if } \ A = P^2
\end{array}
\right. .
$$  
Since  $A$, $G$, $\sqrt{A} L_{1}$, $\mlt{V^\prime} P$ and $P \mlt{V^\prime}$ 
are relatively bounded with respect to $C= -  d^2/dx^2  + \mlt{x^2}$,
for all $f \in   \mathcal{D}\left( C \right)$ we have 
\begin{eqnarray}
\label{eq:13}
 && \left\langle A^* f , G f \right\rangle +  \left\langle G f, A f \right\rangle 
 + \left\langle \sqrt{A} L_{1} f , \sqrt{A} L_{1} f \right\rangle
 \\
 \nonumber
 && \hspace{2cm} =
 \left\{
\begin{array}{ll}
\frac{1}{2M} \left\langle f,  \ \left( \mlt{V^\prime} P + P \mlt{V^\prime} \right) f \right\rangle,
&
if \  A =  \mlt{V}
\\
 -  \left\langle f,    \left( \mlt{V^\prime} P + P \mlt{V^\prime} \right)  f \right\rangle 
+ \left\langle f,  \eta^2 f \right\rangle, 
&
 if \ A = P^2
\end{array}
\right. ,
\end{eqnarray}
because  $C^\infty_c(\mathbb{R}, \mathbb{C})$ is a core for $C$.
Combining (\ref{eq:13}) with Theorem \ref{th:Ehrenfest_Model} we obtain
\begin{equation}
 \label{eq:15}
 \mathbb{E} \left\langle X_{t} ,\mlt{V} X_{t}  \right\rangle
 =  
\mathbb{E} \left\langle  \xi  , \mlt{V}  \xi  \right\rangle
+
  \frac{ 1}{2M}  \int_0^t 
   \mathbb{E} \left\langle X_{s} , \left( \mlt{V^\prime} P + P \mlt{V^\prime} \right)  X_{s}  \right\rangle
ds
\end{equation}
and
\begin{align}
\nonumber
  \mathbb{E} \left\langle X_{t} , \frac{1}{2M} P^2 X_{t}  \right\rangle
 = &  \
\mathbb{E} \left\langle  \xi  ,  \frac{1}{2M}  P^2  \xi  \right\rangle
-
  \frac{1}{2M}   \int_0^t 
   \mathbb{E} \left\langle X_{s} ,   \left( \mlt{V^\prime} P + P \mlt{V^\prime} \right) X_{s}  \right\rangle
ds
\\
\label{eq:16}
&
+ \frac{\eta^2}{2M}  t,
\end{align}
where we abbreviate $X_{t} \left( \xi \right)$ to $X_t$.
Adding (\ref{eq:15}) and (\ref{eq:16}) gives (\ref{eq:17}).
\end{proof}

\appendix
\section{}

\subsection{Proof of Theorem \ref{teorema2}}
\label{subsec:teorema2}

\begin{proof}[Proof of Theorem \ref{teorema2}]
Define 
$
\tau _{n} = \inf{ \left\{  t \geq0: \left\Vert  X_{t}\left( \xi \right)  \right\Vert > n \right\} } \wedge T
$,
where $T$ is a given positive real number and  $n \in \mathbb{N}$. 
Combining Condition H3.1 with It\^{o}'s formula we obtain
\begin{equation}
\label{6.1}
\left\Vert  X_{t \wedge \tau _{n}}\left( \xi \right)  \right\Vert ^{2}  
= \left\Vert  \xi \right\Vert ^{2}  
+ \sum_{\ell=1}^{\infty} \int_{0}^{t \wedge \tau _{n}} 
2 \Re \left\langle  X_{s}\left( \xi \right) , L_{\ell}  \left( s \right)  X_{s}\left( \xi \right)\right\rangle dW_{s}^{\ell}.
\end{equation}
Conditions H2.1 and H3.1 yield
\[
\sum_{\ell=1}^{\infty}  \mathbb{E}  \int_{0}^{ \tau _{n}} \left( \Re \left\langle  X_{s}\left( \xi \right) , L_{\ell} \left( s \right)   X_{s}\left( \xi \right)\right\rangle \right)^{2} ds
\leq K_{n,T} \left(1+   \mathbb{E}  \left\Vert   \xi  \right\Vert_{C}^{2} \right),
\]
where $K_{n,T}$ is a constant depending of $n$ and $T$,
hence (\ref{6.1}) shows that $  \left\Vert  X^{\tau _{n}}\left( \xi \right)  \right\Vert ^{2} $ is a martingale. 
We now use  Fatou's lemma to deduce the supermartingale property of $ \left(  \left\Vert  X_{t}\left( \xi \right)  \right\Vert ^{2}\right) _{t \in \left[0, T\right]} $. 

Since 
$
 \mathbb{E} \left( \sup_{s \in \left[0, T\right]}  \left\Vert  X_s\left( \xi \right)  \right\Vert ^{2} \right)
 < \infty
$
(see, e.g., Th. 4.2.5 of  \cite{Prevot}),
applying the dominated convergence theorem gives
\[
\mathbb{E}  \left\Vert  X_t \left( \xi \right)  \right\Vert ^{2} 
=
\lim_{n \rightarrow \infty} \mathbb{E}  \left\Vert  X_{t\wedge \tau _{n}} \left( \xi \right)  \right\Vert ^{2} 
=
\lim_{n \rightarrow \infty} \mathbb{E}  \left\Vert   \xi   \right\Vert ^{2} .
\]
Therefore the supermartingale
$ \left(  \left\Vert  X_{t}\left( \xi \right)  \right\Vert ^{2}\right) _{t \in \left[0, T\right]} $
is in fact a martingale. 
\end{proof}

\subsection{Proof of Lemma \ref{lema30}}
\label{subsec:lemma30}

\begin{proof}[Proof of Lemma \ref{lema30}]
 
Let $\left( \chi _{j}\right)_{j \in \mathbb{N}}$ be an orthonormal basis of 
$$L^{2}\left(\left( \Omega,\mathfrak{G}_{T}^{\xi ,W},\mathbb{P} \right), \mathfrak{h}\right). $$ 
Combining the Cauchy-Schwarz inequality with  (\ref{3.36})  we obtain the equicontinuity of the family of complex functions $\left( \mathbb{E} \left\langle \chi _{j} , X^{n}  \right\rangle  \right)_{n \in \mathbb{N}}$, with $j \in \mathbb{N}$. Using Lemma \ref{lema1}, the Arzel\`a-Ascoli theorem and diagonalization arguments we deduce that   can extract from any subsequence of  
$\left( X^{n} \right)_{n \in \mathbb{N}}$ a subsequence 
$\left( X^{n_{k}}    \right)_{k \in \mathbb{N}}$ such that $ \mathbb{E} \left\langle \chi _{j} , X^{n_{k}}  \right\rangle$ is uniformly convergent in $\left[  0,T\right]$ for any $j \in \mathbb{N}$. Lemma \ref{lema1} now shows that 
$ X_{t}^{n_{k}}$ is weakly convergent in $L^{2}\left(\left( \Omega,\mathfrak{G}_{T}^{\xi ,W},\mathbb{P} \right), \mathfrak{h}\right)$ for any $t \in \left[  0,T\right]$. 
Since $X^{n_{k}}_{t} $  is 
$\mathfrak{G}_{t}^{\xi ,W}$-measurable, for any  $t \in \left[  0,T\right]$ there exists a  $\mathfrak{G}_{t}^{\xi ,W}$-measurable random variable $\psi_{t}$ satisfying 
\begin{equation}
\label{3.38}
X_{t}^{n_{k}}  \longrightarrow_{k\rightarrow\infty} \psi_{t}  \qquad weakly\ in\ L^{2}\left(
\left( \Omega,\mathfrak{G}_{t}^{\xi ,W},\mathbb{P} \right), \mathfrak{h}\right).
\end{equation}

Assume that $\left( e_{j} \right)_{j \in \mathbb{N}}$ is an 
orthonormal basis of $\mathfrak{h}$. According to (\ref{3.38}) 
we have 
\[
\left\langle e_{j}, X_{t}^{n_{k}} \right\rangle  \longrightarrow_{k\rightarrow\infty} \left\langle e_{j}, \psi_{t} \right\rangle  \quad weakly\ in\ L^{2}\left(
\left( \Omega,\mathfrak{G}_{t}^{\xi ,W},\mathbb{P} \right), \mathbb{C}\right) .
\]
Thus,
from (\ref{3.36}) it follows that
\[
\mathbb{E} \left\vert \left\langle e_{j}, \psi_{t} -  \psi_{s} \right\rangle  \right\vert ^{2}
\leq 
 \liminf_{k \rightarrow \infty} \mathbb{E} \left\vert \left\langle e_{j}, X_{t}^{n_{k}} - X_{t}^{n_{k}} \right\rangle  \right\vert ^{2}\\
 \leq   K_{T, \xi} \left( t - s \right).
\]
It follows that $ \left\langle e_{j}, \psi \right\rangle $ has a  $\left(\mathfrak{G}_{t+}^{\xi,W}\right)_{t\in\left[  0,T\right]}$-predictable version 
$\widetilde{  \left\langle e_{j}, \psi \right\rangle} $ (see, e.g., Proposition 3.6 of  \cite{DaPrato}). We define $\mathfrak{a}$ to be the set of all $\left( t, \omega \right)$ belonging to $\left[ 0, T\right] \times \Omega$ such that $\sum_{j=1}^{n} \widetilde{  \left\langle e_{j}, \psi \right\rangle} _{t} \left( \omega \right) e_{j}$ converge as $n$ goes to $\infty$. The proof is completed by choosing 
$
Z_{t}   \left(  \omega\right) 
=
\sum_{j=1}^{\infty} \widetilde{  \left\langle e_{j}, \psi \right\rangle} _{t} \left( \omega \right) e_{j}
$ 
if $\left(  t,\omega\right)  \in \mathfrak{a}$,
and 
$Z_{t}  \left(  \omega\right) = 0 $
provided that $\left(  t,\omega\right)  \notin \mathfrak{a}$.
Thus $Z$ becomes a version of $\psi$.
\end{proof}

\section*{Acknowledgements}
The authors wish to thank an anonymous referee for constructive comments, in particular 
those that led to Lemma 2.4.


\begin{thebibliography}{51}
\providecommand{\natexlab}[1]{#1}
\providecommand{\url}[1]{\texttt{#1}}
\providecommand{\urlprefix}{URL }
\expandafter\ifx\csname urlstyle\endcsname\relax
  \providecommand{\doi}[1]{doi:\discretionary{}{}{}#1}\else
  \providecommand{\doi}{doi:\discretionary{}{}{}\begingroup
  \urlstyle{rm}\Url}\fi
\providecommand{\eprint}[2][]{\url{#2}}

\bibitem[{Alicki(1982)}]{Alicki}
R.~Alicki.
\newblock Quantum dynamical semigroups and dissipative collisions of
  heavy-ions.
\newblock \emph{Z. Phys. A} \textbf{307}, 279--285 (1982).

\bibitem[{Barchielli and Gregoratti(2009)}]{Barchielli}
A.~Barchielli and M.~Gregoratti.
\newblock \emph{Quantum trajectories and measurements in continuous time: the
  diffusive case}, volume 782 of \emph{Lecture Notes in Physics}.
\newblock Springer, Berlin (2009).

\bibitem[{Barchielli and Holevo(1995)}]{Barchielli1}
A.~Barchielli and A.S. Holevo.
\newblock Constructing quantum measurement processes via classical stochastic
  calculus.
\newblock \emph{Stochastic Process. Appl.} \textbf{58}, 293--317 (1995).

\bibitem[{Bassi and D{\"u}rr(2008)}]{Bassi2009}
A.~Bassi and D.~D{\"u}rr.
\newblock On the long-time behavior of {H}ilbert space diffusion.
\newblock \emph{Europhys. Lett.} \textbf{84}, 10005 (2008).

\bibitem[{Bassi et~al.(2010)Bassi, D{\"u}rr and Kolb}]{BassiDurrKolb2010}
A.~Bassi, D.~D{\"u}rr and M.~Kolb.
\newblock On the long time behavior of free stochastic {S}chr\"odinger
  evolutions.
\newblock \emph{Rev. Math. Phys.} \textbf{22}, 55--89 (2010).

\bibitem[{Belavkin(1989)}]{Belavkin1}
V.~P. Belavkin.
\newblock A new wave equation for a continuous nondemolition measurement.
\newblock \emph{Phys. Lett. A.} \textbf{140}, 355--358 (1989).

\bibitem[{Breuer and Petruccione(2002)}]{Breuer}
H.P. Breuer and F.~Petruccione.
\newblock \emph{The theory of open quantum systems}.
\newblock Oxford University Press (2002).

\bibitem[{Carmichael(2008)}]{Carmichael2008}
H.J. Carmichael.
\newblock \emph{Statistical Methods in Quantum Optics 2: Non-Classical Fields}.
\newblock Springer (2008).

\bibitem[{Chebotarev et~al.(1998)Chebotarev, Garcia and Quezada}]{ChebGarQue98}
A.~M. Chebotarev, J.~Garcia and R.~Quezada.
\newblock \emph{A priori estimates and existence theorems for the {L}indblad
  equation with unbounded time-dependent coefficients}, volume 1035, pages
  44--65.
\newblock Publ. Res. Inst. Math. Sci. Kokyuroku (1998).

\bibitem[{Chebotarev and Fagnola(1998)}]{ChebFagn2}
A.M. Chebotarev and F.~Fagnola.
\newblock Sufficient conditions for conservativity of minimal quantum dynamical
  semigroups.
\newblock \emph{J. Funct. Anal.} \textbf{153}, 382--404 (1998).

\bibitem[{Dellacherie and Meyer(1978)}]{Dellacherie}
C.~Dellacherie and P.-A. Meyer.
\newblock \emph{Probabilities and potential}.
\newblock North-Holland (1978).

\bibitem[{D\"urr et~al.(2011)D\"urr, Hinrichs and Kolb}]{DurrHinKolb}
D.~D\"urr, G.~Hinrichs and M.~Kolb.
\newblock On a stochastic {T}rotter formula with application to spontaneous
  localization models.
\newblock \emph{J. Stat. Phys.} \textbf{143}, 1096--1119 (2011).

\bibitem[{D�Agosta and Ventra(2008)}]{DAgostaDiVentra2008}
R.~D�Agosta and M.~Di Ventra.
\newblock Stochastic time-dependent current-density-functional theory: A
  functional theory of open quantum systems.
\newblock \emph{Phys. Rev. B} \textbf{78}, 165105 (2008).

\bibitem[{Englert and Morigi(2002)}]{EnglertMorigi2002}
B.G. Englert and G.~Morigi.
\newblock Five lectures on dissipative master equations.
\newblock In A.~Buchleitner and K.~Hornberger, editors, \emph{Coherent
  Evolution in Noisy Environments}, volume 611 of \emph{Lecture Notes in
  Phys.}, pages 55--106. Springer (2002).

\bibitem[{Fagnola(1999)}]{Fagnola}
F.~Fagnola.
\newblock Quantum markov semigroups and quantum flows.
\newblock \emph{Proyecciones} \textbf{18}, 1--144 (1999).

\bibitem[{Fagnola and Mart{\'\i}nez(2012)}]{FaLPM}
F.~Fagnola and L.~P. Mart{\'\i}nez.
\newblock Are sufficient conditions for conservativity of quantum semigroups
  necessary?
 \newblock \emph{Math. Notes} \textbf{91}, 851--856 (2012).
  
\bibitem[{Fagnola and Wills(2003)}]{FagnolaWills}
F.~Fagnola and S.J. Wills.
\newblock Solving quantum stochastic differential equations with unbounded
  coefficients.
\newblock \emph{J. Funct. Anal.} \textbf{198}, 279--310 (2003).

\bibitem[{Friesecke and Koppen(2009)}]{Friesecke2009}
G.~Friesecke and M.~Koppen.
\newblock On the {E}hrenfest theorem of quantum mechanics.
\newblock \emph{J. Math. Phys.} \textbf{50}, 08210 (2009).

\bibitem[{Friesecke and Schmidt(2010)}]{Friesecke2010}
G.~Friesecke and B.~Schmidt.
\newblock A sharp version of {E}hrenfest's theorem for general self-adjoint
  operators.
\newblock \emph{Proc. R. Soc. A} \textbf{466}, 2137--2143 (2010).

\bibitem[{Gardiner and Zoller(2004)}]{GardinerZoller2004}
C.W. Gardiner and P.~Zoller.
\newblock \emph{Quantum Noise: A Handbook of Markovian and Non-Markovian
  Quantum Stochastic Methods with Applications to Quantum Optics}.
\newblock Springer (2004).

\bibitem[{Gatarek and Gisin(1991)}]{Gatarek1991}
D.~Gatarek and N.~Gisin.
\newblock Continuous quantum jumps and infinite-dimensional stochastic
  equations.
\newblock \emph{J. Math. Phys.} \textbf{32}, 2152--2157 (1991).

\bibitem[{Gehm et~al.(1998)Gehm, O'Hara, Savard and Thomas}]{Gehm1998}
M.E. Gehm, K.M. O'Hara, T.A. Savard and J.E. Thomas.
\newblock Dynamics of noise-induced heating in atom traps.
\newblock \emph{Phys. Rev. A} \textbf{58}, 3914--3921 (1998).

\bibitem[{Gough and Sobolev(2004)}]{Gough}
J.~Gough and A.~Sobolev.
\newblock Continuous measurement of canonical observables and limit stochastic
  {S}chr{\"o}dinger equations.
\newblock \emph{Phys. Rev. A} \textbf{69}, 032107 (2004).

\bibitem[{Grecksch and Lisei(2011)}]{GreckschSAA}
W.~Grecksch and H.~Lisei.
\newblock Stochastic nonlinear equations of {S}chr{\"o}dinger type.
\newblock \emph{Stochastic Anal. Appl.} \textbf{29}, 631--653 (2011).

\bibitem[{Grotz et~al.(2006)Grotz, Heaney and Strunz}]{Grotz}
T.~Grotz, L.~Heaney and W.T. Strunz.
\newblock Quantum dynamics in fluctuating traps: Master equation, decoherence,
  and heating.
\newblock \emph{Phys. Rev. A} \textbf{74}, 022102 (2006).

\bibitem[{Halliwell and Zoupas(1995)}]{Halliwell}
J.~Halliwell and A.~Zoupas.
\newblock Quantum state diffusion, density matrix diagonalization, and
  decoherent histories: A model.
\newblock \emph{Phys. Rev. D} \textbf{52}, 7294--7307 (1995).

\bibitem[{Haroche and Raimond(2006)}]{HarocheRaimond2006}
S.~Haroche and J.-M. Raimond.
\newblock \emph{Exploring the Quantum: Atoms, Cavities, and Photons}.
\newblock Oxford University Press (2006).

\bibitem[{Holevo(1996)}]{Holevo}
A.S. Holevo.
\newblock On dissipative stochastic equations in a {H}ilbert space.
\newblock \emph{Probab. Theory Related Fields} \textbf{104}, 483--500 (1996).

\bibitem[{Hupin and Lacroix(2010)}]{HupinLacroix2010}
G.~Hupin and D.~Lacroix.
\newblock Quantum monte carlo method applied to non-markovian barrier
  transmission.
\newblock \emph{Phys. Rev. C} \textbf{81}, 014609 (2010).

\bibitem[{Kato(1980)}]{Kato}
T.~Kato.
\newblock \emph{Perturbation theory for linear operators. Corrected printing of
  the second edition.}
\newblock Springer (1980).

\bibitem[{Kolokoltsov(1998)}]{Kolo}
V.~Kolokoltsov.
\newblock Localization and analytic properties of the solutions of the simplest
  quantum filtering equation.
\newblock \emph{Rev. Math. Phys.} \textbf{10}, 801--828 (1998).

\bibitem[{Leibfried et~al.(2003)Leibfried, Blatt, Monroe and
  Wineland}]{Leibfried2006}
D.~Leibfried, R.~Blatt, C.~Monroe and D.~Wineland.
\newblock Quantum dynamics of single trapped ions.
\newblock \emph{Rev. Mod. Phys.} \textbf{75}, 281--324 (2003).

\bibitem[{Mora()}]{MoraAP}
C.M. Mora.
\newblock Regularity of solutions to quantum master equations: a stochastic
  approach.
\newblock \emph{Ann. Probab.}  (2013).
\newblock To appear.

\bibitem[{Mora(2004)}]{MoraMC2004}
C.M. Mora.
\newblock Numerical simulation of stochastic evolution equations associated to
  quantum {M}arkov semigroups.
\newblock \emph{Math. Comp.} \textbf{73}, 1393--1415 (2004).

\bibitem[{Mora(2005)}]{MoraAAP2005}
C.M. Mora.
\newblock Numerical solution of conservative finite-dimensional stochastic
  {S}chr{\"o}dinger equations.
\newblock \emph{Ann. Appl. Probab.} \textbf{15}, 2144--2171 (2005).

\bibitem[{Mora and Rebolledo(2007)}]{MoraReIDAQP}
C.M. Mora and R.~Rebolledo.
\newblock Regularity of solutions to linear stochastic {S}chr{\"o}dinger
  equations.
\newblock \emph{Infinite Dimens. Anal. Quantum Probab. Rel. Topics}
  \textbf{10}, 237--259 (2007).

\bibitem[{Mora and Rebolledo(2008)}]{MoraReAAP}
C.M. Mora and R.~Rebolledo.
\newblock Basic properties of non-linear stochastic {S}chr{\"o}dinger equations
  driven by {B}rownian motions.
\newblock \emph{Ann. Appl. Probab.} \textbf{18}, 591--619 (2008).

\bibitem[{Pazy(1983)}]{Pazy}
A.~Pazy.
\newblock \emph{Semigroups of linear operators and applications to partial
  differential equations}.
\newblock Springer (1983).

\bibitem[{Pellegrini(2008)}]{Pellegrini2008}
C.~Pellegrini.
\newblock Existence, uniqueness and approximation of a stochastic schr�dinger
  equation: the diffusive case.
\newblock \emph{Ann. Probab.} \textbf{36}, 2332--2353 (2008).

\bibitem[{Pellegrini(2010)}]{Pellegrini2010}
C.~Pellegrini.
\newblock Existence, uniqueness and approximation of the jump-type stochastic
  schr�dinger equation for two-level systems.
\newblock \emph{Stochastic Process. Appl} \textbf{120}, 1722--1747 (2010).

\bibitem[{Percival(1998)}]{Percival}
I.C. Percival.
\newblock \emph{Quantum state diffusion}.
\newblock Cambridge University Press (1998).

\bibitem[{Prato and Zabczyk(1992)}]{DaPrato}
G.~Da Prato and J.~Zabczyk.
\newblock \emph{Stochastic equations in infinite dimensions}.
\newblock Cambridge University Press (1992).

\bibitem[{Pr{\'e}v{\^o}t and R{\"o}ckner(2007)}]{Prevot}
C.~Pr{\'e}v{\^o}t and M.~R{\"o}ckner.
\newblock \emph{A concise course on stochastic partial differential equations},
  volume 1905 of \emph{Lecture Notes in Mathematics}.
\newblock Springer, Berlin (2007).

\bibitem[{Reed and Simon(1975)}]{ReedSimonVol2}
M.~Reed and B.~Simon.
\newblock \emph{Methods of modern mathematical physics. Vol II: Fourier
  analysis and self-adjointness}.
\newblock Academic Press (1975).

\bibitem[{Reed and Simon(1980)}]{ReedSimonVol1}
M.~Reed and B.~Simon.
\newblock \emph{Methods of modern mathematical physics. Vol I: Functional
  analysis}.
\newblock Academic Press (1980).

\bibitem[{Rozovskii(1990)}]{Rozovskii}
B.L. Rozovskii.
\newblock \emph{Stochastic evolution systems. Linear theory and applications to
  nonlinear filtering}.
\newblock Kluwer (1990).

\bibitem[{Salmilehto et~al.(2012)Salmilehto, Solinas and
  M{\"o}tt{\"o}nen}]{Salmilehto2012}
J.~Salmilehto, P.~Solinas and M.~M{\"o}tt{\"o}nen.
\newblock Conservation law of operator current in open quantum systems.
\newblock \emph{Phys. Rev. A} \textbf{85}, 032110 (2012).

\bibitem[{Schneider and Milburn(1999)}]{Schneider}
S.~Schneider and G.J. Milburn.
\newblock Decoherence and fidelity in ion traps with fluctuating trap
  parameters.
\newblock \emph{Phys. Rev. A} \textbf{59}, 3766 (1999).

\bibitem[{Singh and Rost(2007)}]{Singh2007}
K.P. Singh and J.M. Rost.
\newblock Femtosecond photoionization of atoms under noise.
\newblock \emph{Phys. Rev. Letters} \textbf{76}, 063403 (2007).

\bibitem[{Wineland et~al.(1998)Wineland, Monroe, Itano, Leibfried, King and
  Meekhof}]{Wineland1998}
D.J. Wineland, C.~Monroe, W.M. Itano, D.~Leibfried, B.E. King and D.M. Meekhof.
\newblock Experimental issues in coherent quantum-state manipulation of trapped
  atomic ions.
\newblock \emph{J. Res. Natl. Inst. Stand. Technol.} \textbf{103}, 259--328
  (1998).

\bibitem[{Wiseman and Milburn(2009)}]{WisemanMilburn2009}
H.M. Wiseman and G.J. Milburn.
\newblock \emph{Quantum Measurement and Control}.
\newblock Cambridge University Press, (2009).

\end{thebibliography}

\providecommand{\noopsort}[1]{}\providecommand{\singleletter}[1]{#1}%

\end{document}